\documentclass[11pt]{article}
\usepackage[margin=1in]{geometry}
\usepackage{amsmath,amssymb,amsthm,booktabs,mathtools}
\usepackage{array}
\usepackage{graphicx,float}
\usepackage{microtype}
\usepackage{xr-hyper}
\usepackage[colorlinks=true,linkcolor={[rgb]{0.10,0.30,0.80}},citecolor={[rgb]{0.10,0.30,0.80}},urlcolor={[rgb]{0.10,0.30,0.80}}]{hyperref}
\newtheorem{theorem}{Theorem}
\newtheorem{proposition}{Proposition}
\newtheorem{corollary}{Corollary}
\theoremstyle{remark}
\newtheorem{remark}{Remark}
\newtheorem{openproblem}{Open problem}
\newcommand{\R}{\mathbb{R}}
\newcommand{\E}{\mathbb{E}}
\newcommand{\Var}{\operatorname{Var}}
\newcommand{\Prob}{\mathbb{P}}
\newcommand{\Xbar}{\bar X}
\newcommand{\ybar}{\bar y}
\newcommand{\ii}{\mathrm{i}}
\newcommand{\dd}{\,\mathrm{d}}
\newcommand{\Pacc}{P_{\mathrm{acc}}}
\newcommand{\mest}{m_{\mathrm{est}}}
\newcommand{\ntest}{n_{\mathrm{test}}}
\newif\ifblind\blindfalse %
\newif\ifsubmission\submissionfalse
\ifdefined\BLINDBUILD\blindtrue\submissiontrue\fi
\ifdefined\SUBMISSIONBUILD\submissiontrue\fi
\usepackage{setspace}
\ifsubmission\fi
\title{Precision follows the decision: how well an acceptance\\
probability is known for mean--spread release rules}
\ifblind
\author{}\date{}
\else
\IfFileExists{_p3_authorblock.tex}{\author{Remus Osan\\[2pt]
\normalsize Gilead Sciences, Foster City, CA 94404, USA\\
\normalsize \texttt{remus.osan@gilead.com}}
\date{}
}{\author{}\date{}}
\fi

\begin{document}
\maketitle

\ifsubmission
\begin{abstract}
Many release rules accept a lot when a statistic of the sample mean and standard deviation stays
within a limit, and the probability of passing can be computed exactly for a broad class of parent
distribution. We ask how well that probability is known once the parent has been estimated.
Under a reference posterior the answer is an interval, narrow where a lot clearly passes or fails
and wide where it is contested. In ninety-eight published content-uniformity lots that position,
not sample size, sets the width: among forty-eight lots assayed at ten units, and so equally
informed about the spread, the $95\%$ interval runs from essentially zero width to $0.96$. As
test and estimation samples grow together the interval narrows only slowly, because the
operating characteristic steepens as the posterior contracts; $1207$ occupational exposure
assessments under a half-plane rule agree. For a pure-spread rule under a Normal parent the
interval has an exact chi-square form that contains Bergum's lower bound as one endpoint. A
strongly right-skewed parent lowers the pessimistic endpoint by up to a fifth and widens the
interval by up to $0.22$. At the content-uniformity operating point studied, under a Normal parent, the usual symmetric
error bar puts the lower endpoint $0.21$ too high and returns an upper endpoint above one.
\end{abstract}
\else
\begin{abstract}
Many release rules accept a lot when a statistic built from the sample mean and standard
deviation stays within a limit, and the probability that a process passes such a rule can be
computed exactly for a broad class of parent distributions. This paper asks how well that probability is itself known once
the parent has been estimated. Under a stated reference posterior the answer is an interval, narrow where a lot clearly
passes or fails and wide where it is contested. In ninety-eight published content-uniformity lots
that position, not the sample size, sets the width: among the forty-eight lots assayed at the same
ten units, and so equally informed about the spread, the $95\%$ interval runs from essentially
zero width to $0.96$. With the operating point held fixed while test and estimation sample
grow together, the interval narrows only slowly, the contraction of the posterior being offset by
the steepening of the operating characteristic; a calibrated study at the sample sizes of $1207$
occupational exposure assessments shows the same under a half-plane rule. For a pure-spread rule
under a Normal parent the interval has an exact chi-square form in the plug-in value and the two
sample sizes, extending to the mean--spread wedge with a measured residual and containing
Bergum's lower bound as one endpoint. With the parent's posterior formed from the same published
summaries through the evaluator's own likelihood, the qualitative pattern survives but shape
moves both endpoints and width: on the same lots a strong right skew lowers the pessimistic
endpoint by up to a fifth and widens the interval by up to $0.22$. At the content-uniformity operating point studied, under a Normal parent, the symmetric error
bar in standard use puts the lower endpoint $0.21$ too high and returns an upper endpoint above one,
even with the full posterior covariance.
\end{abstract}
\fi

\vspace{0.4em}
\noindent\emph{Keywords:} characteristic-function inversion; content uniformity; occupational
exposure; operating characteristic; parameter uncertainty.

\section{Introduction}\label{sec:intro}

\paragraph{The rules.} A large class of regulated decisions shares one form: measure $n$ units,
compute the sample mean $\Xbar$ and the sample spread $s$, and release the lot only if the pair
lands in a fixed region $A$ of the $(\Xbar,s)$ plane. The region is the standard's entire
notion of acceptable: how far the mean may drift, how much scatter is tolerable, and the rate
at which one is traded against the other. USP $\langle905\rangle$ releases a batch of medicine this way: its region, a wedge, charges deviation of the mean and dispersion jointly against a single limit. EN\,689 declares a workplace's air compliant through a half-plane that charges only the high side. A process capability index admits a supplier's part when the sample $C_{pk}$ clears a contracted floor. A percent-within-limits specification sets acceptance and pay for a highway pavement lot (\S\ref{sec:setup}). Attached to any such rule and any
hypothesized process is one operational number: the probability that a lot from that process
passes. This paper is about what that number is worth.

\paragraph{The gap this closes.} Recent work \cite{paper1}, cited below as the exact-evaluation paper, shows that for this family
the acceptance probability
\begin{equation}\label{eq:P}
\Pacc(F)=\Prob\big[(\Xbar,s)\in A\big],\qquad X_1,\dots,X_n\ \text{i.i.d.}\ \sim F,
\end{equation}
is computable exactly and deterministically, for any parent $F$ meeting its integrability
condition, at a cost that for a fixed grid does not grow with $n$. That answers ``what is the pass probability if the process looks like
\emph{this}?'' It does not answer the question a release decision poses, ``what is the pass
probability, given that we do not know exactly what the process looks like?'' In practice the parent is estimated: from a development batch, a calibration run, a handful of historical lots, or, in \S\ref{sec:cehd}, from eight air samples. Simulation-based content-uniformity tools built on measured drug-substance properties report a point value computed at the estimate $\hat\theta$ \cite{morrison2021,chu2025}. That value silently carries the uncertainty of its input.

\paragraph{Why exactness changes the shape of the answer.} The usual route to an interval on a simulated quantity is nested: draw parameters, then simulate a lot at each draw. That route has two entangled error sources, and its cost means that in practice it is skipped. An exact evaluator removes one source entirely. Write $g(\theta):=\Pacc(F_\theta)$. Because this map is deterministic, uncertainty in $\Pacc$ is nothing but uncertainty in $\theta$ carried through $g$. With $\theta$ drawn from whatever law $\pi$ the analyst holds, the interval is read off the law of the single random variable $g(\theta)$. That induced law is the \emph{pushforward} $g_\#\pi$;
\S\ref{sec:pushforward} defines it, and the rest of the paper measures it.

\paragraph{What exactness is worth, in the exact-evaluation paper's own currency.} The exact-evaluation
paper \cite{paper1} measures the \emph{approximation error}: the gap between the exact acceptance
probability and the incumbent Normal-parent value for the same lot, a defect of the arithmetic
that would be present even if the process were known exactly. At the operating points examined
there it is of the order of a tenth and changes sign. Monte Carlo, the other incumbent, carries no
such bias, only sampling error and cost. This paper measures the \emph{interval width} $W$ of
\eqref{eq:width}: how much of the unit interval the acceptance probability is still free to occupy
once one admits that the parent was estimated. That is a limit of the evidence, not of the
arithmetic. Among ninety-eight published content-uniformity lots, a curated convenience sample,
the contested ones, whose \emph{plug-in} acceptance probability (the value computed by treating
the estimated mean and spread as if they were the true ones, which is the number practice
reports) lies between $0.20$ and $0.99$, have a
median width of $W=0.77$ (\S\ref{sec:reallots}), several times the approximation error at its
largest. The two are a sequence rather than rivals: making the evaluation exact drives the
approximation error to zero, so the width that remains becomes visible and attributable.

\paragraph{Where the evidence comes from.} Section~\ref{sec:ladder} presents three studies, each
more realistic and less analytically controlled than the one before. The first is an analytic
anchor, a toy version of the USP $\langle905\rangle$ content-uniformity rule: the unclipped
wedge \eqref{eq:wedge} with three units, $k=1$ and a Uniform parent, a non-Normal case for which $g$ is known in
closed form. The second is pharmaceutical: the acceptance-value condition of $\langle905\rangle$, the rule
for releasing drug-product lots, under a Normal parent. It is evaluated first at one calibrated
operating point on the Stage-2 constants, then on ninety-eight published lots at Stage~1. Stage~1 is evaluated as written; the Stage-2
evaluations omit that stage's individual-unit limits, and the two-stage sequence is left to
Appendix~\ref{app:twostage}. The
third is occupational hygiene: the rule that accepts when the upper tolerance limit of workplace
exposure sits below the exposure limit. Its parents are lognormal with median geometric standard
deviation $2.20$ (GSD, the factor $e^{\sigma}$ of the log-scale spread), and its sample sizes,
median $8$, are those of $1207$ exposure assessments extracted from a public regulatory archive
of over a million measurements \cite{cehd}; only the sample sizes are empirical. They form a ladder:
the anchor checks the evaluator the lot study uses against a closed form, and the lot study's construction is
repeated verbatim on the exposure study. Apart from the anchor, every study is run first under a
Normal parent (lognormal for exposure), where the posterior is classical and the comparison with
the existing, Normal-based methods is like for like. Only once the Normal case is exhausted does
\S\ref{sec:skewed} replace it, repeating the lot study under skewed parents, a skew-Normal
calibrated on three hundred published tablets and a right-skewed Beta stress case, each with its
own likelihood; that is where the exact evaluator's reach beyond the Normal is used.

\paragraph{What we find.} One structure explains all three studies. Where the spread is the coordinate that matters, the interval on $\Pacc$ is the posterior range of $\sigma$ mapped through the operating characteristic. Its width is therefore the product of two factors. The first is how far apart the posterior quantiles of $\sigma$ sit, which depends on the sample size and nothing else. The second is how steep the operating characteristic is between them, which depends on where the lot sits on the rule. Section~\ref{sec:closed} makes this exact: the endpoints are two evaluations of $g$, one at each $\sigma$-quantile. For a Normal parent they have a closed form, a chi-square formula in the plug-in value and the two sample sizes, which gives Bergum's lower bound in one line together with the interval that bound belongs to. Section~\ref{sec:unify} turns it into a law. The two factors are not comparable in reach.
\emph{At a fixed sample size, position alone moves the width across almost the whole unit
interval; at a fixed position away from an exact decision boundary, more units narrow it only as $n^{-1/2}$.} A lot deep inside the
acceptance region is pinned down at ten units; a lot near the knee is not pinned down at
fifty-five. Whether a given sample size suffices is therefore not a property of the sample
size. Parent shape, the subject of the exact-evaluation paper, moves both the endpoints and the
width: under the skew seen in segregating runs the Normal lower endpoint sits too high by up to a
fifth, on lots that appear to pass comfortably (\S\ref{sec:skewed}).

Two of the ninety-eight lots settle this without further machinery. Both were assayed at $n=10$,
so the data pin down $\sigma$ equally well in relative terms: each posterior range spans the same
$2.65$-fold factor, $[1.22,\,3.25]$ for the first and $[3.33,\,8.85]$ for the second, because that
factor depends only on the sample size. The lots are alike in nothing else. Their intervals on $\Pacc$ are $[0.993,\,1.000]$ and
$[0.066,\,1.000]$, $0.007$ and $0.934$ wide, a factor of $137$ apart, because at a given mean the
$\langle905\rangle$ rule of \S\ref{sec:setup} is a ceiling on the sample spread, and that
ceiling lies above the first lot's whole $\sigma$-range and inside the second's. An error bar scaled by $1/\sqrt n$, necessarily
the same for both, describes the first and misses the second by most of the unit interval
(Figure~\ref{fig:premise}; Appendix~\ref{app:lots}).

\paragraph{What is new here, and what is not.} Posterior inference for the probability of passing a compendial test, including its percentiles, is established \cite{leblond2014,lewis2016}. The chi-square identity behind the pure-spread case is known from the estimated-parameter control-chart literature \cite{jensen2006,epprecht2015}. Finite-sample decision instability and nonlinear amplification at a fixed boundary are studied in \cite{jiang2026a,jiang2026b}. The exact evaluator is imported from the exact-evaluation paper \cite{paper1}. What this paper adds is
narrower. Two results are proved. When only the spread matters, the acceptance probability falls
steadily as $\sigma$ grows, so the ends of the posterior interval for $\sigma$ map directly to the
ends of the interval for the acceptance probability, and for a Normal parent this becomes a chi-square formula in the plug-in acceptance probability and the two sample sizes
(\S\ref{sec:closed}). And the sample that estimates the process and the sample the test draws are
different quantities; keeping them apart gives three ways for the sample size to grow, which
narrow the interval at very different rates (\S\ref{sec:unify}). Two results are computed, not
proved. A width law $\mathcal A(g)$ predicts the interval width from the plug-in acceptance
probability alone, nearly independently of sample size; it is checked, not proved, on the
exact-evaluation paper's capability and percent-within-limits rules (\S\ref{sec:transfer}). And
under one posterior the exact interval is compared with the two shortcuts in use: the
delta-method error bar, on the probability and on the logit scale, and the worst case over the
corners of a parameter box (\S\S\ref{sec:corner}, \ref{sec:link}). The illustrations are a convenience sample of published lots, a realistic sample-size
distribution rather than $1207$ independent validations, and declared shape scenarios
(\S\ref{sec:skewed}). A focused search found no earlier source with this combination.

A practitioner already asks four questions, and the paper answers each in terms of that two-factor product.

\begin{itemize}\itemsep3pt
\item \emph{How much does a reported acceptance probability actually determine?} It depends
on where that probability falls and not on the sector: on ninety-eight published lots under USP
$\langle905\rangle$ the interval collapses at both ends and opens in the middle, and a calibrated
study at the sample sizes of $1207$ occupational exposure assessments reproduces the shape
(\S\ref{sec:reallots}, \S\ref{sec:cehd}).
\item \emph{Does a larger sample narrow the interval?} Only if the extra units are spent on
estimation while the rule stays as the standard writes it. When the test grows with the
estimation sample at a recalibrated limit, the amplification $|\partial g/\partial\ln\sigma|$
grows as $\sqrt n$ while the posterior spread shrinks at the same rate, and the two cancel
(\S\ref{sec:unify}, Table~\ref{tab:designs}).
\item \emph{Which coordinate, if better known, would narrow it?} The spread, for a rule that
weights spread as $\langle905\rangle$ does. Correlation between units of the same lot, which
the sampling-plan literature around ASTM E2709 and E2810 treats as a central concern
\cite{bergum2015}, matters about as much as the mean (\S\ref{sec:decomp}).
\item \emph{Which of the approximations already in use can be trusted, and when?} One shortcut is the worst-case corner band, Bergum's method as standardized in ASTM E2709 and E2810 \cite{bergum1990,astm2709,astm2810}: put a confidence box around the uncertain process parameters, evaluate the acceptance probability at the corners of the box, and report the smallest and largest values. At the $\langle905\rangle$ point it finds the true minimum but misses the maximum. The rule is symmetric about the target, so the highest acceptance probability occurs at the target mean, inside the box rather than at a corner (\S\ref{sec:corner}). The symmetric delta band is the error bar $\hat g\pm1.96\,\mathrm{sd}$ that the delta method attaches to the plug-in value $\hat g$ defined in \S\ref{sec:setup}. At the $\langle905\rangle$ operating point, under a Normal parent, it puts the lower endpoint $0.21$ too high, even with the full covariance and a converged gradient. In every one of the $1207$ assessments it returns an upper endpoint that is not a probability. Carried on the logit scale it repairs the range and comes within about a hundredth of the exact interval at both operating points examined, but it offers no condition under which that agreement can be checked (\S\ref{sec:link}).
\end{itemize}

\begin{figure}[H]\centering
\includegraphics[width=\linewidth]{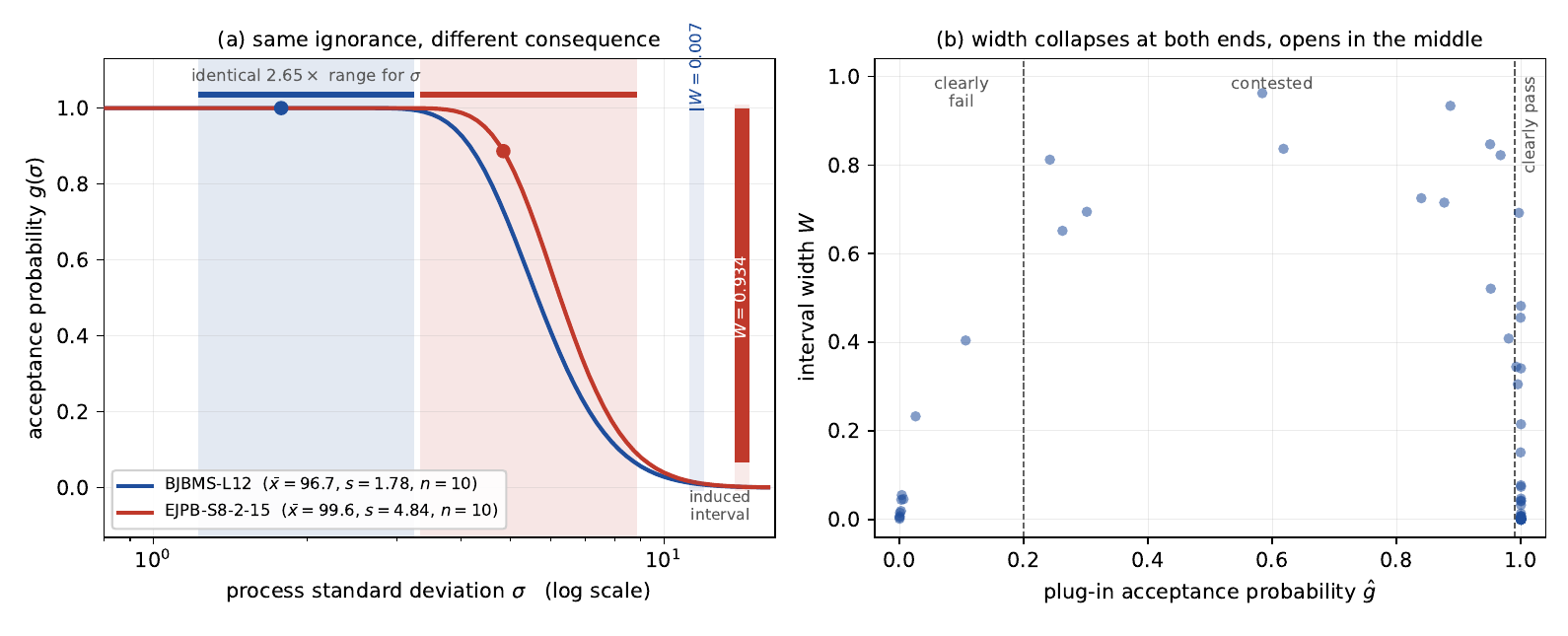}
\caption{The premise, in two views. \emph{(a)} Two of the ninety-eight lots, both at $n=10$, each in one
color throughout. The shaded bands are the values of the process standard deviation $\sigma$
consistent with the observed $s$, equal in width on the log scale because that $2.65$-fold
factor depends on the sample size alone; projecting each band through that lot's own
$g(\sigma)$ gives the induced intervals at the right, a hairline where the rule has saturated
and almost the whole unit interval at the knee. The curves hold $\bar x$ fixed; the bars are the
full posterior intervals over $\mu$ and $\sigma$ jointly, $[0.066,\,1.000]$ against
$[0.090,\,1.000]$ on the $\sigma$-slice for the second lot. \emph{(b)} All ninety-eight lots: the
width collapses wherever the plug-in probability saturates and opens only in between; dashed
lines are the band edges of Table~\ref{tab:reallots}, and the three-regime shape, not the cuts,
is the finding.}
\label{fig:premise}
\end{figure}

\paragraph{Two sectors, and why not three.} Only two of the four rules get a study of their own.
The exact-evaluation paper's other two, capability release and percent-within-limits
acceptance, have no public per-lot data this paper could build a study on (\S\ref{sec:transfer}),
so they serve instead as the test of whether the result carries over to rules it was not built
from. In plain terms, USP $\langle905\rangle$ adds two penalties, the distance of the sample mean
from target and a multiple of the sample standard deviation, and releases the lot when the total
stays under a limit, so its passing set is a \emph{wedge}; EN\,689 Annex~F penalizes only high
exposure, through one linear inequality, so its passing set is a \emph{half-plane}
(\S\ref{sec:setup}, Figure~\ref{fig:geom}).

\begin{figure}[tbp]\centering
\includegraphics[width=\linewidth]{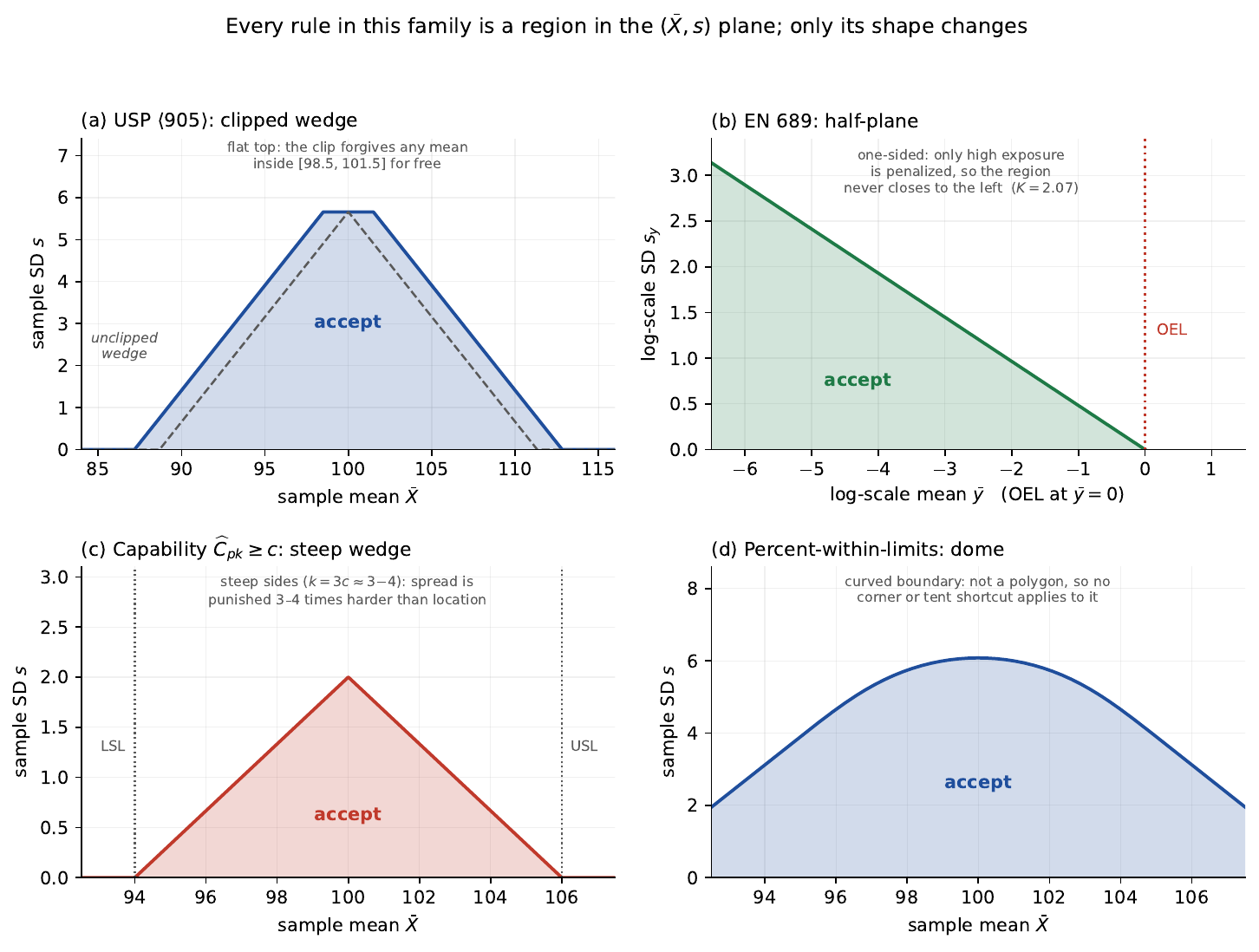}
\caption{The acceptance regions of this paper's two empirical studies (top) and of the exact-evaluation
paper's other two rules (bottom), in the $(\Xbar,s)$ plane. \emph{(a)} USP $\langle905\rangle$:
a wedge, flat-topped because the clip forgives any mean inside $[98.5,101.5]$; dashed, the same
rule without the clip. \emph{(b)} EN\,689 Annex~F: a half-plane of slope $-1/K$, open on the
safe side. \emph{(c)} Capability release: the wedge of (a) with a penalty on spread about one and
a half to two times larger, since $k=3c\approx3.3$--$4.7$ against $2.4$; with spread on the
vertical axis the sides are correspondingly flatter, of slope magnitude $1/k$. \emph{(d)} Percent-within-limits: a curved dome.}
\label{fig:geom}
\end{figure}

\section{Setup}\label{sec:setup}

Let $X_1,\dots,X_n$ be the measured values of $n$ units, $\Xbar=n^{-1}\sum_i X_i$ and
$s^2=(n-1)^{-1}\sum_i(X_i-\Xbar)^2$. A mean--spread rule accepts when $(\Xbar,s)\in A$. Two
geometries recur across the catalog of \cite{paper1}: the \emph{wedge}
\begin{equation}\label{eq:wedge}
|\Xbar-T|+k\,s\le L,
\end{equation}
whose regulated instance is the \emph{acceptance value} of USP General Chapter
$\langle905\rangle$, \emph{Uniformity of Dosage Units} \cite{usp905}, the harmonized
pharmacopeial requirement under which a batch is sampled and the content of each unit assayed
against the label claim. We model the acceptance-value condition alone, one clause of that
chapter; Appendix~\ref{app:twostage} treats the two-stage sequence. Contents are reported as a
percentage of label claim throughout, so a perfectly dosed unit reads $100$, and the batch is
accepted when $\mathrm{AV}=|M-\Xbar|+ks\le15$, where $M=\operatorname{clip}(\Xbar,98.5,101.5)$
forgives any mean inside the window $98.5$ to $101.5\%$. It is the most widely applied member
of the family and serves throughout as the reference rule. The second geometry is the
\emph{half-plane}
\begin{equation}\label{eq:halfplane}
\Xbar+K\,s\le c,
\end{equation}
whose instance in \S\ref{sec:cehd} is the EN\,689 Annex F / AIHA upper tolerance limit
$\ybar+K s_y\le\ln(\mathrm{OEL})$ on log-transformed exposures: a workplace is compliant when
the estimated upper tolerance limit of the exposure distribution sits below the occupational
exposure limit. Figure~\ref{fig:geom}(a,b) draws both; panels (c) and (d) are the exact-evaluation
paper's two other rules, which \S\ref{sec:transfer} prices without data of their own. The rules differ in sector, regulator and units, yet one evaluator serves all of them, and \S\ref{sec:unify} shows that they behave alike.
Throughout, write $F_\theta$ for the parent indexed by $\theta$ (location, spread, shape, and
where relevant a dependence parameter), and
\begin{equation}\label{eq:g}
g(\theta):=\Pacc(F_\theta,n,A)
\end{equation}
for the acceptance probability of rule $A$ at sample size $n$, and $\hat g=g(\hat\theta)$ for
its \emph{plug-in value} at the estimated parameter, the number practice reports; everything
that follows is a statement about this one function.

\paragraph{What computes \texorpdfstring{$g$}{g}, stated here.} The result this paper rests on
is carried by the exact-evaluation paper \cite{paper1}, not yet peer reviewed, so its statement and hypotheses are given
in full; the proof is not reproduced.

\begin{theorem}[Exact representation at fixed dimension; proved in \cite{paper1}]
\label{thm:engine}
Let $X_1,\dots,X_n$ be i.i.d.\ from an absolutely continuous law $F$ with $\E X^4<\infty$, and
let $\psi(t,u)=\int e^{\ii t x+\ii u x^{2}}\dd F(x)$ be the joint characteristic function of
$(X,X^{2})$. If $\iint_{\R^2}|\psi(t,u)|^{\,n}\dd t\dd u<\infty$, then
$(T_1,T_2)=(\sum_i X_i,\sum_i X_i^{2})$ has the bounded continuous joint density
\begin{equation}\label{eq:inv}
f_{T_1,T_2}(\tau_1,\tau_2)=\frac{1}{(2\pi)^{2}}\iint_{\R^{2}}\psi(t,u)^{n}\,
\mathrm{e}^{-\ii(t\tau_1+u\tau_2)}\dd t\dd u ,
\end{equation}
and for every measurable $A$,
\[
\Pacc(F,n,A)=\Prob\bigl[(\Xbar,s)\in A\bigr]
=\iint_{A^{*}}f_{T_1,T_2}(\tau_1,\tau_2)\dd\tau_1\dd\tau_2,
\]
\[
A^{*}=\Bigl\{(\tau_1,\tau_2):\ \tau_2>\tfrac{\tau_1^{2}}{n},\
\bigl(\tfrac{\tau_1}{n},\ \sqrt{\tfrac{\tau_2-\tau_1^{2}/n}{n-1}}\bigr)\in A\Bigr\}.
\]
If $\psi^{n}\in L^{2}$ rather than $L^{1}$, the same conclusion holds with ``bounded
continuous'' weakened to $L^{2}$ and \eqref{eq:inv} read as an $L^{2}$ limit, valid almost
everywhere; the region integral is unchanged. The sample size enters only as the exponent in
$\psi^{n}$, so the inversion dimension and the work at a fixed grid do not grow with $n$; the
grid a prescribed accuracy demands may still depend on $n$, on the parent and on the region.
\end{theorem}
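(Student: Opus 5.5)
The plan is the standard Fourier-inversion argument applied to the two-dimensional random vector $Y=(X,X^{2})$, followed by a change of variables from $(T_1,T_2)$ to $(\Xbar,s)$.

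\emph{Step one: the law of $(T_1,T_2)$.} Since $\E X^4<\infty$, $Y$ has finite second moments, though inversion needs no moments, only integrability of the characteristic function. The vector $(T_1,T_2)=\sum_i Y_i$ is a sum of $n$ i.i.d.\ copies of $Y$, so by independence its characteristic function is $\psi(t,u)^{n}$. Under the hypothesis $\psi^{n}\in L^{1}(\R^{2})$, the classical inversion theorem in $\R^{2}$ gives a bounded continuous density equal to \eqref{eq:inv}. Boundedness is immediate from $|f|\le(2\pi)^{-2}\|\psi^{n}\|_{1}$, and continuity follows from dominated convergence in $(\tau_1,\tau_2)$. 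One subtlety must be addressed explicitly. A single $Y$ is supported on the parabola $\{\tau_2=\tau_1^{2}\}$, so $\psi$ itself is never integrable. The hypothesis is therefore genuinely about the $n$-fold power: for $n\ge2$ the sum can have a density on $\{\tau_2>\tau_1^{2}/n\}$. For $n=1$ the hypothesis fails automatically, which is consistent with the statement rather than in conflict with it. Absolute continuity of $F$ is not used in the inversion; it matters only so that the event $\{s=0\}$, i.e.\ $\tau_2=\tau_1^{2}/n$, is null.

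\emph{Step two: the $L^{2}$ case.} If instead $\psi^{n}\in L^{2}$, Plancherel shows that the law of $(T_1,T_2)$ has an $L^{2}$ density, namely the inverse $L^{2}$-Fourier transform of $\psi^{n}$. The argument is to compare, for Schwartz test functions $\varphi$, the pairing $\E\varphi(T_1,T_2)$ with Parseval's identity. This identifies the distribution with an $L^{2}$ function, which lies in $L^{1}$ because it is a probability density. The integral \eqref{eq:inv} is then read as the $L^{2}$ limit of truncated integrals over $[-R,R]^{2}$, and it converges almost everywhere along a subsequence. Since region integrals only see the function almost everywhere, the formula for $\Pacc$ is unaffected.

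\emph{Step three: transfer to $A$.} The map $(\tau_1,\tau_2)\mapsto\bigl(\tau_1/n,\sqrt{(\tau_2-\tau_1^{2}/n)/(n-1)}\bigr)$ is Borel and sends $(T_1,T_2)$ to $(\Xbar,s)$, because $(n-1)s^{2}=T_2-T_1^{2}/n$. Hence $\{(\Xbar,s)\in A\}=\{(T_1,T_2)\in A^{*}\}$ up to the event $\{s=0\}$. By Cauchy--Schwarz that event is $\{X_1=\dots=X_n\}$, which is null when $F$ is atomless. Therefore $\Pacc=\int_{A^{*}}f$, with $A^{*}$ measurable as the preimage of a measurable set. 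The closing remark about $n$ entering only as an exponent is a direct reading of \eqref{eq:inv} and requires no proof, beyond noting that grid accuracy is outside the claim.

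The main obstacle is step two. I would make the $L^{2}$ identification rigorous by mollifying: convolve the law of $(T_1,T_2)$ with a Gaussian of width $\varepsilon$, apply $L^{1}$ inversion to the resulting smooth density, then let $\varepsilon\to0$ using $L^{2}$ convergence of $\psi^{n}e^{-\varepsilon^{2}|\cdot|^{2}/2}$ together with weak convergence of the measures.
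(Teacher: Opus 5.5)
Your proposal is correct and follows the standard route. $\psi^{n}$ is the characteristic function of the i.i.d.\ sum of the vectors $(X_i,X_i^{2})$, and $L^{1}$ inversion gives the bounded continuous density. In the weaker case, Gaussian mollification plus Plancherel identifies the law with an $L^{2}$ (hence $L^{1}\cap L^{2}$) density. Finally $A$ is pulled back through $(\tau_1,\tau_2)\mapsto\bigl(\tau_1/n,\sqrt{(\tau_2-\tau_1^{2}/n)/(n-1)}\bigr)$ once the null boundary $\tau_2=\tau_1^{2}/n$ is discarded. There is no in-paper argument to compare it with: the paper imports the theorem from \cite{paper1} and states that the proof is not reproduced.

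Two small sharpenings.

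First, once either integrability hypothesis has given $(T_1,T_2)$ a Lebesgue density, the boundary parabola is Lebesgue-null. So $\Prob(s=0)=0$ needs no appeal to absolute continuity of $F$. Indeed either hypothesis already forces $F$ to be atomless, since an atom of mass $p$ at $x_0$ would give $(T_1,T_2)$ an atom of mass at least $p^{n}$ at $(nx_0,nx_0^{2})$. Together with your remark on moments, this shows that neither absolute continuity nor $\E X^{4}<\infty$ does any work in the proof. That is consistent with the paper's own account of those hypotheses as delimiting scope and making $s$ a well-behaved coordinate.

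Second, if ``measurable'' is meant in the Lebesgue rather than the Borel sense, measurability of $A^{*}$ takes one more line. On $\{\tau_2>\tau_1^{2}/n\}$ the map is a diffeomorphism onto $\R\times(0,\infty)$, with smooth inverse $(\bar x,s)\mapsto\bigl(n\bar x,\,(n-1)s^{2}+n\bar x^{2}\bigr)$ and Jacobian $2n(n-1)s\neq0$. Hence preimages of Lebesgue-null sets are null.
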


Four hypotheses bound this paper as much as the exact-evaluation paper. Absolute continuity excludes the discrete and censored parents of Open Problem~\ref{op:censor}. The moment condition makes $s$ a well-behaved coordinate. The integrability of $|\psi|^{n}$ licenses the inversion, and it weakens to $L^{2}$ at the small $n$ where the anchors of \S\ref{sec:anchor} live. Independence across units is relaxed only through the shared-latent construction of \S\ref{sec:reallots}, which enters through $\psi$. Nothing here verifies these hypotheses for a parent; every number
reported inherits them, and Appendix~\ref{app:engine} states how \eqref{eq:inv} is discretized
and validated. Read $g$ as a hypothetical: \emph{if the process truly looked like $F_\theta$,
what fraction of its lots would pass this test?} Three properties of $g$ shape what follows. It belongs to the \emph{pair} of process and rule, so enlarging the sample changes the question asked and not only the evidence answering it (\S\ref{sec:unify}). It is deterministic, so any spread in the reported probability can only come from uncertainty about $\theta$. And it is the operating characteristic read at a point, so the paper's question is where on that curve the estimated process sits.

\section{The pushforward}\label{sec:pushforward}

\subsection{The induced law and its width}\label{sec:induced}

Throughout this paper $\pi$ is a Bayesian posterior $\pi(\dd\theta\mid D)$, so the pushforward
$g_\#\pi$ defined below is a posterior law for the process's conditional operating characteristic.
The same operation applies to the sampling law of an estimator or to a measure on a confidence set.
Those produce different objects whose quantiles share no one inferential reading, so they are
reported in separate subsections and never compared endpoint to endpoint.

\begin{proposition}[Uncertainty in $\Pacc$ is uncertainty in $\theta$]\label{prop:push}
Let $\pi$ be a probability law on $\theta$. Since $g$ in \eqref{eq:g} is a deterministic
measurable map, the acceptance probability has an induced law: the \emph{pushforward} of $\pi$
through $g$, written $g_\#\pi$ and defined by
\begin{equation}\label{eq:pushdef}
(g_\#\pi)(B)\;=\;\pi\big(\{\theta:\ g(\theta)\in B\}\big)
\end{equation}
for every set $B$ of probability values.
\end{proposition}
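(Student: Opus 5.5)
The claim amounts to the standard fact that a measurable map carries a probability measure to a probability measure. The only real work is checking that $g$ is measurable, so that the preimages in \eqref{eq:pushdef} are sets to which $\pi$ assigns mass. I would fix the setting first. $\theta$ ranges over a parameter set $\Theta\subset\R^d$ (location, spread, shape, dependence) with its Borel $\sigma$-algebra, and $\pi$ is a Borel probability law on $\Theta$. The target space is $[0,1]$ with its Borel sets, so ``every set $B$ of probability values'' is read as every Borel $B\subset[0,1]$.

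The first step is measurability of $g$. By Theorem~\ref{thm:engine}, for each $\theta$ we have
\[
g(\theta)=\iint_{\R^2}\mathbf 1_{A^*}(\tau_1,\tau_2)\,f_\theta(\tau_1,\tau_2)\dd\tau_1\dd\tau_2 ,
\]
with $f_\theta$ given by \eqref{eq:inv} and $\psi_\theta$ in place of $\psi$.

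I would argue joint measurability of $(\theta,\tau)\mapsto f_\theta(\tau)$ as follows. For a parametric family with density $p_\theta(x)$ jointly measurable in $(\theta,x)$, the map $(\theta,t,u)\mapsto\psi_\theta(t,u)$ is jointly measurable by Fubini applied to $e^{\ii tx+\ii ux^2}p_\theta(x)$. It follows that $\psi_\theta^n e^{-\ii(t\tau_1+u\tau_2)}$ is jointly measurable. Integrating out $(t,u)$ keeps measurability in $(\theta,\tau)$, again by Fubini–Tonelli on the integrable part. A second Tonelli step, integrating the nonnegative $\mathbf 1_{A^*}f_\theta$ over $\tau$, gives measurability of $g$.

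There is a cleaner route that I would adopt instead, because it avoids the $L^2$ case. Write $g(\theta)=\int\mathbf 1_A(\bar x(x),s(x))\prod_i p_\theta(x_i)\dd x$. Here $(\bar x,s)$ is a continuous function of the sample and $A$ is measurable, so the integrand is jointly measurable in $(\theta,x)$ and nonnegative. Tonelli then gives measurability of $\theta\mapsto g(\theta)$ directly. The Theorem only identifies this value with the inversion formula, and measurability does not depend on which representation is used.

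With measurability in hand, the remaining steps are routine. For Borel $B$, $g^{-1}(B)$ is Borel in $\Theta$, so $\pi(g^{-1}(B))$ is defined. Next, $g_\#\pi([0,1])=\pi(\Theta)=1$. Countable additivity follows because preimages of disjoint sets are disjoint and $g^{-1}(\bigcup B_j)=\bigcup g^{-1}(B_j)$. So $g_\#\pi$ is a probability law on $[0,1]$, and $\int h\,\dd(g_\#\pi)=\int h\circ g\,\dd\pi$ for bounded measurable $h$. This last identity justifies reading quantiles of $\Pacc$ off draws $g(\theta^{(j)})$, $\theta^{(j)}\sim\pi$.

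The main obstacle is the measurability step. It requires the hypothesis, implicit in ``deterministic measurable map'', that the family $F_\theta$ is a measurable kernel, meaning $\theta\mapsto F_\theta(E)$ is measurable for each Borel $E$. If the family is only given abstractly, I would impose this as the standing assumption rather than prove it. For the concrete families used later (Normal, lognormal, skew-Normal, Beta, and the shared-latent construction), it holds because the densities are continuous in $(\theta,x)$.
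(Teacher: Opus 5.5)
The paper gives no separate proof. Measurability of $g$ is written into the statement as a premise (``since $g$ \dots\ is a deterministic measurable map''). The pushforward is then defined by \eqref{eq:pushdef}, and the only gloss is that $g_\#\pi$ is the law of $g(\theta)$ when $\theta\sim\pi$.

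Your proposal is correct and takes a different route in that it does strictly more: it proves the premise instead of assuming it. You get measurability by Tonelli on the direct representation $g(\theta)=\int\mathbf 1_A(\bar x(x),s(x))\prod_i p_\theta(x_i)\dd x$. You then check that $g_\#\pi$ is a probability measure and that $\int h\,\dd(g_\#\pi)=\int h\circ g\,\dd\pi$. That identity is what lets the evaluated atoms $g(\theta^{(j)})$ be treated as draws from the induced law. You were right to abandon the route through the inversion formula of Theorem~\ref{thm:engine}: in the $L^2$ case the density is only an a.e.\ limit, and measurability belongs to the value, not to the representation.

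What the paper's framing buys is brevity. What yours buys is naming the one genuine hypothesis: that $\theta\mapsto F_\theta$ be a measurable kernel.

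Two small refinements:
\begin{enumerate}
\item Densities are not needed at all. If $F_\theta$ is a kernel, so is $F_\theta^{\otimes n}$, by a monotone-class argument on rectangles. Hence $g(\theta)=F_\theta^{\otimes n}\{x:(\bar x(x),s(x))\in A\}$ is measurable for every Borel $A$. This even covers the discrete and censored parents that the engine's hypotheses exclude.
\item For the shared-latent construction, $\prod_i p_\theta(x_i)$ is not the joint law of a lot. Use the compound-symmetric Normal density instead, or write $g$ as the i.i.d.\ value averaged over the common factor $Z$ and apply Tonelli once more. Either is jointly continuous for $\rho<1$, so your conclusion stands.
\end{enumerate}
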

\noindent In words, $g_\#\pi$ is the law of $g(\theta)$ when $\theta\sim\pi$. Its
$\beta$-quantile is
\begin{equation}\label{eq:pushforward}
q_\beta=\inf\big\{v:\ \pi\big(\{\theta:g(\theta)\le v\}\big)\ge\beta\big\},
\end{equation}
and the quantity reported throughout is the \emph{interval width}
\begin{equation}\label{eq:width}
W\;=\;q_{0.975}-q_{0.025},
\end{equation}
the spread of the equal-tailed central $95\%$ of $g_\#\pi$ under the law $\pi$ named in each
application, in probability points, so $0\le W\le1$. Because $W$ summarizes a central interval and
not the support, two readings must be avoided. First, $W=0$ does not mean the probability is
determined. It means only that the central interval is degenerate at the resolution reported, and up
to $5\%$ of the mass is free to sit elsewhere. Second, $W=1$ does not mean every value is admissible:
a law with all its mass at $0$ and $1$ has $W=1$ while excluding every interior value.
This second case is this paper's own mechanism at its limit: on $\langle905\rangle$ as written,
with the limit fixed and the test grown to $\ntest=100$, a ten-unit posterior straddles the step
to which the operating characteristic converges, the reported interval is all of $[0,1]$, and the
limiting law is two atoms (Appendix~\ref{app:blindspot}). Every unlabeled number quoted in these units below is a value of $W$.

\subsection{The width in closed form for a Normal parent}\label{sec:closed}

Where the spread is the only live coordinate, the posterior for $\sigma$ under the flat prior
on $\ln\sigma$ has closed-form quantiles, $\sigma_{p}=s\,c_{p}(n)$ with
$c_{p}(n)=\sqrt{(n-1)/\chi^{2}_{n-1,\,1-p}}$, and $g$ is monotone in $\sigma$. A monotone map
carries quantiles to quantiles, so the central interval on $\Pacc$ is exactly
\begin{equation}\label{eq:exactw}
W\;=\;\bigl|\,g(\sigma_{0.025})-g(\sigma_{0.975})\,\bigr|,
\qquad \sigma_{p}=s\,c_{p}(n).
\end{equation}
The two points sit a factor $c_{0.975}(n)/c_{0.025}(n)$ apart, a function of the sample size
alone ($2.65$ at $n=10$, $1.69$ at $n=30$); where they fall is set by the data through $s$ and by
the rule through $g$. Linearizing between them, the delta method \cite{cramer1946,serfling1980} in
$\ln\sigma$, names the structure (\S\ref{sec:unify}) but is not a formula for the number: at
$n=10$ the two quantiles sit a full unit apart in $\ln\sigma$, and a tangent is off by an order of
magnitude wherever the point estimate is saturated and the posterior is not. Every width reported
here comes from \eqref{eq:exactw} or the full pushforward. For a Normal parent \eqref{eq:exactw} is
a formula. Below, $F_\nu$ is the chi-square distribution function with $\nu$ degrees of freedom
and $\chi^2_{\nu,p}$ its $p$-quantile.

\begin{proposition}[Pure spread rule]\label{prop:chi}
Let the parent be Normal, the rule $k\,s_t\le L'$ on a test sample of $n_t$ units,
$\nu_t=n_t-1$, and the posterior $\sigma^2\mid s\sim s^2\nu_e/\chi^2_{\nu_e}$ from an estimation
sample of $m$ units, $\nu_e=m-1$. With $\hat g=g(\bar x,s)$ the plug-in value and
$\hat q=F_{\nu_t}^{-1}(\hat g)$, the central $95\%$ interval on $\Pacc$ is
$[g_{0.975},\,g_{0.025}]$ with
\begin{equation}\label{eq:chilaw}
g_p=F_{\nu_t}\!\Big(\frac{\chi^2_{\nu_e,\,1-p}}{\nu_e}\,\hat q\Big),
\qquad
W=F_{\nu_t}\!\Big(\frac{\chi^2_{\nu_e,\,0.975}}{\nu_e}\,\hat q\Big)
 -F_{\nu_t}\!\Big(\frac{\chi^2_{\nu_e,\,0.025}}{\nu_e}\,\hat q\Big).
\end{equation}
The rule enters only through $\hat g$; $k$ and $L'$ cancel.
\end{proposition}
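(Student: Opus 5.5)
The approach is to write $g$ as an explicit function of $\sigma$, show it is monotone, and push the posterior quantiles of $\sigma$ through it using \eqref{eq:exactw}.

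\textbf{Step 1: closed form for $g$.} Under a Normal parent, $\nu_t s_t^2/\sigma^2\sim\chi^2_{\nu_t}$ independently of the mean. Hence
\[
g(\sigma)=\Prob\big[s_t\le L'/k\big]=F_{\nu_t}\!\Big(\frac{\nu_t (L'/k)^2}{\sigma^2}\Big).
\]
This depends on $\sigma$ alone, so $\mu$ drops out and the pushforward of the joint posterior equals the pushforward of the $\sigma$-marginal. Write $c:=\nu_t(L'/k)^2$. The map $\sigma\mapsto F_{\nu_t}(c/\sigma^2)$ is continuous and strictly decreasing on $(0,\infty)$, because $F_{\nu_t}$ is strictly increasing on $(0,\infty)$.

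\textbf{Step 2: quantiles transfer under a monotone map.} If $h$ is continuous and strictly decreasing and $\sigma$ has a continuous law, then the $\beta$-quantile of $h(\sigma)$ is $h(\sigma_{1-\beta})$. This follows because $\{h(\sigma)\le v\}=\{\sigma\ge h^{-1}(v)\}$, which I will verify directly from definition \eqref{eq:pushforward}. The central interval on $\Pacc$ is therefore $[g(\sigma_{0.975}),\,g(\sigma_{0.025})]$, as in \eqref{eq:exactw}.

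\textbf{Step 3: posterior quantiles of $\sigma$.} From $\sigma^2=s^2\nu_e/\chi^2_{\nu_e}$, $\sigma$ is a decreasing function of the chi-square variable. So $\sigma_p^2=s^2\nu_e/\chi^2_{\nu_e,1-p}$, which matches $\sigma_p=s\,c_p$ with $m$ in place of $n$. Substituting into Step 1 gives
\[
g_p=F_{\nu_t}\!\Big(\frac{c}{s^2}\cdot\frac{\chi^2_{\nu_e,1-p}}{\nu_e}\Big).
\]

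\textbf{Step 4: eliminate the rule.} The plug-in value is $\hat g=g(s)=F_{\nu_t}(c/s^2)$. Because $F_{\nu_t}$ is a continuous bijection $(0,\infty)\to(0,1)$, we get $\hat q=F_{\nu_t}^{-1}(\hat g)=c/s^2$ whenever $\hat g\in(0,1)$. Substituting yields \eqref{eq:chilaw}. Since $c$ and $s$ now appear only through $\hat q$, $k$ and $L'$ cancel. Taking the difference of the two endpoints gives $W$; its sign is right because the $p=0.025$ endpoint uses the larger chi-square quantile.

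\textbf{Main obstacle.} The only delicate point is bookkeeping rather than substance: keeping the index reversal straight. A decreasing $h$ sends the upper $\sigma$-quantile to the lower $\Pacc$-quantile, and $\sigma$ is itself decreasing in $\chi^2$. These two reversals combine so that $g_p$ involves $\chi^2_{\nu_e,1-p}$. I will check this with a limiting case: as $\nu_e\to\infty$, $\chi^2_{\nu_e,p}/\nu_e\to1$, so both endpoints tend to $\hat g$. The boundary cases $\hat g\in\{0,1\}$ are excluded, since $c/s^2\in(0,\infty)$ makes $\hat g$ strictly interior automatically.
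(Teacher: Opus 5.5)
Your proof is correct and follows essentially the same route as the paper's. The paper writes $g(\sigma)=F_{\nu_t}(\nu_t L'^2/(k^2\sigma^2))$, reads off $\hat q=\nu_t L'^2/(k^2 s^2)$ from $\hat g=g(s)$, and pushes the posterior quantiles $\sigma_p=s\,c_p$ through the decreasing map $g$; your explicit checks of the quantile reversal and of $\hat g\in(0,1)$ just spell out what the paper states in one line.
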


\begin{proof}
Since $\nu_t s_t^2/\sigma^2\sim\chi^2_{\nu_t}$, $g(\sigma)=F_{\nu_t}(\nu_t L'^2/(k^2\sigma^2))$,
and $\hat g=g(s)$ defines $\hat q=\nu_t L'^2/(k^2s^2)$. At the posterior quantile
$\sigma_p=s\,c_p$, $c_p^2=\nu_e/\chi^2_{\nu_e,1-p}$, the argument of $F_{\nu_t}$ is
$\hat q\,\chi^2_{\nu_e,1-p}/\nu_e$. As $g$ decreases in $\sigma$, quantiles of $\sigma$ map to
quantiles of $g$.
\end{proof}

At $n_t=3$ the law collapses to a power law in the failure probability, elementary when the
estimation sample is also three, and three calibration units then cannot certify a release at
this rule whatever they report (Corollary~\ref{cor:n3}, Appendix~\ref{app:n3law}).

The identity itself is not new. It is the conditional in-control probability of an $S^2$ control
chart whose variance is estimated from Phase~I data, whose distribution over the Phase~I sample
is the object the estimated-parameters literature designs against
\cite{jensen2006,epprecht2015,faraz2015,does2020}, read here as a posterior. What is new is its
use as an interval on an acceptance probability indexed by the plug-in value alone, the wedge
extension below, and the asymptotics that follow.

The $\langle905\rangle$ wedge adds a mean channel. Under a Normal parent $\Xbar\perp s$, so
Stage~1 is exactly the one-dimensional integral
\begin{equation}\label{eq:g1d}
g(\mu,\sigma)=\int\varphi(x)\,F_{\nu_t}\!\Big(\frac{\nu_t\,[L-D(\mu+\sigma x/\sqrt{n_t})]_+^{2}}
{k^2\sigma^2}\Big)\,dx,\qquad D(x)=(|x-100|-1.5)_+,
\end{equation}
a Gaussian mixture of pure-spread terms with a random effective limit. This is the
practitioner's evaluator for a Normal parent: one-dimensional quadrature, machine precision,
and $1-g$ available without cancellation. Entered at $\hat q=F_{\nu_t}^{-1}(\hat g)$ with $\hat g$ the full wedge plug-in,
Proposition~\ref{prop:chi} reproduces the exact $\sigma$-only width on the ninety-eight lots of
\S\ref{sec:reallots} to a median of $10^{-6}$ and a maximum of $0.081$;
Appendix~\ref{app:meanchannel} separates the mean channel from the calibration term and checks
the evaluator against the FFT engine.

Table~\ref{tab:chilaw} is the proposition at the sample sizes practice meets, and it is read at
the end a release decision consumes: a lot reporting $0.99$ from ten units can pass as rarely as
$0.31$ on the evidence, and needs fifty units before that floor reaches $0.88$.

\begin{table}[htbp]\centering\small
\caption{Lower endpoint of the $95\%$ interval on $\Pacc$ for the \emph{pure-spread} rule of
Proposition~\ref{prop:chi}, test size $n_t=10$, against the plug-in value $\hat g$ and the
estimation sample size $m$. Only the lower endpoint is tabulated because it is the one a release
decision consumes (\S\ref{sec:endpoint}); the upper endpoint exceeds $0.93$ for $\hat g\ge0.8$ at
every $m$ shown, and at $\hat g=0.5$ it is $0.960$ for $m=10$ and $0.712$ for $m=100$
(Table~\ref{tab:chilawhi}). These entries are exact
for the pure-spread rule only. Read as an approximation to the $\langle905\rangle$ wedge they are
optimistic: on EJPB-S8-1-10 the table gives $0.406$ against a full-posterior lower endpoint of
$0.308$, and on BJBMS-L01 $0.615$ against $0.518$, both at $m=10$.}
\label{tab:chilaw}
\begin{tabular}{@{}lccccc@{}}
\toprule
$\hat g$ & $m=10$ & $m=20$ & $m=30$ & $m=50$ & $m=100$\\
\midrule
$0.5$ & $0.019$ & $0.083$ & $0.134$ & $0.199$ & $0.278$\\
$0.8$ & $0.068$ & $0.234$ & $0.339$ & $0.454$ & $0.569$\\
$0.9$ & $0.117$ & $0.351$ & $0.478$ & $0.604$ & $0.716$\\
$0.95$ & $0.172$ & $0.459$ & $0.595$ & $0.717$ & $0.815$\\
$0.99$ & $0.311$ & $0.662$ & $0.786$ & $0.876$ & $0.934$\\
$0.999$ & $0.502$ & $0.840$ & $0.920$ & $0.964$ & $0.986$\\
$0.9999$ & $0.659$ & $0.929$ & $0.972$ & $0.990$ & $0.997$\\
$0.99999$ & $0.775$ & $0.970$ & $0.990$ & $0.997$ & $0.999$\\
\bottomrule
\end{tabular}
\end{table}

Cornish--Fisher on both chi-squares gives the large-sample form,
\begin{equation}\label{eq:probitlaw}
g_p=\Phi\!\Big(z_{\hat g}+r\,z'+\tfrac{\sqrt2}{3}\,r\,(z_{\hat g}z'-r)\,\nu_t^{-1/2}+O(\nu^{-1})\Big),
\qquad r=\sqrt{\nu_t/\nu_e},\quad z'=z_{1-p},
\end{equation}
so that $W\to\Phi(z_{\hat g}+1.96r)-\Phi(z_{\hat g}-1.96r)$: a one-line master curve in which the
rule, its constants and the parent's scale have disappeared, and the sample sizes survive only
through $r$. The leading term is not accurate at the sample sizes practice uses; the $\nu^{-1/2}$ term is,
to within $0.014$ over $z_{\hat g}\in[-3,3]$ and $0.023$ on the ninety-eight lots
(Appendix~\ref{app:asym}).

Parent shape changes the scale of this picture but not its structure (\S\ref{sec:skewed}); the
numerical law above is the Normal case.

\subsection{Which endpoint a decision consumes}\label{sec:endpoint}

$W$ is a summary, and its two halves are not interchangeable. A laboratory defending a release
against the worst the evidence permits reads the \emph{lower} endpoint, equivalently
$1-q_{0.025}$, an upper bound on the probability that the lot fails the test. That reading is
the object Bergum's method returns as a lower bound at stated confidence
\cite{bergum1990,astm2709,astm2810}, and the quantity LeBlond and Mockus compare against a bound
\cite{leblond2014}. What this paper adds is that the endpoint and the width answer different
questions. Reporting one without the other hides the case where a saturated estimate
carries a long lower tail. A manufacturer asking whether a rule is more punishing than the
evidence warrants reads the \emph{upper} endpoint, and compares it with the normal-theory value
rather than with $W$. At the flagship operating point of \cite{paper1} the exact value is $0.464$
against a normal-theory $0.384$, eight probability points more punishing than the evidence
makes it. At a neighboring point the sign reverses ($+0.062$), which is why a decision
should consume an endpoint rather than a margin. $W$ is reported throughout for two reasons: it is
the one scalar comparable across lots, studies and sectors, and the three-regime structure
of \S\ref{sec:reallots} is a statement about $W$ that neither endpoint alone can carry. Where a
decision is at stake, the endpoint is what should be quoted, and \S\ref{sec:reallots} quotes it.

\subsection{What is being estimated, and what the grid does to it}\label{sec:estimand}

\begin{remark}[Two estimands, and which one a decision uses]\label{rem:estimands}
The pushforward $g_\#\pi$ and the number
\begin{equation}\label{eq:predictive}
\Pr(\text{pass}\mid D)\;=\;\int g(\theta)\,\pi(\dd\theta\mid D)\;=\;\E[g(\theta)\mid D]
\end{equation}
answer different questions and this paper reports both. Equation~\eqref{eq:predictive} is the
\emph{posterior predictive} probability that a future sample passes, a single number that
already averages over parameter uncertainty and is what enters an expected-loss calculation.
The law $g_\#\pi$ describes how well the process's \emph{conditional} operating characteristic
$g(\theta)$ is pinned down by the evidence, and its width tells an analyst whether a quoted
operating characteristic can be relied on at all. Neither is a probability that an already
observed sample passes; every statement here concerns a future sample from a process assumed
stationary at $\theta$.
\end{remark}

\noindent Each $g(\theta^{(j)})$ also carries the deterministic bias of the underlying inversion,
which does not average away over $j$ and need not be the same at the two endpoints;
Appendix~\ref{app:boundary} measures its effect on the width at three orders of magnitude below
the widths reported.

\subsection{Three treatments, and what each assumes}\label{sec:three}

None of the three is new as a construction.

\begin{enumerate}\itemsep3pt
\item \textbf{Global (pushforward quantiles).} Evaluate $g$ on a grid or posterior sample and
read \eqref{eq:pushforward}. It is exact up to the $\theta$-grid and captures the asymmetry that
is genuinely present because $g$ saturates. This is the recommended default, computing the object
of Proposition~\ref{prop:push} itself, and the two treatments below are evaluated against it
throughout (Appendix~\ref{app:global}). Its mean is the quantity known in clinical trial design
as \emph{assurance} \cite{ohagan2005}, and the Expostats toolkit performs the corresponding
calculation on exposure data \cite{lavoue2019}.
\item \textbf{Local (delta method).} With $g$ smooth at $\hat\theta$, approximate
$\Var g\approx\nabla g^\top\Sigma_\theta\nabla g$, with the gradient taken by finite differences
of the evaluator, and report $\hat g\pm1.96\,\mathrm{sd}$. It is cheap and symmetric, but it
degrades exactly where release decisions live, near the saturation points $g=0$ and $g=1$, where a
symmetric band is not well posed
(Appendix~\ref{app:delta}). It is the textbook delta method \cite{cramer1946,serfling1980}, the
first-order propagation the GUM prescribes as its law of propagation of uncertainty
\cite{gum2008}, and so the default error bar for any quantity computed from estimated inputs.
\item \textbf{Worst-case (corner) band.} Over a confidence set $\Theta_\alpha$ report $[\min g,\max g]$;
if $g$ is monotone in each coordinate the extremes sit at corners. This is Bergum's method
\cite{bergum1990}, standardized as ASTM E2709 and E2810 \cite{astm2709,astm2810,bergum2015} and
implemented in the CuDAL tooling, which reports the minimum as a lower bound at the stated
confidence. Section~\ref{sec:corner} establishes when this is safe (Appendix~\ref{app:corner}).
\end{enumerate}

\paragraph{The closest precedent.} Treatment 1 has been carried out on this very rule, and the
present paper sits downstream of it. LeBlond and Mockus \cite{leblond2014} place a one-way normal random-effects model on
$\langle905\rangle$, evaluate the acceptance probability at each posterior draw, and compare a
lower percentile of the induced posterior against a bound, in substance the reading of
\S\ref{sec:endpoint}; they also find that the operating characteristics depend on the parameters
almost only through the acceptance probability itself. Lewis and Fan \cite{lewis2016} construct the
same posterior for a single lot under a Normal parent and use its lower quantile to widen the
limits of ASTM E2810, setting a Bayesian posterior against Bergum's confidence-set construction,
the comparison \S\ref{sec:corner} returns to. \emph{The induced law, its percentiles, its use in a
decision, and the observation that position on the rule is what the parameters act through are
therefore not new.} What follows differs in reach. The precedent is Normal throughout, where $\Xbar\perp s$; the
evaluator of \cite{paper1} carries the same statement to non-Normal parents. It concerns a single
compendial test on twelve products; here one evaluator carries a wedge and a half-plane across
ninety-eight published lots and $1207$ exposure assessments. And it asks what a particular batch
is worth, whereas the question here is what governs the \emph{width} across decisions: among the
forty-eight lots assayed at ten units, which carry identical posterior precision on the spread,
the induced widths run from essentially zero to $0.962$, and across all ninety-eight, width
correlates with distance from the contested middle at $-0.75$ and with sample size at only
$-0.15$.

Two recent preprints reach a neighboring conclusion from the frequentist side: a capability
gate on a plug-in $C_{pk}$ accepts with probability tending to one half when the true capability
sits at the threshold \cite{jiang2026a}, and uncertainty that looks stable in index space is
amplified in defect-probability space \cite{jiang2026b}. Their census of 500 manufacturing
dimensions, 151 within $0.10$ of the $1.33$ approval threshold, documents how common contested
operating points are. Their gate is a wedge of the family treated here, its probability
estimated by Monte Carlo and propagated by a two-point delta method. The present treatment
evaluates the inner functional exactly, carries the uncertainty as a posterior pushforward, gives
an explicit width representation whose transfer across geometries is tested, and forms a
posterior for a non-Normal parent from published summaries alone (\S\ref{sec:skewed}).

\section{Three studies, from anchor to archive}\label{sec:ladder}

\subsection{An analytic anchor}\label{sec:anchor}

For a Uniform parent, $n=3$, and the wedge $k=1$, $L=\tfrac12$, the acceptance probability is
the elementary number $6+18\ln\tfrac34=0.821723$ \cite{paper1,n3companion}. The evaluator returns
$0.821489$ at $N=4096$, a grid error of $2.3\times10^{-4}$ consistent with the inversion's
calibrated bias. The pushforward machinery is exercised on top of a $g$ whose value is known in
closed form, so any discrepancy is attributable.

In this study the smearing can be written down rather than measured. For a Uniform parent at
$n=3$ both $\Xbar$ and $s$ scale with the support width $w=\sigma\sqrt{12}$, so the acceptance
probability depends on the single ratio $\lambda=L/w$: $g(\sigma)=G(\lambda)$ for a scalar
function $G$. Uncertainty in $\sigma$ is then a pure change of variables, and the interval is governed by the
\emph{amplification} $|dG/d\ln\sigma|$. This is the elementary derivative that converts a proportional
error in $\sigma$ into probability points on $\Pacc$: a one per cent error in the spread
moves the acceptance probability by $0.01\,|dG/d\ln\sigma|$. The exact-evaluation paper
\cite{paper1} obtains $G$ in closed form on $\lambda\le\sqrt3/4$, where
$G=\tfrac{4\sqrt3\pi}{3}\lambda^{3}$, and piecewise-elementary beyond it; the anchor above is the
value at $\lambda=\tfrac12$. In the original units the cubic range reads
$g(\sigma)=\pi L^{3}/(18\sigma^{3})$ \emph{exactly} --- an inverse-cube law in the standard
deviation, \eqref{eq:gsigma}. Appendix~\ref{app:anchorscale} carries the whole reduction out
step by step. On the cubic range the amplification is then \emph{exactly}
\begin{equation}\label{eq:amp3G}
\Big|\frac{dG}{d\ln\sigma}\Big| = 3\,G ,
\end{equation}
which rises to $1.767$ at the end of the cubic range, $\lambda=\sqrt3/4$. Past that join, on the elementary
piece, the amplification peaks at $1.803$ at $\lambda=0.442$, where the elasticity $\lambda G'/G$ has fallen from $3$ to $2.88$,
and then falls to zero as the rule saturates (Figure~\ref{fig:anchor}(c)). This is the mechanism behind
\S\ref{sec:reallots}. An interval on the acceptance probability has width of order
$|dG/d\ln\sigma|\times\mathrm{sd}(\ln\sigma)$, so it collapses wherever the rule saturates, for lots that clearly pass and for lots that clearly fail, and opens only in between.

\subsection{The \texorpdfstring{$\langle905\rangle$}{<905>} rule, on one operating point
and then on ninety-eight real lots}\label{sec:905}

\subsubsection{A single operating point, in detail}

We take the Stage-2 constants of $\langle905\rangle$ ($n=30$, $k=2.0$, the clipped acceptance value) with a Normal parent at $\hat\mu=100$, $\hat\sigma=5$, and $L$ calibrated so that the
point estimate is exactly $g(\hat\theta)=0.850$. Calibrating the limit rather than fixing it is
essential: without that control a comparison reports which rule is easier to pass rather than
how each responds to uncertainty. The rule is the acceptance-value condition alone; Stage~2's
individual-unit limits, no unit outside $0.75M$ to $1.25M$, are omitted. At this calibrated limit
they are practically implied by the acceptance value: passing forces $s\le L/k=5.66$, and one unit
$25$ points from target already forces $s\ge4.6$, leaving the other twenty-nine almost no room. By
quadrature, the probability that a lot passes the acceptance value yet has a unit outside those
limits is $4\times10^{-9}$ averaged over the posterior below and at most $2\times10^{-8}$ at any
draw, far below the endpoints' standard error.

One more ingredient enters here, and only here. Tablets from one lot are not independent draws
from the process: they come from the same blend and so share that lot's offset from the process
mean, which makes them positively correlated, the concern of the sampling-plan literature around
ASTM E2709 \cite{bergum2015}. Two things follow. Averaging the assayed units does not remove the
offset, so the process mean is less well known than $s^2/m$ suggests; and a new lot draws its own
offset, so its mean differs from this lot's. We carry this through a pairwise correlation $\rho$
between units of a lot, given a range rather than a value because one lot cannot estimate it. The
ninety-eight lots below are evaluated with independent units, and \S\ref{sec:decomp} measures how
much $\rho$ matters.

In symbols, the shared offset is a lot-level factor,
$X_i=\mu+\sqrt{\rho}\,\sigma Z+\sqrt{1-\rho}\,\sigma\varepsilon_i$ with $Z$ common to the lot, and the
parameter law is a posterior under a stated prior, not a law stipulated alongside the model.
Compound symmetry keeps $\Xbar$ and $S^2$ independent with exact laws
(Appendix~\ref{app:cslaw}): writing $v=\sigma^2(1-\rho)$ for the within-lot variance,
$\Xbar\sim N(\mu,\,v[1/m+\rho/(1-\rho)])$ and $(m-1)S^2/v\sim\chi^2_{m-1}$. Under the
reference prior $\pi(\mu,v,\rho)\propto v^{-1}\pi_\rho(\rho)$ with $\rho\sim\mathrm{U}[0,0.3]$,
the posterior is
\begin{equation}\label{eq:cspost}
v\mid D=\frac{(m-1)s^2}{U},\quad U\sim\chi^2_{m-1};\qquad
\mu\mid v,\rho,D\sim N\!\Big(\bar x,\;v\Big[\frac1m+\frac{\rho}{1-\rho}\Big]\Big);\qquad
\rho\mid D\sim\pi_\rho ,
\end{equation}
the last because one lot does not identify $\rho$ at all; the data identify only the within-lot
variance $v=\sigma^2(1-\rho)$. The forecast is for a \emph{new lot} from
the same process, the question a release rule puts to a process; a forecast for further units of
the lot already sampled is reported once below for contrast. The evaluator is the
one-dimensional integral of \S\ref{sec:closed} extended to the shared factor, exact to machine
precision (Appendix~\ref{app:cslaw}).

\begin{table}[h]\centering\small
\caption{Three treatments at the $\langle905\rangle$ Stage-2 operating point, all from the
posterior \eqref{eq:cspost} and the same point estimate; endpoints carry a Sobol' standard error
below $0.0002$. The same-lot row is the contrasting forecast target. The delta band's upper
endpoint is not a probability, and the corner band misses the true maximum, which is attained at
the target mean in the interior of the box.}
\label{tab:905}
\begin{tabular}{@{}lcc@{}}
\toprule
treatment & $95\%$ band & width\\
\midrule
point estimate, no uncertainty propagated & $0.850$ & n/a\\
pushforward, new lot, mean $0.533$ & $[0.035,\ 0.983]$ & $0.948$\\
pushforward, same lot, mean $0.714$ & $[0.094,\ 0.999]$ & $0.905$\\
delta-method band, full covariance & $[0.245,\ 1.455]$ & $1.210$\\
corner band on the $95\%$ box & $[0.018,\ 0.992]$ & $0.973$\\
true range of $g$ over that box & $[0.018,\ 0.999]$ & $0.981$\\
\bottomrule
\end{tabular}
\end{table}

\begin{figure}[ht]\centering
\includegraphics[width=0.92\linewidth]{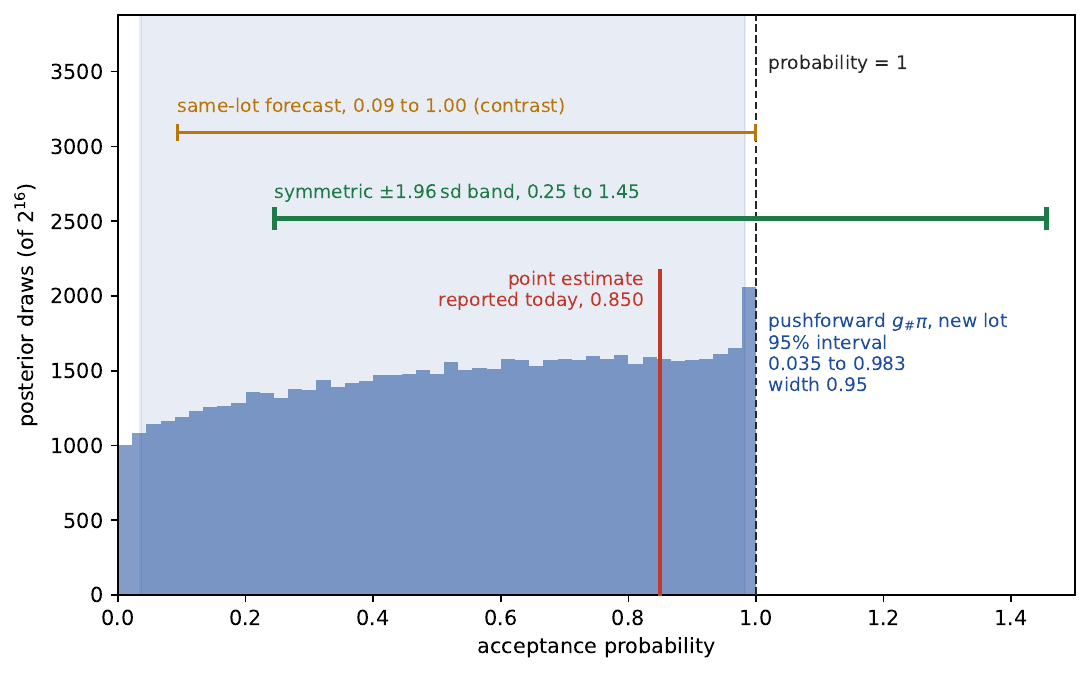}
\caption{The law of the acceptance probability at the $\langle905\rangle$ Stage-2 operating
point, from $2^{16}$ scrambled Sobol' points of the posterior \eqref{eq:cspost}, each evaluated
exactly. The rule is designed so the point estimate is $0.850$ (red); the pushforward is wide,
strongly asymmetric, and piles up against the ceiling. The symmetric $\pm1.96\,$sd band (green),
with the full posterior covariance and a converged gradient, cannot describe such a law: its
upper endpoint is $1.45$, past the dashed line at probability one, and its lower endpoint
$0.25$ sits $0.21$ above the exact $0.035$.}\label{fig:push}
\end{figure}

\begin{remark}[Which digits are earned]\label{rem:conv}
The outer integral over $\theta$ sets the earned digits. The Normal studies integrate it by
replicated scrambled Sobol' sets, four of $2^{16}$ points at this operating point, so every
endpoint carries a standard error, $6\times10^{-5}$ on the lower and $1.3\times10^{-4}$ on the
upper endpoint here. A tail quantile is a tail functional of the evaluated atoms, so a scheme that
settles the mean need not settle the endpoints (Appendix~\ref{app:global}).
\end{remark}

\subsubsection{Ninety-eight published lots}\label{sec:reallots}

The operating point above was chosen by calibration (design~A of Table~\ref{tab:designs}). That makes the
comparison controlled, at a price: the answer is assumed to be interesting rather than found to be.
We therefore repeat the construction on real lots. Each lot's own $(n,\bar x,s)$ serves as the
parameter estimate, and its own posterior is pushed through the same evaluator, exactly as in
\S\ref{sec:cehd}. Each posterior is integrated by two independent scrambled Sobol' sets of $2^{14}$ points
through the one-dimensional evaluator of \S\ref{sec:closed}, so every endpoint carries a
standard error, at most $0.001$ at Stage~1; endpoints are reported to two or three decimals and
counts of lots crossing a threshold remain approximate where a lot sits near it.

\paragraph{Data.} Ninety-eight content-uniformity lots from four published sources, all public,
a curated convenience sample rather than the product of a systematic search:
ten commercial-scale batches at $n=50$--$55$ \cite{goodwin2018}; thirty routine lots at $n=10$
\cite{vranic2008}; eighteen process-development runs at $n=10$, with individual tablet values
published \cite{lyytikainen2024}; and forty batch-versus-continuous formulations at
$n=21$ \cite{ijpx2024}, three tablets from each of seven time-indexed bags treated here as
twenty-one exchangeable units, with the source's standard-deviation convention not stated (a
$\sqrt{21/20}$ rescaling moves the largest width below by $0.03$). A source-to-row inventory
ships with the code. Reported dispersions span $0.69\%$ to $19.5\%$ RSD. Three features of the sample matter. The acceptance
value is published for two sources and computed by us for the other two, so only the former
corroborate our $\langle905\rangle$ implementation. All four sources report summary statistics, so the
parent's shape is not observed; Remark~\ref{rem:scaleinv} shows that this does not matter for the
width. And the two development sources publish failures because failure was their finding,
while the routine sources do not, so the mix is not a failure rate for anything.

\paragraph{What is evaluated, precisely.} Two sample sizes are kept apart. The source lot's
$\mest$ sets the posterior, $\sigma^2\mid\text{data}\sim\text{Inv-}\chi^2(\mest-1,s^2)$ and
$\mu\mid\sigma\sim N(\bar x,\sigma^2/\mest)$; $\ntest$ sets the future test whose probability $g$
reports, $\ntest=10$ with $k=2.4$ for Stage~1. The forty-eight lots published at $\mest=10$ carry a
posterior of the same size as the test, the other fifty, at $\mest=21$ to $55$, a sharper one. At
Stage~1 the acceptance-value condition is the whole rule, so the lots are evaluated against
Stage~1 as written, as a standalone test rather than the first step of the coupled sequence
(Appendix~\ref{app:twostage}).

\begin{table}[h]\centering\small
\caption{Interval width against the plug-in acceptance probability computed here, $98$ published
lots evaluated under the acceptance-value condition alone. Each lot's posterior is formed from
that source's own sample size $\mest$, and the rule is then evaluated at the size the standard
prescribes, $\ntest=10$ with $k=2.4$ (see ``What is evaluated''). The
median collapses at both ends and opens in the middle; the last column shows that the collapse
is a statement about the median and not about every lot.}
\label{tab:reallots}
\begin{tabular}{@{}lrccr@{}}
\toprule
plug-in acceptance probability & lots & median width & largest width & lots with width $>0.10$\\
\midrule
$>0.99$ \quad (clearly passing) & $76$ & $<0.001$ & $0.692$ & $8$\\
$0.80$--$0.99$ & $7$ & $0.725$ & $0.934$ & $7$\\
$0.20$--$0.80$ \quad (contested) & $5$ & $0.812$ & $0.962$ & $5$\\
$<0.20$ \quad (clearly failing) & $10$ & $0.032$ & $0.404$ & $2$\\
\bottomrule
\end{tabular}
\end{table}

Table~\ref{tab:reallots} is the study's result, and it is sharper than the calibrated case
permits. \emph{The point estimate is typically dependable where the decision is not in doubt,
and uninformative where it is}, but the first half of that sentence is about the median and not
about every lot. Among the seventy-six lots with $\hat g>0.99$ the median width is below the
numerical resolution of the table, yet ten exceed $0.05$, eight exceed $0.10$ and one exceeds
$0.50$, the largest being $0.692$ at a plug-in value of $0.997$; among the ten with
$\hat g<0.20$, three exceed $0.05$, two exceed $0.10$ and the largest reaches $0.404$, a lot
reporting $0.107$ against $[0.008,\,0.412]$. A plug-in
value close to a saturation boundary is therefore not sufficient for a narrow interval: the
posterior can carry the process back onto the steep part of the operating characteristic, and
whether it does is a question about the posterior rather than the point estimate.

For lots that look safe, the width is the wrong thing to report, and
Figure~\ref{fig:lotsend}(a) shows what to report instead. When the plug-in is close to one, the
upper end of the interval is stuck at one too, so everything the interval says is in its lower end.
Turned around, that lower end is a worst plausible failure rate: $1-q_{0.025}$, the probability of
failing the test that the evidence still allows at $97.5\%$ credibility. Read that way, the plug-in
can badly understate the risk. Of the seventy-six lots whose plug-in says they fail less than
$1\%$ of the time, fifteen could plausibly fail $1\%$ of the time or more, ten $5\%$ or more and
six $25\%$ or more. The most extreme says it fails $0.33\%$ of the time while the evidence allows
$69\%$. The plug-in alone cannot reveal this; the pushforward reports it at no extra cost. For the twelve lots in between, where
a judgment must actually be made, the median interval is $0.77$ wide: one lot reports $0.886$
against $[0.066,\,1.000]$, another $0.584$ against $[0.015,\,0.978]$.

\begin{figure}[H]\centering
\includegraphics[width=\linewidth]{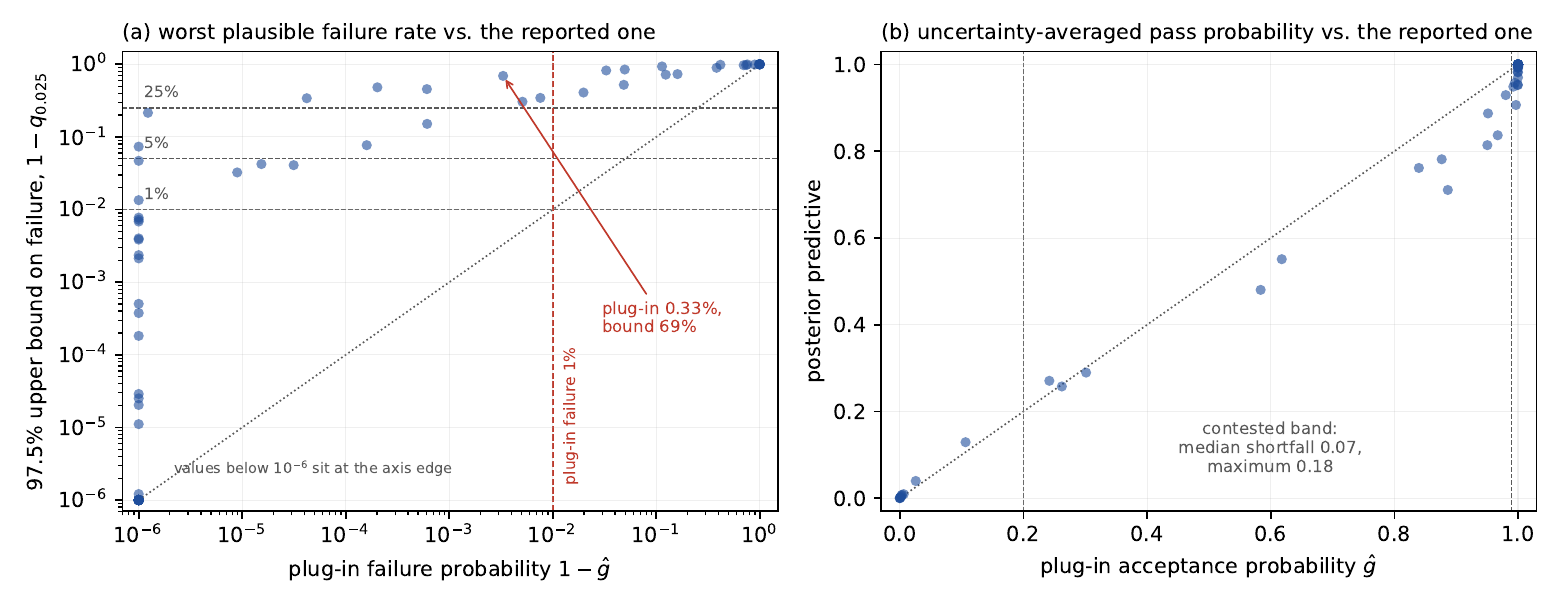}
\caption{What a single reported number hides, lot by lot; each dot is one of the ninety-eight
lots. \emph{(a)} Worst case. Horizontal: the probability of failing Stage~1 that the plug-in
reports. Vertical: the highest failure probability the evidence still allows ($97.5\%$ credible
bound, $1-q_{0.025}$). Both axes are logarithmic, and lots on the dotted diagonal are described
honestly by their plug-in. Left of the red line are the seventy-six lots that appear to fail less
than $1\%$ of the time; fifteen of them sit above the $1\%$ line, ten above $5\%$ and six above
$25\%$. Lots whose plug-in failure is below $10^{-6}$ are stacked at the left edge. \emph{(b)} Best
single number. Horizontal: the plug-in acceptance probability. Vertical: the posterior predictive
(Remark~\ref{rem:estimands}), the chance the next test passes once parameter uncertainty is
averaged in. Clear passes and clear failures sit on the diagonal; in the contested band between the
dashed lines the predictive falls below the plug-in, by a median of $0.07$ and up to $0.18$,
because averaging over plausible parameters pulls in values from the steep side of the acceptance
curve.}
\label{fig:lotsend}
\end{figure}

Panel (b) shows that the posterior predictive, the number an expected-loss calculation would
use, sits below the plug-in exactly in the contested band, by a gap comparable to the approximation
error the exact-evaluation paper removes. That the contested band holds only $12\%$ of these lots
does not weaken the result: manufacturing is not run at the boundary, but the boundary is where
escalation, rejection and investigation decisions are taken.

By source (Appendix~\ref{app:bysource}), the ten commercial batches at $\mest=50$--$55$ are
determined at every lot and the wide cases all sit at $\mest=10$ or $21$; but $\mest$ orders widths
only when position is held roughly fixed, and across four sources it is not.

\paragraph{Escalating the test size does not resolve the contested lots.} Pushing the same
posteriors through the Stage-2 wedge at $\ntest=30$, $k=2.0$ carries seven further lots into the
clearly-passing band, while the median width among the five still contested rises to $0.986$
from $0.769$: escalation changes which lots sit near a boundary, not whether some do
(Appendix~\ref{app:escalate}).

\subsubsection{A skewed parent, with its own likelihood}\label{sec:skewed}

Everything above is under a Normal parent, where the posterior is classical and the evaluator
is a one-dimensional integral; the exact-evaluation paper's point is that the parent is not Normal and
that the level of $g$ moves by a tenth when it is not. Whether the \emph{interval} moves is a
different question, and answering it needs a posterior for a non-Normal parent formed from the
same published summaries, not a box of Normal uncertainty carried over. The evaluator of
\cite{paper1} supplies one. For a location--scale family $X=\mu+\sigma Z$ with $Z$ standardized
and its shape held fixed, the joint density $f_n$ of the standardized summaries $(\bar Z,S_Z)$
follows from the engine's density of $(\sum Z_i,\sum Z_i^2)$ by a change of variables, and the
likelihood of the published pair $(\bar x,s)$ from $m$ units is
\begin{equation}\label{eq:sumlik}
\mathcal{L}(\mu,\sigma)=\sigma^{-2}\,f_m\!\Big(\frac{\bar x-\mu}{\sigma},\ \frac{s}{\sigma}\Big),
\end{equation}
so that under the reference prior $\pi(\mu,\sigma)\propto1/\sigma$, the prior that returns the
inverse-chi-square posterior when $Z$ is Normal, the posterior on $(\mu,\sigma)$ is available on
a grid from the summaries alone. The acceptance probability
$g(\mu,\sigma)=\Prob[(\mu+\sigma\bar Z,\,\sigma S_Z)\in A]$ is the standardized density for the
test size integrated over the transformed region, so one inversion per shape and sample size
serves every $(\mu,\sigma)$: the same object is the likelihood and the operating characteristic.
Appendix~\ref{app:nnlik} records the construction and its checks. Two shapes are carried, with $\mu$ and $\sigma$ learned and no correlation. The first is
calibrated from the three hundred published tablets, in thirty batches, that carry per-unit
values \cite{lyytikainen2024,dyrby2002}: a common skew-Normal shape of skewness $0.74$, $95\%$
interval $[0.38,\,0.99]$. Normality is rejected at $p=0.0002$, but so is a single common shape,
the skew sitting in the failing, segregating runs, and the estimate moves with the inclusion rule
(Appendix~\ref{app:shape}). The shapes are therefore declared scenarios, the second a stress
case, a standardized Beta$(1.5,8)$ of skewness $1.06$, the shape a disturbed process may have.

\begin{figure}[tbp]\centering
\includegraphics[width=0.97\linewidth]{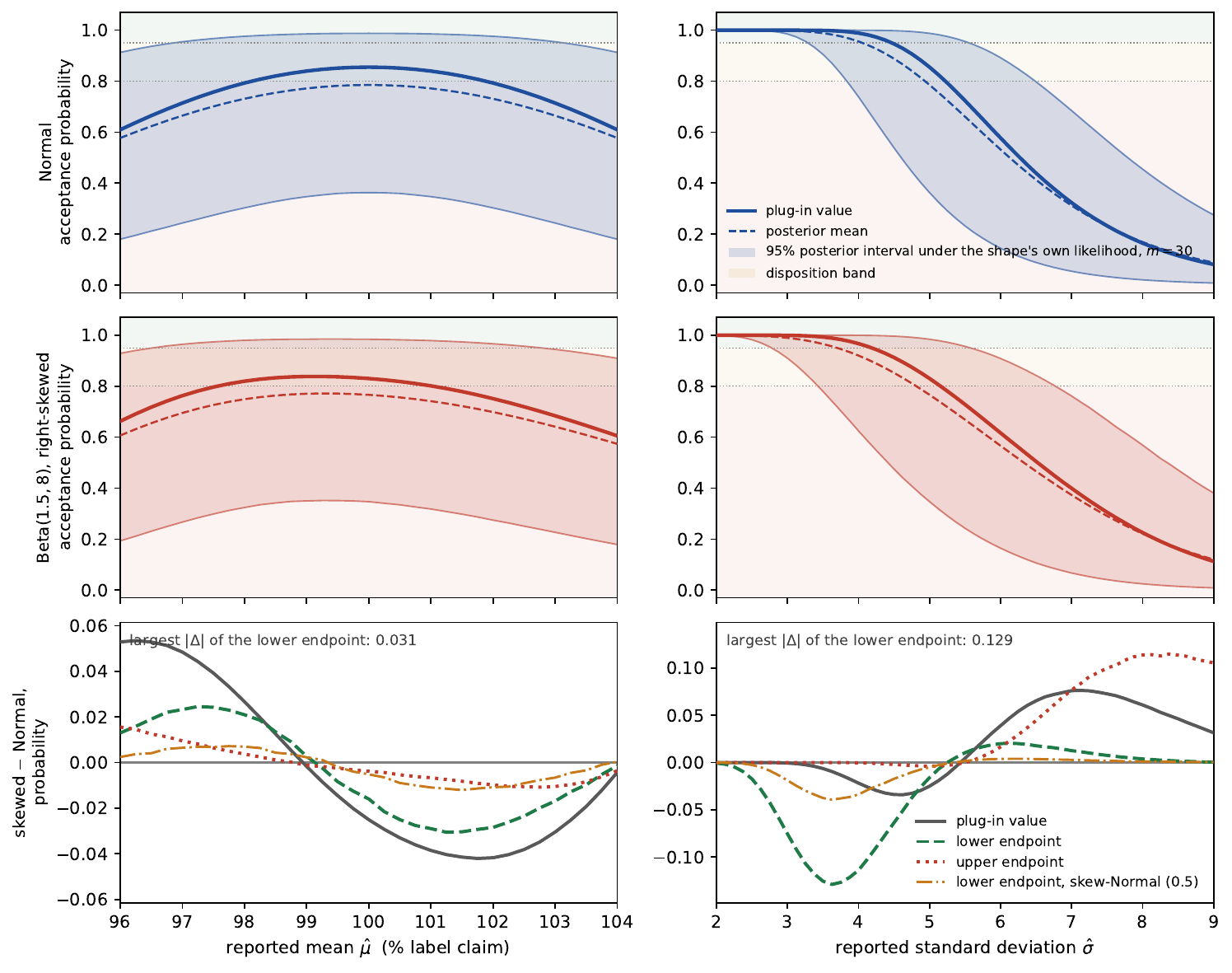}
\caption{Operating characteristics of $\langle905\rangle$ Stage 1 ($n=10$, $k=2.4$, $L=15$,
clipped acceptance value) with posterior intervals. Curve: the plug-in value; dashed: the
posterior mean; tint: the $95\%$ posterior interval under the parent's own likelihood
\eqref{eq:sumlik} for a thirty-unit calibration sample. Left column sweeps the reported mean
through target, right column the reported standard deviation; top row Normal, middle row the
standardized right-skewed Beta$(1.5,8)$, bottom row the difference, skewed minus Normal, of the
plug-in and the two endpoints, with the lower endpoint of a skew-Normal parent of skewness
$0.5$ for scale. Both the lower endpoint and the width move with the parent, by up to $0.13$
each on the spread sweep, where the lot appears to pass comfortably; on the location sweep the
width moves by at most $0.023$.}
\label{fig:oc}
\end{figure}

Figure~\ref{fig:oc} sweeps the Stage-1 rule across its operating characteristic with $m=30$.
The summary width statistics are nearly shape-free, $0.62$ and $0.64$ at target for the Normal
and the stress case, so the three-regime pattern of Table~\ref{tab:reallots} survives the parent.
Compared at the same reported summaries, though, the width moves by up to $0.129$ on the spread
sweep under the stress case and $0.069$ at the calibrated shape, both at $\hat\sigma\approx3.6$,
against at most $0.023$ on the location sweep, and the lower endpoint falls by the same amounts.
The mechanism is the likelihood: a right-skewed sample spread has a heavier right tail, so the
same reported $s$ admits a larger true $\sigma$, the posterior $97.5\%$ quantile at $m=30$ moving
from $6.72$ to $7.22$ under the stress case.

On the ninety-eight lots, at the calibrated shape the $2.5\%$ endpoint sits below the Normal one
in eighty-three lots, by up to $0.130$, with a median move of $5\times10^{-5}$. Under the stress
case, Table~\ref{tab:beta}, it falls by more than $0.10$ in sixteen lots and by up to $0.22$, at a
lot reporting $100.8\pm2.3$ from ten units whose Normal lower endpoint of $0.95$ becomes $0.73$
while its interval widens from $0.047$ to $0.267$; across the lots the paired width increase
reaches $0.221$, with a minimum of $-0.120$, so the width is not shape-free and its move is not
one-signed. The effect is concentrated where the Normal plug-in sits against the ceiling at a
tight reported spread, on lots that appear to pass comfortably. On a disturbed process this
interval effect and the exact-evaluation paper's level error are of the same order ($0.22$
against $0.10$ for the plug-in) and coincide in location; on a routine process both are small. A
margin could repair much of the mostly one-signed endpoint shift, but not the widening, nor the
fact that ten to thirty units do not identify the shape.

\subsection{The exposure rule at the sample sizes of \texorpdfstring{$1207$}{1207} real
occupational assessments}\label{sec:cehd}

The exposure study leaves pharmaceutical release entirely, which is the point: the phenomenon
is a property of estimating a parent, not of one standard. We use OSHA's Chemical Exposure
Health Data \cite{cehd}, a public-domain archive of over one million personal air-sampling
results, and the EN\,689 Annex F / AIHA compliance rule \cite{en689,aiha}, the half-plane of
Figure~\ref{fig:geom}(b) (in standardized units, Figure~\ref{fig:geomstd} of
Appendix~\ref{app:geomstd}):
\begin{equation}\label{eq:en689}
\ybar+K(n)\,s_y\ \le\ \ln(\mathrm{OEL}),\qquad y=\ln(\text{exposure}),
\end{equation}
with $K(n)$ the one-sided tolerance factor for $95\%$ coverage at $70\%$ confidence, computed
exactly as $K(n)=t'_{n-1,0.70}(z_{0.95}\sqrt n)/\sqrt n$.

\paragraph{Extraction.} From the survey years $2015$, $2017$, $2018$ and $2019$ we retain
personal, non-blank samples with positive numeric results, discard the analytical artifacts
recorded in the substance field, group by inspection $\times$ substance $\times$ unit, and keep
groups with $6\le n\le60$ and at most $20\%$ non-detects. Left-censoring is real in this
archive, $60\%$ of personal samples overall are non-detects, and is deliberately outside our
scope. The filter yields $1207$ exposure sets, median $n=8$, median GSD $2.20$. Because the archive
is updated twice a year, the extraction is pinned: the four per-year files as retrieved on
24~September~2026 reproduce the $1207$ sets and every number derived from them (sample sizes and
geometric standard deviations exactly, interval endpoints to floating-point rounding), and their
SHA-256 digests are recorded with the reference implementation.

\paragraph{Method.} For a lognormal parent the acceptance probability of \eqref{eq:en689} is a
one-dimensional integral,
\begin{equation}\label{eq:ghalf}
g(\mu,\sigma)=\E_V\,\Phi\!\Big(\sqrt n\big(\delta-K\sqrt{V/(n-1)}\big)\Big),
\qquad \delta=\frac{c-\mu}{\sigma},\quad V\sim\chi^2_{n-1},
\end{equation}
which is the noncentral-$t$ distribution function, $g(\delta)=T_{n-1}(-K\sqrt n;\,-\delta\sqrt n)$,
validated against quadrature, Monte Carlo at twelve $(n,\delta)$ points and direct simulation
of the rule (Appendix~\ref{app:engine}). For the parameters we use the standard noninformative
posterior, $\sigma^2\mid\text{data}\sim\text{Inv-}\chi^2(n-1,s_y^2)$ and
$\mu\mid\sigma\sim N(\ybar,\sigma^2/n)$. Under it the standardized margin $\delta$ itself has a closed posterior,
$\Prob(\delta\le d\mid\text{data})=1-T_{n-1}(\hat\delta\sqrt n;\,d\sqrt n)$ with
$\hat\delta=(c-\ybar)/s_y$. Because $g$ increases with $\delta$, the ends of the posterior interval for $\delta$ map directly to the ends of the interval for $g$. The endpoints are therefore
$g$ at the two posterior quantiles of $\delta$, found by root-finding with no outer sampling at
all. As in \S\ref{sec:905}, $c$ is calibrated \emph{per
set} so that every one of the $1207$ has point estimate $0.850$; only $n$ and the posterior
differ.

\begin{table}[h]\centering\small
\caption{Pushforward intervals on the compliance probability at the sample sizes of $1207$ real
occupational exposure assessments, each calibrated to a point estimate of $0.850$ so that the
limit, not the observed exposure, is what varies. The calibrated problem depends on $n$ alone,
so each of the $31$ distinct sample sizes is solved once, exactly, by the noncentral-$t$
root-finding of the Method paragraph, and the counts enter as weights; the endpoints carry no
outer sampling error, and the earlier Sobol' evaluation of the same table was high by up to
$2.7\times10^{-4}$ at the upper endpoint through its inner quadrature. The delta-method band returns an upper endpoint exceeding one in
every case.}
\label{tab:cehd}
\begin{tabular}{@{}rrccc@{}}
\toprule
$n$ & sets & median $95\%$ interval & median width & delta band $>1$\\
\midrule
$6$--$7$ & $567$ & $[0.074,\ 0.999]$ & $0.925$ & $100\%$\\
$8$--$9$ & $295$ & $[0.085,\ 0.999]$ & $0.915$ & $100\%$\\
$10$--$14$ & $246$ & $[0.096,\ 0.999]$ & $0.903$ & $100\%$\\
$15$--$24$ & $82$ & $[0.111,\ 0.999]$ & $0.888$ & $100\%$\\
$25$--$60$ & $17$ & $[0.125,\ 0.999]$ & $0.874$ & $100\%$\\
\midrule
all & $1207$ & $[0.085,\ 0.999]$ & $0.915$ & $100\%$\\
\bottomrule
\end{tabular}
\end{table}

Table~\ref{tab:cehd} is the paper's cleanest evidence, because the calibration strips out
everything except sample size. A compliance probability \emph{held} at $0.85$ carries a $95\%$
posterior interval of $[0.08,\,1.00]$: at the contested value the point estimate conveys almost
nothing. A median assessment estimates $\sigma$ from eight observations, so at $n=8$ the true
value is credibly anywhere in $[0.66\,\hat\sigma,\ 2.04\,\hat\sigma]$, a threefold band, against
$[0.80,\ 1.34]$ at $n=30$. Acceptance depends on the margin in units of $\sigma$, so that band
drags the calibrated $\hat\delta=2.67$ across roughly $[1.31,\,4.04]$. Over that range $g$ runs
from about $0.15$ to $0.999$.

\begin{remark}[What the data does, and does not, contribute]\label{rem:scaleinv}
Because $c$ is calibrated per set, the computation is scale-free in $s_y$: $\hat\delta$ solves
$g(\delta,n,K)=0.85$, which involves $n$ alone. \emph{The widths in Table~\ref{tab:cehd} are
therefore functions of $n$ and not of the observed dispersion}, an algebraic identity rather than
an empirical finding. What the archive supplies is the \emph{empirical distribution of sample
sizes in regulatory practice}: median $8$, with $567$ of $1207$ assessments at $n\le7$; that is
why the table is computed once per distinct $n$. Nor does the archive establish the modeling
assumptions: a median GSD of $2.20$ describes dispersion, not lognormality, and no claim is made
that the groups are homogeneous similar-exposure groups or comparable with exposure limits. This is
a controlled study weighted by real sample sizes, not an analysis of $1207$ compliance decisions
actually taken (Open Problem~\ref{op:limits}).
\end{remark}

Two further readings deserve emphasis. \emph{The delta band fails universally, and
structurally.} Its upper endpoint exceeds one whenever $1.96\,\mathrm{sd}>1-0.85=0.15$, and the
standard deviation implied by the observed widths is about $0.234$, so its failure on $1207$ of
$1207$ assessments is a deterministic consequence of the ceiling rather than an empirical
tally. \emph{Escalation barely helps here, because two
effects almost exactly cancel.} The study recalibrates the limit at every sample size (design~A of
Table~\ref{tab:designs}, compared with the alternatives in \S\ref{sec:unify}). As $n$ grows,
the posterior for $\sigma$ narrows, with $\mathrm{sd}(\ln\sigma)\approx1/\sqrt{2(n-1)}$. But $g$
simultaneously becomes a steeper function of $\sigma$, because a larger sample makes the rule
discriminate more sharply. The interval width scales with the product of the two, and Figure~\ref{fig:cehd} and Table~\ref{tab:cancel} (Appendix~\ref{app:cancelholds}) show the two curves crossing past each other.

\begin{figure}[ht]\centering
\includegraphics[width=\linewidth]{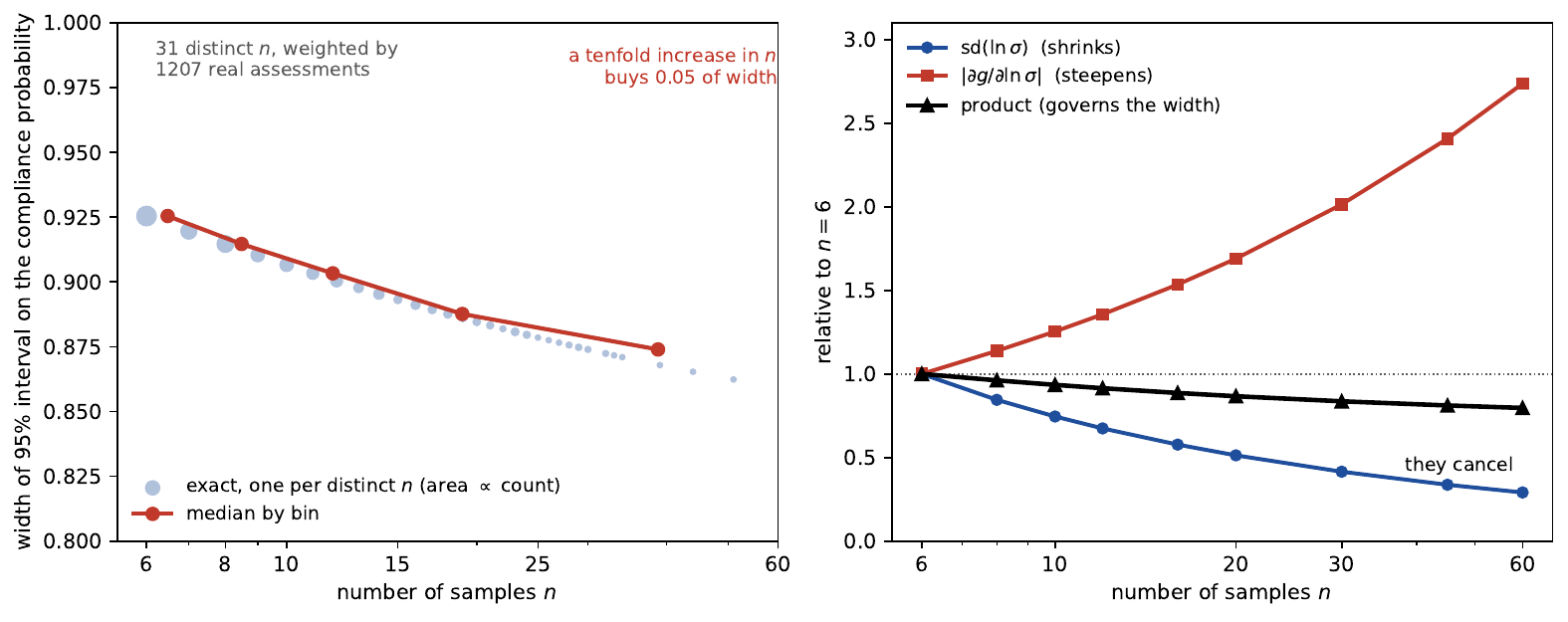}
\caption{Left: width of the $95\%$ interval on the compliance probability against sample size,
one exact point per distinct $n$ with marker area proportional to how many of the $1207$
assessments sit there; a tenfold increase in $n$, from $6$ to $60$, buys $0.066$ of width,
$0.925$ against $0.859$. Right: why. The
posterior spread of $\ln\sigma$ shrinks by a factor $3.4$ between $n=6$ and $n=60$ while the
sensitivity of the acceptance probability to $\ln\sigma$ steepens by a factor $2.7$; their
product falls only to $0.80$ of its starting value.}\label{fig:cehd}
\end{figure}

\noindent Stated plainly: \emph{the information a larger sample gives about $\sigma$ is very
nearly canceled by the leverage $\sigma$ gains over the answer.} This is sharper than
diminishing returns, a near-exact offset, and it is why Table~\ref{tab:cehd} is so close to
flat in $n$. It bears on a live position in the field: a recent assessment of the revised
EN\,689 argues that requiring three measurements rather than one ``will therefore improve the
precision and accuracy of the risk assessment'' \cite{annweh2022}. That is correct for the
exposure \emph{parameter} and very nearly false for the \emph{compliance probability}, and the
distinction is invisible without the calculation.

\section{Why the three studies agree}\label{sec:unify}

The three studies of \S\ref{sec:ladder}, the analytic anchor and the two empirical sectors, are
three views of one structure. Where $\sigma$ is the only uncertain parameter, the
width is exactly \eqref{eq:exactw}: $g$ evaluated at two posterior quantiles of $\sigma$, whose
ratio is fixed by $n$. Linearizing $g$ between those two points writes the width as an
\emph{amplification}, the magnitude $|\partial g/\partial\ln\sigma|$ of the slope of $g$ in $\ln\sigma$, times a posterior spread,
\begin{equation}\label{eq:widthprod}
\text{width}\;\approx\;2\times1.96\;\Big|\frac{\partial g}{\partial\ln\sigma}\Big|\times\mathrm{sd}(\ln\sigma),
\qquad \mathrm{sd}(\ln\sigma)\approx\frac1{\sqrt{2(n-1)}} .
\end{equation}
This is a scaling and not an identity. It holds where the two quantiles fall on a near-linear
stretch of $g$. Saturation, with $g$ near $0$ or $1$, truncates it, and no series taken at a
point reaches across the half unit of $\ln\sigma$ that the posterior spans.
Section~\ref{sec:transfer} measures where it can be trusted: in the contested band, which is
the range this section is about. The first factor, the amplification, grows like $\sqrt n$ with an additive term beside the
leading one, not a different exponent,
\begin{equation}\label{eq:ampscale}
\Big|\frac{\partial g}{\partial\ln\sigma}\Big|\;=\;C\,\sqrt n\;+\;D\;+\;O(n^{-1/2}),
\end{equation}
so a power law fitted over the range a sample-size table covers reports an exponent near $0.44$
with every appearance of precision (Table~\ref{tab:ampscale}, Appendix~\ref{app:constants}).

For the one-sided rule of \S\ref{sec:cehd} both constants are closed, $C=0.250028$ and
$D=0.266992$, and so they are for the pure-spread rule of Proposition~\ref{prop:chi}
(Appendix~\ref{app:constants}).

The second factor is the \emph{shape} in $g$. Writing $\mathcal{A}(g)$ for each curve normalized
by its own peak,
\begin{equation}\label{eq:shape}
\mathcal{A}_n(g)\;=\;\frac{\varphi(z_g)}{\varphi(0)}\Bigl(1+t_1\,z_g\,n^{-1/2}+O(n^{-1})\Bigr),
\qquad z_g=\Phi^{-1}(g),
\end{equation}
with $t_1=0.830$ for the one-sided rule, so $\mathcal{A}$ is $n$-free only to leading order,
and least so toward the ends of the range. 

Substituting \eqref{eq:ampscale} into \eqref{eq:widthprod} gives the result that ties the paper
together:
\begin{equation}\label{eq:cancel}
\text{width}\;\propto\;\frac{C\sqrt n+D+\cdots}{\sqrt{2(n-1)}}\,\mathcal{A}(g)
\;\longrightarrow\;\frac{C}{\sqrt2}\,\mathcal{A}(g),
\qquad \frac{C}{\sqrt2}=0.1768 ,
\end{equation}
\emph{in which $n$ cancels in the limit}. At finite $n$ the residual has relative size
$D/(C\sqrt n)$. The product descends toward $0.1768$, from $0.2525$ at $n=6$ to $0.2011$ at
$n=60$. That $20\%$ fall measures how far $n=60$ still is from the limit. Pure $\sqrt n$ scaling
would predict only $8\%$, and the gap between the two is what reveals the constant $D$. Equation~\eqref{eq:cancel} is a statement about the \emph{rate}, which is why it is written
with a proportionality sign; the width itself is \eqref{eq:exactw}, and the tangent form misstates
both its shape and its constant when the two sample sizes are comparable
(Appendix~\ref{app:cancelholds}). The qualitative claim survives intact: the information a larger
sample supplies about $\sigma$ is very nearly offset by the leverage $\sigma$ thereby acquires.

\paragraph{What the cancellation assumes.} Equation~\eqref{eq:cancel} holds the operating point
fixed while $n$ grows. Recalibrating the limit is one way to arrange that; a process sitting on a
\emph{fixed} boundary is another, and there too the width need not vanish: for the pure-spread
rule with an observed $s$ at the limit and $\mest=\ntest$, the posterior of $\Pacc$ is exactly
uniform and the width is $0.95$ at every sample size. Away from a boundary, a coupled design with a
fixed limit does narrow, by saturation rather than information. The same cancellation follows
with $\sigma$ known, on the location coordinate alone, so what cancels is a \emph{rate}: any
coordinate whose posterior contracts at the rule's own rate cancels in the same way
(Appendix~\ref{app:cancelholds}).

\paragraph{The three designs, side by side.} A sample size and an acceptance constant
\emph{together} fix how hard a test is, so moving one and leaving the other changes the test:
a process at $\mu=100$, $\sigma=5$ passes the Stage-1 constant with probability $0.855$ at ten
units and $0.970$ at thirty. Calibrating each design to report $g=0.85$ with width $0.821$ at
ten units makes any later divergence attributable to the design alone. Design~C is the one a
reader is most likely in: the rule stays at the $\langle905\rangle$ constants $(10,2.4)$ and only
the estimation sample grows. Design~A re-sets the limit at every $n$ with $\mest=\ntest$, a
controlled comparison that holds $\mathcal{A}(g)$ still. Design~B couples the sizes and leaves
the limit where it is. Table~\ref{tab:designs} runs them out.

\begin{table}[h]\centering\small
\caption{Three designs, all calibrated to $g=0.85$ and $W=0.821$ at ten units, in the
location-only half-plane model of Appendix~\ref{app:wsurface}, $\sigma$ known and the mean
unknown, so that each is a slice of \eqref{eq:wsurface}; the entries are not a wedge
calculation, and the wedge analogue of design~C is the fixed Stage-1 question of
Appendix~\ref{app:twostage}. The $g_0$ column is
what the table is for: B reaches a narrow width by moving its own question to an endpoint; only
A and C answer the same question at every size, and only C leaves the rule as the standard
writes it.}
\label{tab:designs}
\begin{tabular}{@{}llcccccc@{}}
\toprule
& units & $10$ & $20$ & $40$ & $80$ & $200$ & $1000$\\
\midrule
C\ \ test fixed ($\ntest=10$)
 & $g_0$ & $0.850$ & $0.850$ & $0.850$ & $0.850$ & $0.850$ & $0.850$\\
 & $W$   & $0.821$ & $0.629$ & $0.456$ & $0.324$ & $0.205$ & $0.091$\\[3pt]
A\ \ coupled, limit recalibrated
 & $g_0$ & $0.850$ & $0.850$ & $0.850$ & $0.850$ & $0.850$ & $0.850$\\
 & $W$   & $0.821$ & $0.821$ & $0.821$ & $0.821$ & $0.821$ & $0.821$\\[3pt]
B\ \ coupled, limit fixed
 & $g_0$ & $0.850$ & $0.929$ & $0.981$ & $0.998$ & $1.000$ & $1.000$\\
 & $W$   & $0.821$ & $0.689$ & $0.455$ & $0.166$ & $0.004$ & $0.000$\\
\bottomrule
\end{tabular}
\end{table}

A moves the goalposts, B changes the examination, and C alone asks the same thing twice. A is
flat because recalibration pins $\mathcal{A}(g)$, which is \eqref{eq:cancel}; B collapses
because the rule stops discriminating, which is saturation; C descends at $\mest^{-1/2}$
because the question is fixed and the evidence accumulates, which is ordinary inference.
\emph{Where a standard fixes the constants, as $\langle905\rangle$ fixes $(10,2.4)$ and
$(30,2.0)$, C is the design that remains, and along it more measurement narrows the
interval at the surface's rate.} Its cost is real: the width falls only as $\sqrt{\mest}$, so pinning an interior
probability to $\pm0.05$ at $\ntest=10$ takes on the order of a thousand units.

\paragraph{Where it fails.} The cancellation is a statement about regular families; where a
parameter is estimable faster than $n^{-1/2}$, as the support boundary of a Uniform parent is, the
width does fall (Appendix~\ref{app:cancelholds}).

The three-regime shape of \S\ref{sec:reallots}, low at both ends and high between, is the
same equation read along $g$: width is proportional
to $\mathcal{A}(g)$, which vanishes as the rule saturates, so an interval collapses for lots that
clearly pass and for lots that clearly fail and opens only between them. And the exposure
assessments of \S\ref{sec:cehd} sit permanently in the open region not because their sector
differs but because $n\approx8$ leaves $\mathrm{sd}(\ln\sigma)$ large while $\mathcal{A}$ is
near its maximum. The equation has a classical relative in trial design: assurance, the mean of this pushforward,
is known not to approach one as the trial grows \cite{ohagan2005}, and \eqref{eq:cancel} is a
width-level analogue with an explicit shape and rate, though a mean that stays away from one does
not by itself fix a width.

\subsection{Predicting the width for a geometry not evaluated}\label{sec:transfer}

If \eqref{eq:cancel} is right, it should price a rule this paper carries no data for. We test it
on the two remaining geometries of Figure~\ref{fig:geom}, panels (c) and (d): capability release at $n=30$ and
percent-within-limits at $n=10$, the exact-evaluation paper's other two applications. Neither has a study of its own here: the published capability data \cite{jiang2026b} are of
undisclosed provenance, and the one public per-lot percent-within-limits record \cite{wsdot517}
was located after this test was fixed and is left as out-of-sample ground. As
predictions they cost nothing: each rule is calibrated to a target operating point, its interval
computed from $1500$ posterior draws, and compared against the width read off the tent master
curve at the same $g$. The transfer errors are single runs without a replicate standard error,
so differences of a few hundredths between the columns of Table~\ref{tab:transfer} are within
their unresolved outer error.

\begin{table}[h]\centering\small
\caption{Predicting the interval width of two unseen geometries from the plain-tent master
curve. The comparison column is the law's own residual in $n$: the spread of tent widths across
$n\in\{3,10,30\}$ at the same $g$. Above $g=0.3$, changing the geometry costs less than changing
the sample size does, at the configurations tested here.}
\label{tab:transfer}
\begin{tabular}{@{}lccccc@{}}
\toprule
& \multicolumn{2}{c}{geometry transfer error} & &
\multicolumn{2}{c}{law's own $n$-residual}\\
\cmidrule(r){2-3}\cmidrule(l){5-6}
operating point & median & max & & median & max\\
\midrule
contested and passing, $g>0.3$ & $0.029$ & $0.071$ & & $0.052$ & $0.107$\\
clearly failing, $g\le0.3$     & $0.094$ & $0.128$ & & $0.089$ & $0.145$\\
\bottomrule
\end{tabular}
\end{table}

Table~\ref{tab:transfer} gives the result. Above $g=0.3$, which covers the contested band and
everything above it and so is where release decisions are actually taken, the prediction
transfers well. Swapping a wedge of slope $k=1$ for a wedge of slope $k\approx4$, or for a
curved dome, moves the predicted width by a median of $0.029$ and never by more than $0.071$.
That is \emph{smaller} than the median error of $0.052$ the curve already incurs because the
sample size does not cancel perfectly. Over this grid, geometry is the lesser nuisance: it moves
the width less than the sample size does. The experiment is exploratory: seven target probabilities, on-target Normal parents,
and single unreplicated runs, which resolves the ordering of the two columns but not a threshold.

Below $g=0.3$ the prediction degrades to a median error of $0.094$, always in the same
direction: both unseen rules carry wider intervals than the tent predicts. Both are more spread-dominated than the $k=1$ tent on which $\mathcal{A}$ was fitted,
capability having $k=3c\approx3.3$ to $4.7$ over this sweep, so $\mathcal{A}$ depends weakly on the
effective spread weight, the axis along which \cite{paper1} already organizes its applications. To
price a new rule from its operating point alone, $g$ sufficed above $0.3$ and $k$ was needed below
it, over seven target probabilities at on-target Normal parents; whether that holds off target is untested.

\subsection{Where the uncertainty lives}\label{sec:decomp}

Making one coordinate of $\theta$ uncertain at a time at the $\langle905\rangle$ operating point
of \S\ref{sec:905}, under its law in \eqref{eq:cspost} with the others at the plug-in, gives
widths of $0.885$ for the within-lot spread $w$, $0.159$ for the mean $\mu$ and $0.289$ for the
correlation $\rho$ (Table~\ref{tab:decomp}, Appendix~\ref{app:decomp}). The ordering
$w>\rho\approx\mu$ says where to spend measurement effort, given uncertainty of the stated size in
each coordinate. It prices those ranges together with the rule's own weighting of spread against
location, so it is a statement about $\langle905\rangle$ and rules that weight spread as it does,
not an intrinsic property of the coordinates; the bands are not an additive decomposition.

\section{When an approximation in current use may replace the pushforward}\label{sec:corner}

Treatment 3 of \S\ref{sec:three} evaluates $g$ at the corners of a box confidence set, which is
valid only if $g$ is monotone in each coordinate. At the $\langle905\rangle$ operating point it is
not monotone in the mean: the rule is symmetric about target, so over a box containing the target
the maximum of $g$ sits at the target mean on the low-spread face, not at a corner. With the box of
Appendix~\ref{app:corner} the eight corners give $[0.018,\,0.992]$ against a true range of
$[0.018,\,0.999]$: the minimum is attained and the maximum is missed by $0.007$, seven hundred
times the evaluator's resolution. The repair is cheap: evaluate the admissible mean closest to
target for the maximum on every face, or search the box. For a Normal parent the corner values have
a closed approximation, Bergum's construction in the spread coordinate with the mean channel
frozen, which agrees with the exact corner values on the ninety-eight lots to a median of
$3\times10^{-5}$ and a maximum of $0.068$ (Appendix~\ref{app:cornerdetail}). The two constructions
answer different questions, a posterior probability against a worst case over a set whose
coverage must be stated separately \cite{lewis2016}.

\subsection{The delta band, given its best case}\label{sec:link}

An objection is owed a hearing. No competent analyst bands a bounded parameter on the raw
scale; one bands on a link and maps back, which cannot leave $[0,1]$. With $\mathrm{sd}$ the
gradient standard deviation used throughout, the logit band is
$\mathrm{expit}\big(\mathrm{logit}\,\hat g\pm z\,\mathrm{sd}/[\hat g(1-\hat g)]\big)$, and the
probit band replaces $\hat g(1-\hat g)$ by $\phi(\Phi^{-1}(\hat g))$.

\begin{table}[h]\centering\small
\caption{The delta band at its best. Exposure study: medians over all $1207$ assessments, each
calibrated to $\hat g=0.85$. $\langle905\rangle$: the Stage-2 operating point under \eqref{eq:cspost}, the delta band with
the full posterior covariance and a converged gradient, centered at the plug-in. The link
repairs the range violation everywhere and comes within about a hundredth of the exact
endpoints in both cases.}
\label{tab:link}
\begin{tabular}{@{}lcccc@{}}
\toprule
& \multicolumn{2}{c}{$1207$ exposure assessments} & \multicolumn{2}{c}{$\langle905\rangle$
Stage 2}\\
\cmidrule(r){2-3}\cmidrule(l){4-5}
band & interval & width & interval & width\\
\midrule
naive $\hat g\pm z\,\mathrm{sd}$ & $[0.319,\,1.381]$ & $1.062$ & $[0.245,\,1.455]$ & $1.210$\\
clipped to $[0,1]$               & $[0.319,\,1.000]$ & $0.681$ & $[0.245,\,1.000]$ & $0.755$\\
logit                            & $[0.081,\,0.997]$ & $0.916$ & $[0.047,\,0.998]$ & $0.952$\\
probit                           & $[0.107,\,0.9995]$ & $0.892$ & $[0.060,\,1.000]$ & $0.940$\\
\emph{exact pushforward}       & $[0.085,\,0.999]$ & $0.915$ &
$[0.035,\,0.983]$ & $0.948$\\
\bottomrule
\end{tabular}
\end{table}

Table~\ref{tab:link} separates the two failures of the naive band. The range violation is gross,
and clipping removes it at the top but leaves the floor at $0.245$ against a true $0.035$ at
$\langle905\rangle$, and at $0.319$ against $0.085$ on the exposure study. The accuracy of the link
repair is the surprise: on the exposure study the logit band reproduces the exact endpoints to a
median of $0.004$ and $0.002$, and at $\langle905\rangle$ it gives $[0.047,\,0.998]$ against an
exact $[0.035,\,0.983]$. That accuracy is an observation about two points with no criterion behind
it, which is a reason to verify the shortcut once per rule, at the cost of the exact evaluation it
was meant to avoid (Appendices~\ref{app:cornerdetail} and \ref{app:delta}).

\section{Cost, reproducibility, and extensions}\label{sec:cost}

An interval costs $J$ evaluations of $g$, each on a grid whose dimension does not grow with
$n$: at a fixed grid a $J$-node interval at $n=30$ costs what it costs at $n=10$ and at
$n=1000$, though the grid a prescribed accuracy demands may depend on $n$, on the parent and on
the region (Appendix~\ref{app:engine}). Outside the exploratory two-stage
appendix no lot is simulated, and the one stochastic element, sampling $\theta$ for the outer
integral, is fixed by a seed, so the endpoints are reproducible under audit. A seeded nested simulation is
reproducible in the same sense; what exactness removes is the inner error source, so that the
width is attributable to the parameter law alone.  The construction generalizes in the direction practice needs: a
posterior on $\theta$ accumulated across lots pushes forward to a posterior on the acceptance
probability by quadrature alone, turning ``what is the release risk for this process, and how
sure are we'' into a computation rather than a simulation study. Two extensions are immediate
and untreated: censored parents, which the exposure archive demands, and sequential two-stage
rules, where the stages couple and whose exact treatment remains open
(Appendix~\ref{app:twostage}).

\section{Conclusion}\label{sec:conclusion}

The exact-evaluation paper \cite{paper1} showed that the acceptance probability of a mean--spread
release rule can be computed exactly, and that the incumbent approximation gets it wrong by an
\emph{approximation error} of the order of a tenth, with a sign that changes across the range.
This paper asks what that exact number is worth once one admits that the parent it conditions
on was estimated, and the answer is that a larger uncertainty sits underneath the first:
across the contested lots among ninety-eight published content-uniformity lots the median
interval width is $W=0.77$, and on $1207$ real exposure assessments, each calibrated to report
$0.850$, that number carries $[0.085,\,0.999]$. The first is an approximation error under a stated parent; the second is a conditional
posterior width, a statement about what the data leave open given the model and prior. They are a
sequence: making the evaluation exact removes the first and so makes the second visible and
attributable. Exact evaluation under an assumed parent does not validate that parent.
What keeps this from being merely discouraging is that the width is not arbitrary. It follows a
single curve, $\mathcal{A}(g)$ through \eqref{eq:cancel}, in which the geometry of the rule very
nearly cancels and the sample size cancels along a design that holds the operating point fixed;
for a pure-spread rule under a Normal parent it is a closed formula in the plug-in value and the
two sample sizes (Proposition~\ref{prop:chi}). On a disturbed process, parent shape moves the
interval as much as it moves the acceptance probability itself; on routine lots both effects are a
few hundredths. The practical consequences are conditional, and are collected below.

\par\medskip\noindent\rule{\linewidth}{0.4pt}
{\small\ifsubmission\singlespacing\fi
\noindent\emph{What to do with a reported acceptance probability}
\begin{enumerate}\itemsep2pt\parskip0pt
\item \textbf{Quote an interval, not a number, and check it even when the point estimate looks
settled.} The median width is below $0.001$ for $\hat g>0.99$ and $0.77$ in the contested band,
yet eight of the seventy-six apparently-passing lots carry a width above $0.10$
(\S\ref{sec:reallots}). For a pure-spread rule under a Normal parent Table~\ref{tab:chilaw} gives
the lower endpoint from the plug-in value alone; for the $\langle905\rangle$ wedge compute it
directly, which the one-dimensional evaluator makes cheap.
\item \textbf{Expect a larger sample to narrow it, unless the test grows with the sample.}
Against a fixed Stage-1 question the width falls from $0.97$ at $\mest=10$ to $0.72$ at
$\mest=30$; along the coupled path at a recalibrated limit it falls only $0.925\to0.859$ from
$n=6$ to $n=60$ (\S\ref{sec:unify}).
\item \textbf{To narrow it, reduce the uncertainty about the spread.} At the $\langle905\rangle$
point the one-at-a-time widths for the within-lot spread, the mean and the correlation are
$0.885$, $0.159$ and $0.289$ (\S\ref{sec:decomp}).
\item \textbf{Integrate over $\theta$ with a scheme that resolves the quantiles, and quote the
endpoint's own error} (Remark~\ref{rem:conv}).
\item \textbf{Check monotonicity before trusting a corner band.} At the $\langle905\rangle$ point
the corners attain the minimum but miss the maximum, which sits at the target mean inside the box
(\S\ref{sec:corner}).
\item \textbf{A delta band on a link came within about a hundredth of the exact endpoints at
both operating points, with no criterion that says when it will}; the symmetric band returns an
upper endpoint above one at $\langle905\rangle$ and in all $1207$ exposure assessments
(\S\ref{sec:link}).
\item \textbf{To predict the width for a rule not yet evaluated, the operating point sufficed
above $g=0.3$} in the configurations tested; below it the effective spread weight $k$ was needed
as well (\S\ref{sec:transfer}).
\item \textbf{Allow for parent shape on both the endpoint and the width.} Under a strong right
skew the $2.5\%$ endpoint falls by up to $0.22$ and the width rises by up to $0.22$, on lots that
appear to pass comfortably; ten to thirty units do not identify the shape (\S\ref{sec:skewed}).
\end{enumerate}\par}
\noindent\rule{\linewidth}{0.4pt}\par\medskip

The through-line is that exactness does not make the answer certain; it makes the remaining
uncertainty attributable. Every statement above names which coordinate of the parent, or which
region of the operating characteristic, is responsible, and none could be made while a
computational error of unknown sign was still in the way.

Five problems are left open and recorded in Appendix~\ref{app:open}: certified tail quantiles
for a non-Normal parent, a coherent parameter law for a nonregular parent, left-censored parents,
real exposure limits in place of calibrated ones, and uncertainty about the parent family.

\ifblind\else
\section*{Acknowledgments}
I thank Kevin Chu and Ron Yu for many discussions of the variance-distribution problems from
which the present work grew, and Kevin Chu, Nicole Tin, Somdatta Bhattacharya and Chiajen Lai
for extensive conversation about the pharmaceutical applications and in-silico tools that
motivated this extension. I thank Mandeep Chauhan, Mike Salem, Ash Baghai, Wei-Ting Liu, Shikhar Mohan,
Adedayo Sanni, Chris Remmers, Caroline Branch, Kiara Cui, Dana Caulder, Joanna Koziara, Victor Rucker, Jennifer Hedborn and
Henry Morrison for motivating discussions at Gilead Sciences.
\fi

\section*{Disclosure of interest}
\ifblind
The author reports no conflict of interest.
\else
The author is employed by Gilead Sciences. The author reports no conflict of interest.
\fi

\section*{Author contributions}
\ifblind
The sole author carried out all of the work.
\else
Remus Osan: Conceptualization, Methodology, Software, Formal analysis, Investigation, Data
curation, Visualization, Writing -- original draft, Writing -- review and editing.
\fi

\section*{Funding}
No funding was obtained for the reported work.

\section*{Declaration of generative AI use}
Claude models (Anthropic) assisted with derivations, code implementation, and manuscript
drafting. All outputs were verified against independent symbolic and numerical computations
(Python, MATLAB).

\section*{Data availability}
Every dataset used is public: the ninety-eight content-uniformity lots are the reported per-lot
summaries of four published studies cited in \S\ref{sec:reallots}, the per-tablet values behind
the parent-shape estimate of \S\ref{sec:skewed} are the supplementary tables of one of them and
the reference contents of the Copenhagen tablet set \cite{dyrby2002}, the exposure records are
the public-domain OSHA Chemical Exposure Health Data archive, and the percent-within-limits
record of a state agency report is cited as out-of-sample ground and is not used in any
calculation here. The Copenhagen archive is read by its batch identifiers, thirty-one batches of
ten tablets, and the pilot batches are the ones carried into the shape analysis; the scale
mapping is given in Appendix~\ref{app:nnlik}. The code that reproduces the results, figures and
tables in the paper and its supplement accompanies the published version of this article\ifsubmission.\else{} and
will be added to a later version of this preprint.\fi

\bigskip
\noindent\textbf{Supplementary material.} Appendices~\ref{app:engine}--\ref{app:open}, after the
references, contain the evaluator and its validation, the extended forms of
Proposition~\ref{prop:chi}, the three treatments carried through a common operating point, the
closed-form $n=3$ anchor, three real lots worked in detail, the two-stage $\langle905\rangle$
sequence, the width law and its constants, and the open problems.

\clearpage
\begin{center}
{\Large\bfseries Supplementary Material}
\end{center}
\vspace{0.5em}
\noindent Lettered sections and S-numbered floats and equations (Appendix~A, Figure~S1,
equation~(S1), \dots) belong to this supplement; plain-numbered references (\S1,
Proposition~1, Table~1, \dots) are to the article above.
\vspace{1em}

\appendix
\setcounter{figure}{0}\renewcommand{\thefigure}{S\arabic{figure}}
\renewcommand{\theHfigure}{S\arabic{figure}}
\setcounter{table}{0}\renewcommand{\thetable}{S\arabic{table}}
\renewcommand{\theHtable}{S\arabic{table}}
\setcounter{equation}{0}\renewcommand{\theequation}{S\arabic{equation}}
\renewcommand{\theHequation}{S\arabic{equation}}

\section{The evaluator: discretization and validation}\label{app:engine}

Theorem~\ref{thm:engine} is an identity. What this paper computes is a discretization of it, and
the two are not the same object: the identity is exact, the values are not. This appendix states
the scheme and reports what the values have been checked against, so that the numerical
foundation can be assessed without \cite{paper1} in hand.

\subsection{The scheme}

The parent is centered, $Y=X-\mu$, and the inversion \eqref{eq:inv} is carried out for
$(T_1,T_2)=(\sum_i Y_i,\sum_i Y_i^{2})$ on a uniform $N\times N$ grid,
$\tau_1\in[\tau_1^{\min},\tau_1^{\max})$ and $\tau_2\in[\tau_2^{\min},\tau_2^{\max})$ with
spacings $\delta_1,\delta_2$. The conjugate frequency grids are
$t_k=(k-N/2)\,\delta t$ and $u_l=(l-N/2)\,\delta u$ with $\delta t\,\delta_1=\delta u\,\delta_2=2\pi/N$,
and the density is recovered by one two-dimensional FFT,
\begin{equation}\label{eq:fft}
f[a,b]=\frac{\delta t\,\delta u}{(2\pi)^{2}}\,(-1)^{a+b}\,
\operatorname{FFT}_2\bigl(\psi^{\,n}\cdot\text{phase}\bigr)[a,b],
\end{equation}
the phase factor carrying the grid origin. The acceptance probability is then the sum of
$f$ over the cells whose $(\Xbar,s)$ image lies in $A$, times $\delta_1\delta_2$. Every reported
value in this paper uses $N=1024$ unless stated otherwise; the anchor of \S\ref{sec:anchor} uses
$N=4096$.

Three approximations separate \eqref{eq:fft} from Theorem~\ref{thm:engine}, and they are
different in kind.
\begin{enumerate}\itemsep2pt
\item \textbf{Truncation.} The grid is finite, so mass outside it would be lost. At the
anchor none is: the support $T_1\in[-1.5,1.5]$, $T_2\in[0,0.75]$ lies inside the box, and the
density summed over the whole grid is one to machine precision. Summed over the admitted cells
it is $0.999771$, so the $2.3\times10^{-4}$ deficit is mass that the inversion places outside
the admissible cone and the cell-admission mask then discards: an inversion and masking effect,
not omitted support.
\item \textbf{Quadrature.} The frequency integral becomes a finite sum, and the density a
midpoint rule. This is the error that shrinks as $N$ grows, and the ladders below measure it.
\item \textbf{Cell admission.} A cell is admitted on a test at its center, so cells straddling
$\partial A$ are counted whole or not at all. \emph{This error changes under refinement too, but
not in the same way as the quadrature error, and a refinement study mixes the two rather than
separating them}, which is why it is measured separately, by
varying the convention at fixed $N$; Appendix~\ref{app:boundary} reports that measurement.
\end{enumerate}

\subsection{What the values were checked against}

Table~\ref{tab:validation} collects the checks. Two are references external to the scheme --- an
analytic constant and a distribution in \texttt{scipy} --- and the rest are internal consistency
of a kind that can only bound an error, not certify it.

\begin{table}[h]\centering\small
\caption{Validation of the discretized evaluator. Rows above the rule compare against a target
computed \emph{outside} the scheme; rows below vary a parameter of the scheme and report the
movement. Every value is reproduced by the check suite of the reference implementation; the count of checks it
reports is not quoted here, being a number that moves with the checker rather than with the
paper.}
\label{tab:validation}
\begin{tabular}{@{}p{0.30\linewidth}p{0.36\linewidth}p{0.28\linewidth}@{}}
\toprule
Check & Reference & Agreement\\
\midrule
$n=3$ Uniform anchor, $N=4096$ & $6+18\ln\tfrac34=0.821723$, closed form; independently
confirmed by $31$-digit quadrature and by symbolic integration & $2.3\times10^{-4}$\\
One-sided rule, Normal parent & noncentral $t$, \texttt{scipy.stats.nct} & $<10^{-8}$\\
Pure-spread rule, $n=3$--$30$ & $2\times10^{6}$-draw Monte Carlo & $\le1.0\times10^{-3}$\\
\midrule
Grid ladder $N=2048\to4096$, widths & --- & $1.1\times10^{-4}$ at $n=3$; $\le3.4\times10^{-5}$ at $n=4,10,30$\\
Grid ladder $N=1024\to8192$, spread & --- & $3.2\times10^{-4}$ at $n=3$; $\le9.3\times10^{-5}$ elsewhere\\
Cell-admission convention at fixed $N$ & --- & observed $8.9\times10^{-5}$ (box), $3.0\times10^{-4}$ (posterior support), sign not stable (App.~\ref{app:boundary})\\
\bottomrule
\end{tabular}
\end{table}

The exposure rule's one-dimensional evaluator \eqref{eq:ghalf} carries its own validation:
against $4\times10^{6}$-draw Monte Carlo at $n\in\{6,8,12,30\}$ and $\delta\in\{1.5,2.5,3.5\}$,
eleven of twelve points agree within four Monte-Carlo standard errors, the twelfth differing by
$2.4\times10^{-4}$ at $g\approx0.9989$ where the quadrature is coarsest in the upper tail; and
the full pushforward, end to end against direct simulation of the rule at $n=8$, gives
$\E[g]=0.71409$ by quadrature against $0.71538$ ($\pm0.00101$) simulated.

Three things should be read off this table rather than around it. \emph{The anchor row is the
only one that measures an absolute error}, because it alone compares against a number known in
closed form. It does so at one parameter value, which bounds nothing at the posterior
endpoints elsewhere. Still, $2.3\times10^{-4}$ at $N=4096$ is the honest scale of the discretization at the
hardest point the paper evaluates. The case $n=3$ is hardest for the reason Theorem~\ref{thm:engine} already
names: it is where $\psi^{n}$ is least integrable, the regime in which the hypothesis weakens
from $L^{1}$ to $L^{2}$. \emph{The ladder rows are convergence diagnostics for the quadrature
error and not error bounds}: a sequence that moves little can still be biased, and small
movement is evidence of convergence only together with a rate. At $n=3$ successive
halvings reduce the movement by a factor of about $2$ rather than $4$, so convergence is first
order in $N$ at that point and the ladder understates what a further refinement would move.
\emph{The convention row is the one a reader would otherwise not think to ask for}: it is the
same size as the quadrature error, it changes sign, and refinement alone does not separate it
from the quadrature error; changing the convention does.

Against the widths this paper reports, all of these are immaterial --- an error of $10^{-4}$
against widths of order $1$ --- and that is the claim the table supports. It does not support a
claim that the shipped values are exact. \emph{``Exact'' in this paper describes
Theorem~\ref{thm:engine}, not the numbers}: what the identity removes is the simulation error of
the incumbent construction, and what remains is a deterministic and measurable discretization
error, reported above rather than assumed away.

\subsection{Boundary-cell convention: what was measured}\label{app:boundary}

\paragraph{Grid bias displaces the interval.} Each $g(\theta^{(j)})$ carries the deterministic bias of the underlying inversion, held below
$5\times10^{-4}$ in \cite{paper1} against an analytic anchor; it does not shrink by averaging
over $j$. If it were constant in $\theta$ it would displace the interval without changing its
width. It is not constant, because the acceptance region moves with $\theta$. The error at the two
endpoints is therefore neither the same number nor necessarily of the same sign, and a variation of
size $\epsilon$ across them moves the width by up to $2\epsilon$. Grid refinement at a fixed
cell-admission rule measures resolution ($3.5\times10^{-5}$ across $N=1024$ to $8192$), not the
effect of the admission convention itself. Appendix~\ref{app:boundary} probes the convention by
changing it at fixed $N$. The largest spread of the difference across the operating box is
$8.9\times10^{-5}$, against a $g$ ranging over $0.83$. Over the region where the posterior puts its
mass the observed ranges are larger, $3.0\times10^{-4}$ and $1.8\times10^{-4}$ on the two branches
(Appendix~\ref{app:boundary}), but still three orders of magnitude below the widths reported. The
width thus enjoys no guaranteed cancellation, but the error this leaves is immaterial at the
precision quoted. That is a statement from measurement rather than assumption.

What follows is the measurement behind that statement.
The question is not how finely the inversion is resolved but which of two admissible
\emph{conventions} for the boundary cells is in force, since a comparison between grids that
share a convention cannot see it.

Each cell's weight is recomputed by $8\times8$ subsampling, so a cell straddling the boundary
carries the fraction of itself that is admitted rather than being counted whole or not at all.
At one subsample the calculation reproduces the shipped evaluator to $7\times10^{-16}$, so the
comparison isolates the convention and nothing else. Evaluated at the corners of the same
$1.96\tau$ box used throughout \S\ref{sec:905}, the center-minus-area difference is

\begin{center}\small
\begin{tabular}{@{}lccc@{}}
\toprule
& range across the box & spread & as \% of the range of $g$\\
\midrule
$\rho=0$ \quad (hard $\Xbar$ mask) & $-2.50\times10^{-5}$ to $+6.39\times10^{-5}$ & $8.89\times10^{-5}$ & $0.011\%$\\
$\rho=0.15$ \quad (smooth weight) & $+1.31\times10^{-8}$ to $+1.77\times10^{-5}$ & $1.77\times10^{-5}$ & $0.002\%$\\
$\rho=0.30$ \quad (smooth weight) & $+2.17\times10^{-8}$ to $+5.17\times10^{-5}$ & $5.17\times10^{-5}$ & $0.009\%$\\
\bottomrule
\end{tabular}
\end{center}

Read that table with one caution in mind. It reports a difference
between two admissible conventions, not a distance from the truth, and calling it a bias would
presume the area-weighted value is the correct one. That presumption is false in general. On an
exactly-known $n=3$ Uniform anchor the exact-evaluation paper \cite{paper1} finds area weighting roughly $1.55$ times
\emph{worse} at every grid. There the leading error is the joint density's own
discontinuity at the cone edge, and weighting the mask does not touch it while the density is
still sampled at cell centers.

For the geometry here the evidence runs the other way, and it can be had without a reference
value. Both conventions converge to the same limit, so whichever moves less as the grid refines
is the closer of the two at any finite $N$. Across $N=1024$ to $8192$ the center-of-cell value
moves by $2.87\times10^{-5}$ at $\rho=0$ and $3.54\times10^{-5}$ at $\rho=0.15$, while the
area-weighted value moves by $3.07\times10^{-6}$ and $2.25\times10^{-6}$: a factor of $9$ and
$16$ steadier. That is what one would expect of a smooth Normal parent whose only hard edges are
the admissibility region and $R>0$, rather than a boundary cutting a discontinuity.

That inference can be checked, because a member of this family has a closed form in exactly the
geometry it concerns. Under a Normal parent the one-sided rule $\Xbar+ks\le L$ has acceptance
probability $T_{n-1}\big(-k\sqrt n;\,-\sqrt n(L-\mu)/\sigma\big)$, the noncentral-$t$ distribution
function, which we confirmed against an $8\times10^{7}$-draw simulation to $1.7$ standard errors
before using it. At $n=10$, $k=1.5$, $L=110$ and a Normal$(100,5)$ parent, area weighting is
closer to the exact value at every grid tried, and the steadier ladder picks the same winner: the
center-of-cell value moves by $7.3\times10^{-5}$ across $N=1024$ to $4096$ against
$1.1\times10^{-6}$ for area weighting, and its largest error over that ladder is
$5.5\times10^{-5}$ against $1.3\times10^{-6}$. Both figures are taken over the ladder rather than
at its finest grid, because the center-of-cell error
changes sign at every refinement: any statistic read off one grid is a draw rather than a
measurement. The area-weighted error, by contrast, keeps one sign throughout. The size of the margin is
accordingly not quotable. The ratio of the two errors runs from a factor of $14$ to a factor
of $380$ across three grids, because it is a quotient of a quantity that oscillates about zero,
so what this establishes is the direction and not the size of the gain. Where both error
sequences are instead one-signed and monotone, as they are on the exact-evaluation paper's Uniform
anchor, the ratio is stable and worth printing; it is that condition, not the geometry, that makes the difference. We therefore read the area-weighted column as the better of the two here, as an
inference from the convergence that an exact reference confirms in the same geometry, rather than
as a property of the scheme. The single finest step is not the statistic
to use: at $\rho=0.15$ it favors center-of-cell, $1.3\times10^{-7}$ against $2.7\times10^{-7}$,
which is one lucky step in a sequence whose sign alternates.

The table supports three readings. The first corrects any hope that the convention error cancels. The difference \emph{does} vary across $\theta$, it is
largest at the wide-$\sigma$ corner in every case, and at $\rho=0$ it changes sign across the
box. There is therefore no cancellation to appeal to: the convention error at the two endpoints
of an interval is not the same number and need not carry the same sign. Second, the magnitude
nonetheless does not threaten any conclusion here. On this box the spread is
$8.9\times10^{-5}$ against a $g$ ranging over $0.83$, about a hundredth of a percent; measured
over the region where the posterior puts its mass it is larger, $3.0\times10^{-4}$ on the
smooth branch ($120$ posterior draws, $\rho\sim\mathrm{U}[0,0.3]$) and $1.8\times10^{-4}$ on the
hard-mask branch ($\rho=0$, $42$ points), $0.03\%$ and $0.02\%$ of the span of $g$ there. These
are observed ranges over finite point sets, not bounds, and the draws that build the flagship interval carry $\rho$ drawn on
$[0,0.3]$, so they sit almost surely in the smoother of the two regimes. Third, the effect
shrinks with $N$ --- at $\hat\theta$ and $\rho=0.15$ the difference runs $-3.05\times10^{-5}$,
$+2.66\times10^{-6}$, $-1.06\times10^{-6}$ at $N=1024,2048,4096$ --- which identifies it as
boundary-cell error rather than anything structural. Its sign is not stable in $N$ at that
point, so no rate should be read off a single $\theta$.

The two $\rho$ regimes are reported separately rather than pooled because they are not the same
computation. The constraint on $\Xbar$ is a hard interval only when $\rho\le0$; for $\rho>0$ the
lot center enters through a smooth normal weight, leaving $\mathrm{valid}$ and $R>0$ as the only
hard edges, and the $\rho=0$ column is both the widest spread and the only sign change. A single
boundary-cell figure quoted across all $\rho$ would be an average of two different geometries.

\section{The \texorpdfstring{$W=1$}{W=1} blind spot, worked}\label{app:blindspot}

The interval width $W$ can read $1$ even when the answer is nearly determined. This is not a contrivance: it is the
paper's own mechanism taken to its limit, and it needs no rule but the one already in hand. Recall that $g$ is a \emph{forecast}: the probability
that a test not yet run will pass, formed from the units already assayed. Take
$\langle905\rangle$ Stage~1 as written, $|M-\Xbar|+2.4\,s\le15$, under a Normal parent on target
at $\mu=100$, with the posterior for $\sigma$ carried by ten units of a real lot
($s=4.845$, giving $\sigma$ a median of $5.0$ and a $95\%$ range $[3.3,\,8.9]$). Hold the limit
where the standard puts it and let the prescribed \emph{test} grow. On target, $\Xbar\to\mu$, and $\mu$
lies inside the clipping window. The mean arm of the wedge therefore retracts to nothing, and what
is left is a ceiling on the spread. The operating characteristic converges to a step at
$\sigma^{*}=15/2.4=6.25$. The ten-unit posterior straddles it, placing
$\Prob(\sigma>\sigma^{*})=0.203$ above, so $g$ is driven to $0$ or to $1$ for almost every draw:
\begin{center}\small
\begin{tabular}{@{}rccc@{}}
\toprule
$\ntest$ & $[q_{0.025},\,q_{0.975}]$ & $W$ & posterior mass with $0.01<g<0.99$\\
\midrule
$10$   & $[0.089,\,1.000]$ & $0.911$ & $85.3\%$\\
$30$   & $[0.009,\,1.000]$ & $0.991$ & $56.1\%$\\
$100$  & $[0.000,\,1.000]$ & $1.000$ & $32.3\%$\\
$300$  & $[0.000,\,1.000]$ & $1.000$ & $18.9\%$\\
$1000$ & $[0.000,\,1.000]$ & $1.000$ & $10.2\%$\\
\bottomrule
\end{tabular}
\end{center}
By $\ntest=100$ the reported interval is the whole of $[0,1]$ --- nominally no information at
all --- while the limiting law is two atoms, $0.797$ at $g=1$ against $0.203$ at $g=0$. What is
unknown there is only the direction, and $W$ cannot say so: a forecaster offered odds on this lot
would take four to one. Two things keep this from being over-read. Running the enlarged test
would itself reveal $\sigma$, so the statement is about what is forecastable \emph{before} a test,
which is when the decision to run one is taken; and the growing test at a fixed limit is a
diagnostic rather than a procedure, visible in full at $\ntest=100$ and already severe at $30$.
What repairs it is measurement rather than testing --- at $\mest=50$ the posterior no longer
reaches the step and $W$ collapses to zero. This is the third path of
\S\ref{sec:unify}, a test growing against a fixed limit, seen from the other side. There the interval narrows by
saturation rather than by information; here it widens without the answer becoming less
determined. Where a stronger
statement is needed --- support, tail mass, or the risk of a particular decision --- it is
reported separately and not read off $W$.

\section{Proposition~\ref{prop:chi}, extended}\label{app:prop1ext}

Three supplements to \S\ref{sec:closed}: the closed form at three units, the mean channel that the wedge adds to the pure-spread rule, and the accuracy of the large-sample form, with the upper endpoints that Table~\ref{tab:chilaw} omits.

\subsection{The three-unit case}\label{app:n3law}

\begin{corollary}[The three-unit case]\label{cor:n3}
At $n_t=3$, so $\nu_t=2$, Proposition~\ref{prop:chi} collapses to a power law in the
\emph{failure} probability,
\begin{equation}\label{eq:n3law}
1-g_p\;=\;(1-\hat g)^{\,\chi^2_{\nu_e,\,1-p}/\nu_e},
\end{equation}
for every estimation size $\nu_e$. If the estimation sample is also three, $\nu_e=2$, the
chi-square quantiles are elementary and \eqref{eq:n3law} becomes, with
$\theta=-\ln(1-\hat g)$,
\begin{equation}\label{eq:n33law}
1-g_p\;=\;p^{\,\theta},\qquad\text{so}\qquad
[\,g_{0.975},\,g_{0.025}\,]=\bigl[\,1-0.975^{\theta},\;1-0.025^{\theta}\,\bigr],\qquad
W=0.975^{\theta}-0.025^{\theta}.
\end{equation}
\end{corollary}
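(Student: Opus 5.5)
The plan is to specialize Proposition~\ref{prop:chi} directly, using the fact that the chi-square law with two degrees of freedom is an exponential law. For $\nu_t=2$ we have $F_2(x)=1-e^{-x/2}$, so $F_2^{-1}(y)=-2\ln(1-y)$. Hence $\hat q=F_{2}^{-1}(\hat g)=-2\ln(1-\hat g)=2\theta$.

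Substituting into \eqref{eq:chilaw} gives
\[
g_p=1-\exp\!\Big(-\tfrac12\cdot\tfrac{\chi^2_{\nu_e,1-p}}{\nu_e}\cdot 2\theta\Big),
\qquad\text{so}\qquad
1-g_p=e^{-\theta\,\chi^2_{\nu_e,1-p}/\nu_e}=(1-\hat g)^{\chi^2_{\nu_e,1-p}/\nu_e}.
\]
This is \eqref{eq:n3law}. Nothing about $\nu_e$ is used, so the identity holds for every estimation size. The monotonicity and quantile-mapping argument is inherited from Proposition~\ref{prop:chi}, so no new ordering argument is needed.

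For the second part, set $\nu_e=2$ as well. The same exponential form gives $\chi^2_{2,1-p}=-2\ln p$, so $\chi^2_{2,1-p}/\nu_e=-\ln p$. Then
\[
1-g_p=(1-\hat g)^{-\ln p}=e^{\theta\ln p}=p^{\theta}.
\]
Taking $p=0.975$ and $p=0.025$ gives the endpoints $1-0.975^{\theta}$ and $1-0.025^{\theta}$. Their difference is $W=0.975^{\theta}-0.025^{\theta}$. The ordering $g_{0.975}\le g_{0.025}$ follows because $p^\theta$ decreases in $p$ for $\theta>0$.

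There is no real obstacle here. The only point needing care is the quantile convention, namely that $\chi^2_{\nu,p}$ is the lower $p$-quantile, so that $\chi^2_{2,1-p}=-2\ln p$ and not $-2\ln(1-p)$. I would check this convention against the proof of Proposition~\ref{prop:chi}, where $c_p^2=\nu_e/\chi^2_{\nu_e,1-p}$ makes $\sigma_p$ increasing in $p$. As a sanity check, at $\hat g=0.99$ we have $\theta=\ln 100$, which gives a lower endpoint $1-0.025^{4.6}\approx1$. The upper endpoint is $1-0.975^{4.6}\approx0.11$, which reproduces the ``cannot certify'' remark.
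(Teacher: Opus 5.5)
Your derivation is correct and is the paper's own proof: substitute $F_2(x)=1-e^{-x/2}$, so that $\hat q=2\theta$, into \eqref{eq:chilaw}, then use $\chi^2_{2,\alpha}=-2\ln(1-\alpha)$ to get $\chi^2_{2,1-p}/2=\ln(1/p)$. Two slips affect only your side remarks: $p^{\theta}$ is \emph{increasing} in $p$, and it is $g_p=1-p^{\theta}$ that decreases and gives $g_{0.975}\le g_{0.025}$; and at $\hat g=0.99$ the \emph{lower} endpoint is $g_{0.975}=1-0.975^{\theta}\approx0.11$ while the upper is $g_{0.025}\approx1$, not the reverse, and it is that low floor which reproduces the cannot-certify remark.
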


\begin{proof}
$F_2(x)=1-e^{-x/2}$ and $\hat q=F_2^{-1}(\hat g)=-2\ln(1-\hat g)$, so the argument of $F_{\nu_t}$
in \eqref{eq:chilaw} is $-\ln(1-\hat g)\,\chi^2_{\nu_e,1-p}/\nu_e$ and \eqref{eq:n3law} follows.
For $\nu_e=2$ the chi-square with two degrees of freedom is exponential,
$\chi^2_{2,\alpha}=-2\ln(1-\alpha)$, so $\chi^2_{2,1-p}/2=\ln(1/p)$, and
$(1-\hat g)^{\ln(1/p)}=p^{\,\theta}$.
\end{proof}

Two features of \eqref{eq:n33law} are worth stating because they are exact rather than numerical.
The width is largest in the interior, at $\theta^\ast=\ln(b/a)/(b-a)$ with $a=\ln(1/0.975)$ and
$b=\ln(1/0.025)$, which is $\hat g^\ast=0.74328$ and $W=0.95953$; the worst case is not at an
extreme and the width never reaches one. And the lower endpoint clears $0.95$ only when
$\theta\ge\ln(20)/a=118.33$, that is $\hat g\ge1-4.1\times10^{-52}$, so three calibration units
cannot certify a release at this rule whatever they report. At a Uniform parent the same two
evaluations need the sample-variance law of \cite{paper1} in place of $F_{\nu_t}$ and the pivot
quantiles of $s/\sigma$ in place of $c_p$. Both are available in closed form at $n=3$, but the
pivot quantiles are roots of that law rather than elementary functions of $p$, so the Uniform case
closes one step less completely than \eqref{eq:n33law}.

\subsection{The mean channel of the wedge}\label{app:meanchannel}

The one-dimensional evaluator \eqref{eq:g1d} of \S\ref{sec:closed}, checked and compared with the chi-square law.

Against the FFT engine at $N=1024$ it agrees to
$10^{-6}$ for $\sigma\le2$ and to $9.3\times10^{-4}$ at $\sigma=8$, the engine's grid error at
large spread. Two distinct approximations must be kept apart here. The first freezes $\Xbar$ at $\mu$, which gives an
effective limit $[L-D(\mu)]_+$; the positive part matters once $\mu$ lies far enough outside
$[98.5,101.5]$ that $D(\mu)>L$. With this effective limit Proposition~\ref{prop:chi} is exact again.
Its departure from the exact fixed-location spread-only width is the mean channel alone, the
difference of $g-g_{\mathrm{eff}}$ at the two $\sigma$-quantiles, and it is small. The second is what
the comparison below actually computes, and it is a different object: the chi-square width entered at
$\hat q=F_{\nu_t}^{-1}(\hat g)$ with $\hat g$ the \emph{full} wedge plug-in. It therefore also carries
the calibration term relating the two arguments. On IJPX-F12-cont the frozen-mean width is $0.29259$
and the exact fixed-location spread-only width $0.30123$, a difference of $0.00864$. The
plug-in-matched chi-square width is $0.38255$, a difference of $0.08132$. On the ninety-eight lots of
\S\ref{sec:reallots}, evaluated with \eqref{eq:g1d}, the plug-in-matched law reproduces the exact
$\sigma$-only width to a median of $10^{-6}$ and a maximum of $0.081$. The four lots above $0.05$ all
have means at least four and a half units off target. The sixty-two lots inside the dead zone
($|\bar x-100|\le1.5$) reach at most $0.038$, with a median indistinguishable from zero. Wherever their
residual is not negligible it has the opposite sign, and there the test mean straddles the
dead-zone edge at $\sigma\approx3$--$7$.
Neither comparison bounds the error in the \emph{lower endpoint} of the full posterior, which is
the quantity a release decision consumes; \S\ref{sec:endpoint} reports that error directly.

\subsection{Large-sample forms and upper endpoints}\label{app:asym}

Accuracy of the large-sample form \eqref{eq:probitlaw} of \S\ref{sec:closed}. The leading term is not accurate at the sample sizes practice uses; the
$\nu^{-1/2}$ term is (Table~\ref{tab:asym}).

\begin{table}[htbp]\centering\small
\caption{Maximum error of the large-sample forms against the exact law \eqref{eq:chilaw} over
$z_{\hat g}\in[-3,3]$: the probit term of \eqref{eq:probitlaw} alone, with its $\nu^{-1/2}$
correction, and with both chi-squares replaced by their Wilson--Hilferty cube-root forms.}
\label{tab:asym}
\begin{tabular}{@{}ccccc@{}}
\toprule
$\nu_e$ & $\nu_t$ & probit only & with $\nu^{-1/2}$ term & Wilson--Hilferty form\\
\midrule
$9$ & $9$ & $0.373$ & $0.010$ & $0.008$\\
$29$ & $9$ & $0.118$ & $0.014$ & $0.004$\\
$29$ & $29$ & $0.198$ & $0.004$ & $0.003$\\
$99$ & $99$ & $0.103$ & $0.001$ & $0.001$\\
\bottomrule
\end{tabular}
\end{table}

On the ninety-eight lots the
probit term alone misses by up to $0.369$, twenty-five lots by more than $0.05$; with the
correction the largest miss is $0.023$ and none exceeds $0.05$.

\paragraph{Upper endpoints of Table~\ref{tab:chilaw}.} The main article tabulates only the lower
endpoint, the one a release decision consumes. Table~\ref{tab:chilawhi} completes the interval
from the same proposition at the same grid; the width of any entry is its upper endpoint here minus
its lower endpoint there.

\begin{table}[h]\centering\small
\caption{Upper endpoint of the $95\%$ interval on $\Pacc$ for the pure-spread rule of
Proposition~\ref{prop:chi}, test size $n_t=10$, at the grid of Table~\ref{tab:chilaw}. Entries
above $0.9995$ are shown as $>0.999$.}
\label{tab:chilawhi}
\begin{tabular}{@{}lccccc@{}}
\toprule
$\hat g$ & $m=10$ & $m=20$ & $m=30$ & $m=50$ & $m=100$\\
\midrule
$0.5$ & $0.960$ & $0.892$ & $0.844$ & $0.784$ & $0.712$\\
$0.8$ & $0.998$ & $0.988$ & $0.977$ & $0.959$ & $0.931$\\
$0.9$ & $>0.999$ & $0.997$ & $0.994$ & $0.988$ & $0.975$\\
$0.95$ & $>0.999$ & $0.999$ & $0.998$ & $0.996$ & $0.991$\\
$0.99$ & $>0.999$ & $>0.999$ & $>0.999$ & $>0.999$ & $0.999$\\
$0.999$ & $>0.999$ & $>0.999$ & $>0.999$ & $>0.999$ & $>0.999$\\
$0.9999$ & $>0.999$ & $>0.999$ & $>0.999$ & $>0.999$ & $>0.999$\\
$0.99999$ & $>0.999$ & $>0.999$ & $>0.999$ & $>0.999$ & $>0.999$\\
\bottomrule
\end{tabular}
\end{table}

\section{The three treatments at a common operating point}\label{app:three}

Section~\ref{sec:three} defines the three ways of turning uncertainty about $\theta$ into a
statement about $g$. This appendix carries all three through the same operating point with every
intermediate quantity shown, so that each can be checked by hand and, more to the point, so that what each one \emph{assumes} can be seen doing its work.

\subsection{The common setting}

The rule is USP $\langle905\rangle$ Stage 2: $n=30$, $k=2$, acceptance value
$\mathrm{AV}=|M-\Xbar|+ks$ with $M=\operatorname{clip}(\Xbar,98.5,101.5)$. The parent is Normal.
The limit is calibrated to $L=11.320582$ so that the point estimate is exactly the baseline,
\begin{equation}
\hat\theta=(\hat\mu,\hat\sigma,\hat\rho)=(100,\,5,\,0),\qquad
g(\hat\theta)=0.850000 .
\end{equation}
The parameter law behind a calibration sample of $M_{\mathrm{cal}}=30$ units with summaries
$(\bar x,s)=(100,5)$ is the posterior \eqref{eq:cspost} of \S\ref{sec:905}, restated here in
the coordinates the treatments below use:
\begin{equation}\label{eq:applaw}
v=\sigma^2(1-\rho)=\frac{29\,s^2}{U},\ U\sim\chi^2_{29};\qquad
\mu\mid v,\rho\sim N\!\Big(\bar x,\;v\Big[\tfrac1{30}+\tfrac{\rho}{1-\rho}\Big]\Big);\qquad
\rho\sim\mathrm{U}[0,\,0.30],
\end{equation}
under the reference prior $\pi(\mu,v,\rho)\propto v^{-1}$ on $(\mu,v)$ times the Uniform on
$\rho$. In the coordinates $\theta=(\mu,\sigma,\rho)$ with $\sigma=\sqrt{v/(1-\rho)}$ the
total spread, its moments are $\bar\theta=(100,\,5.591,\,0.150)$,
$\mathrm{sd}(\mu)=2.443$, $\mathrm{sd}(\sigma)=0.818$, $\mathrm{sd}(\rho)=0.086603$ and
$\operatorname{Cov}(\sigma,\rho)=0.02486$, the covariance arising because the total spread
$\sigma=\sqrt v/\sqrt{1-\rho}$ rises with $\rho$ at fixed within-lot variance. Every
treatment below uses \emph{this same law}; they differ only in what they do with it.

\subsection{Treatment 1: global (pushforward quantiles)}\label{app:global}

\emph{Definition.} Evaluate $g$ at many $\theta$ drawn from or gridded over \eqref{eq:applaw}
and read the quantiles of the induced law $g_\#\pi$. Every evaluation is exact, so the only
approximation is in the outer integral over $\theta$.

\emph{Numbers.} Four independent scrambled Sobol' sets of $2^{16}$ points of
\eqref{eq:applaw}, each pushed through the one-dimensional evaluator, give mean $0.53254$,
median $0.54436$ and a $95\%$ interval $[0.03532,\ 0.98309]$, width $0.94777$; the standard
errors over the four scrambles are $6\times10^{-5}$ on the lower and $1.3\times10^{-4}$ on the
upper endpoint, and below $10^{-5}$ on the mean. The same-lot forecast, further units of the lot
already sampled with $\alpha\mid v\sim N(\bar x,v/30)$ and no fresh shared factor, gives mean
$0.71352$ and $[0.09357,\ 0.99869]$, width $0.90512$. A stipulated law of three independent marginals instead
(Normal in $\mu$ with the independence variance, Normal in $\sigma$, Uniform in $\rho$), which is
not a posterior, gives mean $0.56678$ and $[0.08817,\ 0.99231]$: its lower endpoint sits $0.053$
above the posterior's, so the choice of law is not harmless at $M_{\mathrm{cal}}=30$.

\emph{Why the outer integral is replicated.} A tail quantile is determined by the evaluated
atoms in the tail, so a scheme that settles the mean to a fraction of a percent can leave a
$2.5\%$ quantile moving by a tenth of its value. On a $245$-node product grid the mean
settles to $0.6\%$ across four refinements while the lower endpoint moves $13\%$, and a
$1000$-draw Monte Carlo sample carries a standard error of $0.011$ on that endpoint. Replicated randomized quasi-Monte Carlo removes the guesswork: the endpoint's own
error is measured from the scrambles, and every endpoint in the article is quoted with it.

One convention has to be recorded. Equation~\eqref{eq:pushforward} is the quantile of the
continuous law $g_\#\pi$, but a finite sample supplies only a discrete measure on the evaluated
atoms. Every endpoint in this paper is read from that measure by linear interpolation
between adjacent order statistics. This reading is consistent for \eqref{eq:pushforward} and less
biased in the tails than the attained value $\min\{v:\hat F(v)\ge\beta\}$. The two differ by at
most one inter-atom spacing, below $10^{-4}$ at $2^{16}$ points.

\emph{What the assumption is doing.} Nothing is assumed about the shape of $g_\#\pi$; the
asymmetry that defeats Treatment 2 below is simply reported. The cost is $2^{16}$ exact
evaluations per scramble, a few seconds, rather than the seven that Treatment 2 needs for a
three-coordinate gradient.

\subsection{Treatment 2: local (delta method)}\label{app:delta}

\emph{Definition.} Replace $g$ near $\hat\theta$ by its tangent plane,
\begin{equation}\label{eq:applin}
g(\theta)\;\approx\;g(\hat\theta)+\nabla g(\hat\theta)^{\top}(\theta-\hat\theta),
\end{equation}
so that the law of $g(\theta)$ is approximated by the linear image of $\pi$. Its variance is
$\nabla g^{\top}\Sigma_\theta\nabla g$, where $\Sigma_\theta$ is the full covariance of
\eqref{eq:applaw} in the coordinates $(\mu,\sigma,\rho)$. That covariance is not diagonal; its
$(\sigma,\rho)$ entry is $0.02486$.
Treating that image as Normal converts the standard deviation
$\mathrm{sd}=\sqrt{\nabla g^{\top}\Sigma_\theta\nabla g}$ into the symmetric band
$\hat g\pm z\,\mathrm{sd}$, with $z=1.96$ at the $95\%$ level. Three assumptions are stacked
here: $g$ is linear over the support of $\pi$; the resulting law is Normal in shape; and $\pi$
enters only through a center and a covariance. For the $\rho$-coordinate, whose law
$\mathrm{U}[0,0.30]$ has mean $0.15\ne\hat\rho=0$, centering at $\hat\theta$ is itself
already an approximation.

The gradient requires no analytic derivative: each coordinate is a finite difference of the
exact evaluator, central where the parameter can move in both directions,
\begin{equation}\label{eq:appfd}
\frac{\partial g}{\partial\mu}\;\approx\;
\frac{g(\hat\mu+h,\hat\sigma,\hat\rho)-g(\hat\mu-h,\hat\sigma,\hat\rho)}{2h},
\qquad h=0.005,
\end{equation}
and likewise in $\sigma$; one-sided in $\rho$,
$\partial g/\partial\rho\approx[g(\hat\mu,\hat\sigma,h_\rho)-\hat g]/h_\rho$, because
$\hat\rho=0$ sits on the boundary of the parameter's range. The one-sided quotient converges
only like $h_\rho$. The slow convergence comes from the one-sidedness itself, not from a lack of smoothness in $g$.
The latent representation carries $\sqrt\rho$, but the Normal probability depends on $\rho$ only
through positive variances that vary smoothly there. So $g$ is smooth at $\rho=0$, and the
boundary only forces the derivative to be one-sided, taken inside the declared parameter range. The
ladder is accordingly read as finite-difference convergence:
$h_\rho=0.02,\,0.01,\,0.005,\,0.001,\,10^{-4},\,10^{-5}$ reads
$0.2207,\,0.2777,\,0.3129,\,0.3449,\,0.3526,\,0.3534$, with limit $0.35350$ by Richardson
extrapolation. Stopping at $h_\rho=0.02$ would report a derivative more than a third too small, a bias of the
difference quotient and not of the evaluator; the one-dimensional evaluator is exact to $10^{-14}$, so nothing prevents the step from being taken
small. The construction costs six evaluations beyond $\hat g$ itself, plus the ladder.

\emph{Numbers.} At $\hat\theta$,
\begin{equation}\label{eq:appgrad}
\frac{\partial g}{\partial\mu}=0\ \text{(by symmetry)},\qquad
\frac{\partial g}{\partial\sigma}=-0.389061,\qquad
\frac{\partial g}{\partial\rho}=+0.35350 .
\end{equation}
With the full covariance,
\begin{equation}
\begin{split}
\mathrm{sd}^2&=(-0.389061)^2(0.818)^2+(0.353500)^2(0.086603)^2
+2(-0.389061)(0.353500)(0.02486)\\
&=0.101177+0.000937-0.006838,\qquad \mathrm{sd}=0.308668,
\end{split}
\end{equation}
giving $0.850\pm1.96(0.308668)=[0.24502,\ 1.45498]$; dropping the cross term gives
$0.319553$ and $[0.2237,\ 1.4763]$.

\emph{What the assumption is doing.} Three things are visible here that no summary conveys.
First, the upper endpoint is $1.455$: a linear image of a symmetric law is
symmetric, and $g$ is not, because it saturates at $1$.

The obvious repair is to clip the band to $[0,1]$, and it is worth seeing why that is not one.
Unclipped, the band spans $1.210$ against the exact $0.948$: too wide, though a wider band
still misses most of the lower tail. Clipped, it becomes $[0.245,\ 1.000]$, of width $0.755$
--- now \emph{narrower} than the interval it is approximating. Clipping does not repair the band; it reverses the
direction of its error, trading a visible impossibility for an invisible under-coverage. Nor
does it touch the endpoint a decision actually consumes: the lower end stands at $0.245$ against
an exact $0.035$, so the downside risk is understated sevenfold before and after. The excursion
past one is not the defect. It is the only outward sign that the linearization has failed, and
clipping removes the sign while keeping the failure. Second, and more instructive,
$\partial g/\partial\mu$ is \emph{exactly zero}, by symmetry. That is not an error: the rule is
centered, so the operating characteristic is at its maximum in $\mu$ and the dependence there is
second-order. A first-order method therefore assigns location uncertainty \emph{no} contribution
at all, when Appendix~\ref{app:decomp} finds that location in fact contributes $0.159$ of interval
width. The delta method is blind precisely where the curvature lives. Third, the band is
centered at $\hat g$ although the posterior mean is $\bar\theta=(100,\,5.591,\,0.150)$, not
$\hat\theta$. Three centers must be kept apart here. The scale's posterior mean sits above the plug-in and
the correlation's sits above zero, so to first order the offset moves the center by
$\nabla g\cdot(\bar\theta-\hat\theta)=-0.230+0.053=-0.177$. That puts the center at
$0.850-0.177=0.673$ and the band, at the same half-width $0.605$, at $[0.068,\,1.278]$.
Evaluating $g$ at the posterior mean instead gives $g(\bar\theta)=0.614$ and the band
$[0.010,\,1.219]$. Neither is the posterior mean of $g$, which is $\E[g(\theta)]=0.533$; a
first-order shift and the value at the shifted point are different objects, and re-expanding at
a new center would require its own gradient. Under either re-centering the ceiling still exceeds
one, so the expansion point is not the defect; curvature, the boundary and the non-Gaussian law
are. Each failure is one of the Definition's three stacked assumptions, caught in the act.

\subsection{Treatment 3: worst-case (corner) band}\label{app:corner}

\emph{Definition.} Over a confidence set $\Theta_\alpha$, report
$[\min_{\Theta_\alpha}g,\ \max_{\Theta_\alpha}g]$. The set used here is one $95\%$ box per
face $\rho\in\{0,\,0.30\}$: $\mu$ within $\bar x\pm1.96\,s\sqrt{1/30+\rho/(1-\rho)}$ and the
total spread $\sigma$ within the central $95\%$ of its inverse-chi-square law at that $\rho$,
$[98.211,\,101.789]\times[3.982,\,6.722]$ at $\rho=0$ and $[93.340,\,106.660]\times
[4.759,\,8.034]$ at $\rho=0.30$. If $g$ were monotone in each coordinate the extremes would be
attained at corners and eight evaluations would suffice.

\emph{Numbers.} The eight corners give $[0.018436,\ 0.991869]$. A genuine search over the
set, a $21\times21$ grid in $(\sigma,\rho)$ with a bounded one-dimensional optimization in
$\mu$ on each line and a simplex polish of the result, gives $[0.018436,\ 0.999017]$: the
minimum is a corner, $(93.340,\,8.034,\,0.30)$, and the maximum is the interior point
$(100,\,3.982,\,0)$, the target mean on the low-spread face, which the corners miss by
$0.00715$. A search that scans only $\rho$ with $\mu$ and $\sigma$ held at their corners would call the
extremes attained, since it presumes the monotonicity it would need to establish.

\emph{What the assumption is doing, and two traps.} This band is \emph{not} a $95\%$ interval
and must not be read as one. It is the range of $g$ over a set, so it is properly wider at the
bottom than the pushforward's $2.5\%$ quantile, $0.018$ against $0.035$. The set's own
posterior probability is a separate quantity. Consider a box built from the pooled marginal $95\%$
intervals of $\mu$ and $\sigma$, $[94.977,\,105.023]\times[4.244,\,7.434]\times[0,0.3]$.
It carries posterior probability $0.905$ under \eqref{eq:applaw}, close to $0.95^2$ since the two
coordinates are nearly independent. Its $\mu$-interval pools the $\rho$-inflated
tails, so its corners all sit far from target, and its corner maximum $0.616$ misses the true
$0.994$ over that box by $0.38$. Treatment 3 answers ``how bad could it be, anywhere in the
plausible set?''; Treatment 1 answers ``where does the probability actually sit?''.
Section~\ref{sec:corner} records where the corner shortcut attains the true extremes and where
it does not.

\subsection{Side by side}

\begin{table}[h]\centering\small
\caption{The three treatments plus the two link repairs, all at the $\langle905\rangle$ Stage-2
operating point, all from the same $\hat\theta$ and the same posterior \eqref{eq:applaw}. ``Cost'' counts
exact evaluations of $g$.}
\label{tab:appthree}
\begin{tabular}{@{}llccl@{}}
\toprule
treatment & interval & width & cost & answers\\
\midrule
\emph{1. pushforward, new lot} & $[0.0353,\ 0.9831]$ & $0.948$ & $4\times2^{16}$ &
\emph{the deliverable}\\
1. pushforward, same lot & $[0.0936,\ 0.9987]$ & $0.905$ & $4\times2^{16}$ & a different forecast\\
2. delta, raw        & $[0.2450,\ 1.4550]$ & $1.210$ & $7$    & not a probability\\
\quad + logit link   & $[0.0470,\ 0.9985]$ & $0.952$ & $7$    & floor too high by $0.01$\\
\quad + probit link  & $[0.0596,\ 0.9999]$ & $0.940$ & $7$    & floor too high by $0.02$\\
3. corner band       & $[0.0184,\ 0.9919]$ & $0.973$ & $8$    & worst case; misses the maximum\\
3. true range over the box & $[0.0184,\ 0.9990]$ & $0.981$ & search & worst case, not an interval\\
\bottomrule
\end{tabular}
\end{table}

Table~\ref{tab:appthree} is the whole argument of the paper compressed to one operating point.
The reported number is $0.850$. What the data actually entitle anyone to say is
$[0.04,\ 0.98]$. Every shortcut either leaves the unit interval, misplaces the floor, or
answers a different question. The two link repairs are the closest, overstating the floor by
$0.01$ to $0.02$; the raw delta band overstates it by $0.21$ while returning an upper endpoint
above one, which is the error that matters, since a floor quoted too high is a risk statement
in the unsafe direction. The only way to learn which is happening is to compute the posterior
interval once.

\subsection{The corner and delta bands of \S\ref{sec:corner}, in full}\label{app:cornerdetail}

\paragraph{The corner band.} Treatment 3 of \S\ref{sec:three} evaluates $g$ at the corners of a box confidence set. That
shortcut is valid only if $g$ is monotone in each coordinate, and at the $\langle905\rangle$
operating point it is not monotone in the mean. The rule is symmetric about target, so for any
spread the response to $\mu$ peaks at $\mu=100$. Whenever the box contains the target, that
peak is an interior point of the mean interval, so the maximum of $g$ over the box sits on the
low-spread face at the target mean and not at a corner. With the box of Appendix~\ref{app:corner}, one $95\%$ face
per value of $\rho$, the eight corners give $[0.018,\,0.992]$ while a genuine search over the
box gives $[0.018,\,0.999]$: the minimum is attained at a corner, the wide-spread far-mean
corner of the $\rho=0.3$ face, and the maximum is missed by $0.007$, seven hundred times the
evaluator's resolution. A search that holds $\mu$ and $\sigma$ at their corners and scans only $\rho$ cannot detect
this miss, since it presumes the property it would need to establish. The repair is cheap: for a symmetric
Normal wedge evaluate the closest admissible mean to target for the maximum and the farthest
for the minimum on every $(\sigma,\rho)$ face, and for a general parent either prove the
relevant unimodality or search the box. A box built from marginal $95\%$ intervals is, moreover, not a $95\%$ set
(Appendix~\ref{app:corner}). For a Normal parent the corner values have a closed \emph{approximation}:
\eqref{eq:chilaw} with $\hat q$ replaced by
$\hat q_{\mathrm{eff}}(\mu_c)=\nu_t\,[L-D(\mu_c)]_+^2/(k^2s^2)$ at
$\mu_c=\bar x\pm t_{\nu_e,0.975}\,s/\sqrt m$, chi-square in $\sigma$ and Student $t$ in $\mu$.
This is Bergum's construction in the spread coordinate with the mean channel frozen. Against
the exact corner values on the ninety-eight lots it agrees to a median of $3\times10^{-5}$ and
a maximum of $0.068$, with four lots above $0.05$. The residual again comes from the mean
channel at the wide-$\sigma$ corner, so the approximation carries a measured residual and is
not a closed form.  The pushforward remains the right default when the shape of the interval, and
not only its extremes, is wanted; it is also what Lewis and Fan \cite{lewis2016} set against
the E2810 confidence-set construction, and the two answer different questions, a posterior
probability against a worst case over a set whose coverage must be stated separately. On the family's simplest member the failure is of the other kind: on a $k=1$ tent with location and spread well determined and only the dependence parameter uncertain, the response to $\rho$ is non-monotone, the true minimum is interior at $\rho^*\approx0.15$, and the corner band misses it by $0.007$.

\paragraph{The delta band.} Table~\ref{tab:link} separates the two failures of the naive band. The \emph{range} violation
is gross at both points: the upper endpoint is $1.381$ on the exposure study and $1.455$ at
$\langle905\rangle$. Clipping to $[0,1]$ removes the symptom at the top, where it costs
nothing, but leaves the floor untouched. The reported floor stays at $0.245$ against a true
$0.035$ at $\langle905\rangle$, and at $0.319$ against $0.085$ on the exposure study.
Re-centering the band at the posterior mean instead of the plug-in, as the first-order shift $\nabla g\cdot(\bar\theta-\hat\theta)=-0.18$ recommends, does not repair it either. The floor moves to $0.010$, below the truth by $0.026$, and the ceiling stays at $1.22$. The defects are therefore curvature, the boundary and a non-Gaussian parameter law. They are not the expansion point, and not position as such, since the gradient depends on position. The
\emph{accuracy} of the link repair is the surprise. On the exposure study the logit band
reproduces the exact endpoints to a median of $0.004$ and $0.002$, its width within $1\%$ of the
truth at the median; at $\langle905\rangle$ the same construction puts the lower endpoint at
$0.047$ against an exact $0.035$ and the upper at $0.998$ against $0.983$, an overshoot of about
a hundredth at each end, resolved against a standard error below $2\times10^{-4}$. Two traps
manufacture spurious failures: a delta standard deviation built from marginal variances alone
drops the term $2g_\sigma g_\rho\operatorname{Cov}(\sigma,\rho)$, which under
\eqref{eq:cspost} is not zero, and a forward difference in $\rho$ at the boundary converges
only like its step, reporting $0.22$ at $h=0.02$ for a derivative of $0.35$
(Appendix~\ref{app:delta}).

What survives is weaker. The link repairs the range violation by construction; its accuracy is
an observation about two points, one of which has three uncertain coordinates and the non-logistic
pushforward of Figure~\ref{fig:push}. There is no criterion, only two successes, which is a reason
to verify the shortcut once per rule, at the cost of the exact evaluation it was meant to
avoid.

\section{The anchor, drawn}\label{app:anchor}

The analytic anchor (\S\ref{sec:anchor}) is the only operating point in the paper at which the acceptance
probability, its entire operating characteristic, and its amplification are \emph{all} available
in closed form. That makes it the one place where every object the paper computes can be shown
against a known truth rather than asserted, which is what Figure~\ref{fig:anchor} does.

\begin{figure}[tbp]\centering
\includegraphics[width=\linewidth]{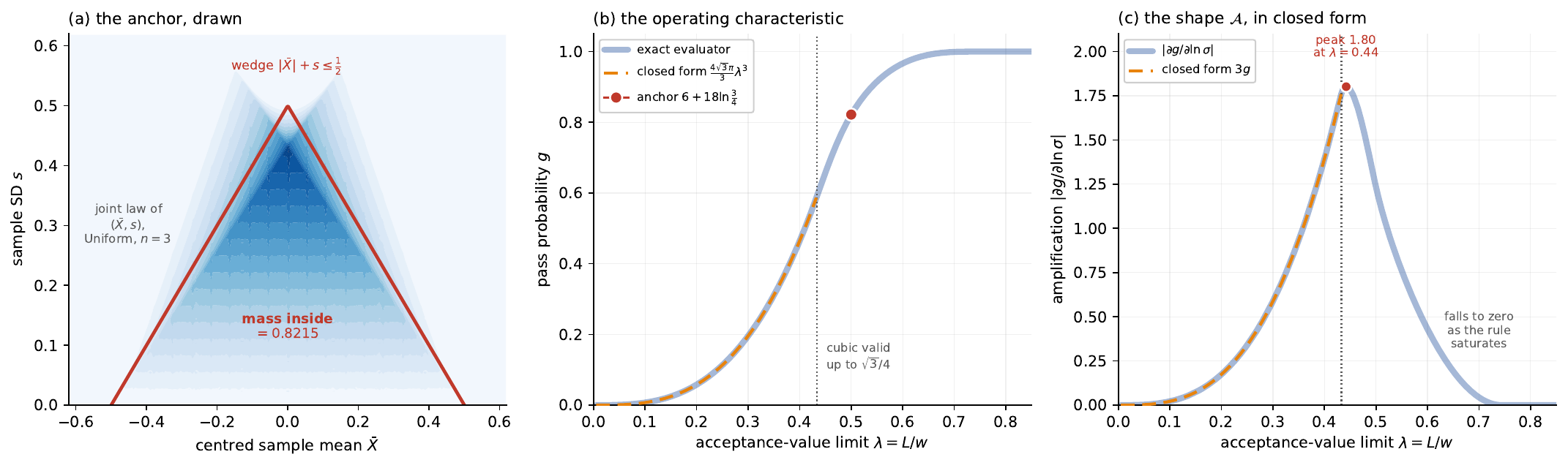}
\caption{The $n=3$ Uniform anchor, where the truth is known. \emph{(a)} The joint law of
$(\Xbar,s)$ for a Uniform parent at $n=3$, with the wedge $|\Xbar|+s\le\frac12$ drawn on top of
it. The acceptance probability is nothing more exotic than the mass inside the triangle. The
support's own boundary is visible behind it: at $n=3$ a large spread forces the mean towards the
center, which is why the density tapers to a point rather than filling a rectangle.
\emph{(b)} The operating characteristic (pass probability against the acceptance-value limit $\lambda=L/w$) and \emph{(c)} the amplification $|\partial g/\partial\ln\sigma|$. In
both, the broad pale band is the exact evaluator and the orange dashed curve is the closed form
($\frac{4\sqrt3\pi}{3}\lambda^{3}$ in (b), $3g$ in (c)), drawn only over
$\lambda\le\sqrt3/4$ where it is valid; the dotted vertical marks that join, and the red dot in
(b) is the anchor $6+18\ln\frac34$. The two agree to plotting accuracy wherever the closed form
applies, and the closed form is visibly \emph{not} the whole story past the join. Panel (c) is
the shape $\mathcal{A}$ of \eqref{eq:ampscale} exhibited in closed form; every other study in the
paper estimates it numerically.}
\label{fig:anchor}
\end{figure}

Four things are worth reading off it. First, panel (a) shows that the pair $(\Xbar,s)$ does not
range over a rectangle: the mean and the spread of a sample constrain one another. The reason is
that each of the three measured values $X_1,X_2,X_3$ must itself lie in the parent's support,
here $[-\frac12,\frac12]$. A sample whose mean sits near one end of that interval must therefore
have all three values crowded near that end, and crowded values cannot be widely dispersed, so a lot cannot be both far off target and highly variable. The most dispersed sample available
splits the three values two-to-one between the ends of the support, $(-\frac12,-\frac12,\frac12)$ or its mirror image, giving
\begin{equation}\label{eq:supporttip}
s_{\max}=\tfrac{1}{\sqrt3}=0.577350\quad\text{at}\quad \Xbar=\mp\tfrac16,
\end{equation}
which is the pair of pale horns at the top of the panel. Those horns are precisely what the rule
throws away: their acceptance value is $\frac16+\frac{1}{\sqrt3}=0.744$, half as much again as
the limit $\frac12$. And directly above target the ceiling is lower still: with $\Xbar=0$ the most dispersed admissible sample is $(-\frac12,0,\frac12)$, whose $s$ is exactly $\frac12$. So at the anchor the wedge's apex sits \emph{exactly} on the support, touching it at one point.
The anchor is not an arbitrary choice of $\lambda$: it is the smallest limit at which an
on-target lot is guaranteed to pass, and raising it only enlarges the accepted set. That the rejected set is a thin sliver of extreme
configurations is why $g=0.82$ rather than something near a half.

Second, panel (b) shows \emph{why} the cubic must stop:
$\frac{4\sqrt3\pi}{3}\lambda^{3}$ tracks the evaluator indistinguishably up to
$\lambda=\sqrt3/4$ and would pass through one shortly after, whereas the true characteristic
turns over and saturates. The anchor at $\lambda=\frac12$ sits just past the join, which is why
its value is the transcendental $6+18\ln\frac34$ rather than a multiple of $\pi$.

Third, panel (c) is the paper's central object with no fitting anywhere in it. On the cubic
range $g\propto\lambda^{3}$, so the elasticity is exactly $3$ and the amplification is exactly
$3g$; the dashed and solid curves coincide to plotting accuracy. Past the join the amplification
peaks and then falls to zero as the rule saturates, which is the same statement as the
humped interval width of \S\ref{sec:reallots}, seen one level down. A reader who wants the
intuition behind ``the interval collapses at both ends'' can get it here, from a curve that is
elementary.

Fourth, the figure doubles as a verification. All three panels come from a single cf/FFT inversion: the joint density of $(\Xbar,s)$ depends on the parent and $n$ only, so sweeping the limit only re-masks one grid. That one inversion reproduces, independently of \S\ref{sec:anchor},
every number quoted there: admitted-cell mass $0.999771$ against a whole-grid mass of $1$ (the
deficit is an inversion and masking effect at the cone boundary, not truncation;
Appendix~\ref{app:engine}); the anchor at $0.821489$ against
the closed-form $0.821723$, a grid error of $2.3\times10^{-4}$; the cubic at $\lambda=\sqrt3/4$
giving $0.5889$ against a closed-form $0.5890$, hence $3g=1.767$; and a peak amplification of
$1.801$ at $\lambda=0.442$.

\subsection{The scale reduction, step by step}\label{app:anchorscale}

Section~\ref{sec:anchor} asserts that for a Uniform parent at $n=3$ the acceptance probability
depends on $\sigma$ only through the ratio $\lambda=L/w$, and that the amplification is exactly
$3G$ on the cubic range. Both statements are a few lines of algebra, given here in full.

\emph{Step 1: the width carries the scale.} A Uniform parent centered on the target $T$ with
support width $w$ is the law of $T+wU$ with $U\sim\mathrm{U}[-\tfrac12,\tfrac12]$. Its variance
is $w^{2}\Var U=w^{2}/12$, so
\begin{equation}
\sigma=\frac{w}{\sqrt{12}},\qquad\text{equivalently}\qquad w=\sigma\sqrt{12}:
\end{equation}
for this family the standard deviation and the support width are the same parameter in
different units, and varying $\sigma$ at fixed mean is varying $w$.

\emph{Step 2: the pair $(\Xbar,s)$ inherits the scale linearly.} Write each observation as
$X_i=T+wU_i$. The sample mean is linear, and the sample standard deviation is
location-invariant and scale-equivariant, so
\begin{equation}
\Xbar-T=w\,\bar U,\qquad s(X_1,X_2,X_3)=w\,s(U_1,U_2,U_3),
\end{equation}
and the joint law of $\big(\bar U,\,s(U)\big)$ contains no parameter at all.

\emph{Step 3: the acceptance event divides through by $w$.} The wedge accepts when
$|\Xbar-T|+ks\le L$, that is, when $w\big(|\bar U|+k\,s(U)\big)\le L$. Dividing by $w$,
\begin{equation}
g(\sigma)=\Prob\Big[\,|\bar U|+k\,s(U)\le\frac{L}{w}\,\Big]=:G(\lambda),
\qquad\lambda=\frac{L}{w}=\frac{L}{\sigma\sqrt{12}}\,.
\end{equation}
Everything $\sigma$ can do to the acceptance probability, it does through the single number
$\lambda$; the function $G$ is a fixed property of the Uniform triple, computed once.

\emph{Step 4: the amplification is an elasticity.} Since
$\ln\lambda=\ln L-\ln\sigma-\ln\sqrt{12}$, a relative change in $\sigma$ is a relative change
in $\lambda$ with the sign flipped: $d\lambda/d\ln\sigma=-\lambda$. By the chain rule,
\begin{equation}
\frac{dG}{d\ln\sigma}=-\lambda\,G'(\lambda),
\qquad\text{hence}\qquad
\Big|\frac{dG}{d\ln\sigma}\Big|=\lambda\,G'(\lambda),
\end{equation}
the sensitivity of $G$ to the log-scale (the elasticity proper would divide by $G$). This is also why $\ln\sigma$ is the natural coordinate
throughout the paper: posterior uncertainty in a scale parameter is multiplicative, and the
amplification that multiplies it is dimension-free.

\emph{Step 5: on the cubic range the elasticity is the exponent.} For $\lambda\le\sqrt3/4$
the exact-evaluation paper \cite{paper1} gives $G=\tfrac{4\sqrt3\pi}{3}\lambda^{3}$, a pure power, and the
elasticity of a power law is its exponent: $\lambda G'/G=3$, which is \eqref{eq:amp3G},
$|dG/d\ln\sigma|=3G$ exactly. At the end of the range the quoted numbers are themselves closed
forms:
\begin{equation}
G\big(\tfrac{\sqrt3}{4}\big)=\frac{4\sqrt3\pi}{3}\cdot\frac{3\sqrt3}{64}
=\frac{3\pi}{16}=0.589049,
\qquad
\Big|\frac{dG}{d\ln\sigma}\Big|=\frac{9\pi}{16}=1.767146 .
\end{equation}

\emph{Step 6: the exact $G$, and how far it is known.} The reduction turns ``an exact
expression for $g(\sigma)$'' into a question about the single function $G$, and the exact-evaluation
paper answers it in closed form over four fifths of the probability range \cite{paper1}. On the
cubic range, substituting $\lambda=L/(\sigma\sqrt{12})$ back gives the exact law in the
original variables,
\begin{equation}\label{eq:gsigma}
g(\sigma)=\frac{4\sqrt3\pi}{3}\Big(\frac{L}{\sigma\sqrt{12}}\Big)^{3}
=\frac{\pi L^{3}}{18\,\sigma^{3}},
\qquad \sigma\ge\frac{2L}{3},
\end{equation}
an inverse-cube law in the standard deviation, from which $|dg/d\ln\sigma|=3g$ is immediate a
second time. Beyond the cubic the closed form continues but stops being compact: on
$\sqrt3/4<\lambda<\tfrac12$ the acceptance region's sector integral splits into three explicit
regimes \cite{paper1}, and $G$ is a polynomial in $\lambda$ together with four evaluations of one
elementary primitive. Written out, with
\begin{equation}\label{eq:band2defs}
e=\tfrac12-\lambda,\quad c=\tfrac12+\lambda,\quad
v_1=\frac{\sqrt3\,e}{2-\sqrt3},\quad v_2=\frac{\sqrt3\,c}{2+\sqrt3},\quad
p(v)=\frac{\sqrt3\,(e+v)}{2v},\quad q(v)=\frac{\sqrt3\,(c-v)}{2v},
\end{equation}
and the primitives of $\big(u\arcsin u+\sqrt{1-u^{2}}\,\big)(2u\pm\sqrt3)^{-4}$,
\begin{align}
\mathcal M_-(u)&=\tfrac1{24}\bigg[\frac{2\sqrt{1-u^{2}}\,(5+4u(u-\sqrt3))-(\sqrt3-6u)\arcsin u}{(\sqrt3-2u)^{3}}\notag\\
&\qquad{}+\sqrt3\log\big|\sqrt3-2u\big|-\sqrt3\log\big|2-\sqrt3u+\sqrt{1-u^{2}}\big|\bigg],\label{eq:Mminus}\\
\mathcal M_+(u)&=\tfrac1{24}\bigg[\frac{-2\sqrt{1-u^{2}}\,(5+4u(u+\sqrt3))-(\sqrt3+6u)\arcsin u}{(\sqrt3+2u)^{3}}\notag\\
&\qquad{}-\sqrt3\log\big|\sqrt3+2u\big|+\sqrt3\log\big|2+\sqrt3u+\sqrt{1-u^{2}}\big|\bigg],\label{eq:Mplus}
\end{align}
the acceptance probability on this band is
\begin{equation}\label{eq:band2}
\begin{aligned}
G(\lambda)={}&8\sqrt3\pi\Big(\tfrac{\lambda v_1^{2}}{2}-\tfrac{v_1^{3}}{3}\Big)
+2\sqrt3\pi\Big(\tfrac{(3-2\lambda)(v_2^{2}-v_1^{2})}{2}+\tfrac{2(v_2^{3}-v_1^{3})}{3}\Big)\\
&-16\sqrt3\pi\Big(\tfrac{\lambda(\lambda^{2}-v_2^{2})}{2}-\tfrac{\lambda^{3}-v_2^{3}}{3}\Big)\\
&+288\sqrt3\,e^{3}\Big[\mathcal M_-\big(p(\lambda)\big)-\mathcal M_-\big(p(v_1)\big)\Big]
-288\sqrt3\,c^{3}\Big[\mathcal M_+\big(q(\lambda)\big)-\mathcal M_+\big(q(v_2)\big)\Big].
\end{aligned}
\end{equation}
Inverse sines, logarithms and radicals only: no dilogarithm and no elliptic period. The absolute
values matter, since $p(v_1)=q(v_2)=1$ and the arguments exceed $\sqrt3/2$ on this band. Evaluated
in $30$-digit arithmetic, \eqref{eq:band2} gives $0.6176932862$ at $\lambda=0.44$, $0.6581024105$
at $0.45$, $0.6967126054$ at $0.46$, $0.7657603842$ at $0.48$ and $0.8192418244$ at $0.499$,
agreeing to ten significant figures with the reference evaluation of \cite{paper1}; it joins the
cubic continuously at $\lambda=\sqrt3/4$, where both equal $3\pi/16$. Together with the cubic this gives $G$ explicitly on $0\le\lambda<\tfrac12$, up
to $G=0.8217$. At $\lambda=\tfrac12$, a
knife-edge at which one geometric cap never binds, the assembly collapses to the anchor
$6+18\ln\tfrac34$. The last band is $\tfrac12\le\lambda<\lambda_{\max}$, where $\lambda_{\max}=\tfrac16+\tfrac1{\sqrt3}=0.744017$ is
the acceptance value \eqref{eq:supporttip} of the most dispersed configuration; past it the
rule can no longer fail and $G\equiv1$. On this band the regime structure is no longer constant, and a
closed form has not been carried out. The exact evaluator takes over there, which is why the
paper treats the closed form as the anchor's explanation rather than its computational engine.
The endpoint of that band is nonetheless known exactly. The maximum is attained at the six
$2$--$1$ vertices of the support cube, where the statistic has a nonvanishing gradient, so the
level set is a simplex and $1-G(\lambda_{\max}-\delta)=(162\sqrt3-270)\,\delta^{3}(1+o(1))$
\cite{paper1}. The cubic vanishing is the reason the amplification $|dG/d\ln\sigma|$ of
\S\ref{sec:anchor} collapses so sharply as the rule saturates, and it is the same vertex
geometry that fixes the ceiling of the sample-variance law in \cite{n3companion}, with the
parent's local density exponents setting the power in each case.

The collapse to one ratio is what makes this study special, and it needs one condition:
the parent is a pure scale family about the target ($\mu$ pinned at $T$), so one
parameter moves one ratio. As soon as $\mu$ is uncertain as well, $g$ depends on the two ratios
$(\mu-T)/w$ and $L/w$ separately, and nothing divides out. One is then back to the general
pushforward of \S\ref{sec:pushforward}, which is precisely the situation of the other two studies.

\section{The compound-symmetry parameter law, derived}\label{app:cslaw}

Each unit of a lot is $X_i=\mu+\sqrt{\rho}\,\sigma Z+\sqrt{1-\rho}\,\sigma\varepsilon_i$ with
$Z$ common to the lot and the $\varepsilon_i$ independent standard Normal, so that
$\operatorname{corr}(X_i,X_j)=\rho$ and the calibration sample of $m$ units is exchangeable,
$X\sim N_m\big(\mu\mathbf 1,\ \sigma^2[(1-\rho)I+\rho J]\big)$. Compound symmetry keeps the two
summaries independent and fixes both their laws exactly:
\begin{equation}\label{eq:cs}
\Xbar\sim N\Big(\mu,\ \sigma^2\Big\{\rho+\frac{1-\rho}{m}\Big\}\Big),
\qquad
\frac{(m-1)S^2}{\sigma^2(1-\rho)}\sim\chi^2_{m-1},
\qquad \Xbar\perp S^2 .
\end{equation}
The first law follows from $\Xbar=\mu+\sqrt{\rho}\,\sigma Z+\sqrt{1-\rho}\,\sigma\bar\varepsilon$,
the second from $X_i-\Xbar=\sqrt{1-\rho}\,\sigma(\varepsilon_i-\bar\varepsilon)$, in which $Z$
cancels. From one lot $\rho$ is not identified: $\Xbar$ is a single draw from a Normal and
carries no information about its own variance. With $\sigma$ rescaled to keep the identified product $\sigma^2(1-\rho)$ equal to
$\hat\sigma^2$, the posterior spread of $\mu$ relative to its independence value
$\hat\sigma/\sqrt m$ is $\sqrt{1+m\rho/(1-\rho)}$. For $m=30$ this is $2.5$ at $\rho=0.15$ and $3.7$ at $\rho=0.30$,
and it is unbounded as $\rho\to1$, so the prior's upper limit on $\rho$ is load-bearing
for the $\mu$ contribution of Appendix~\ref{app:decomp}. With
$\sigma(\rho)=\hat\sigma/\sqrt{1-\rho}$ the within-lot spread of the test sample is
$\sigma(\rho)\sqrt{1-\rho}=\hat\sigma$ for every $\rho$ and dependence enters only through the
between-lot term $\hat\sigma\sqrt{\rho/(1-\rho)}$. Conditional on $Z$ the test sample is
independent Normal with mean $\mu+\hat\sigma\sqrt{\rho/(1-\rho)}\,Z$ and spread $\hat\sigma$, so
the dependent acceptance probability is the independent one averaged over $Z$ with standard
Normal weight, a single one-dimensional integral, which the evaluator performs exactly.

\paragraph{Per-row sampling in the exposure study.} The natural implementation of
Table~\ref{tab:cehd} draws a fresh posterior sample for every row, and it gives identical problems
(rows with the same $n$) different answers. Among the $393$ rows at $n=6$, per-row sampling produces widths from
$0.9015$ to $0.9437$ (standard deviation $0.0071$). This scatter is purely algorithmic, but it reads as
vertical spread at fixed $n$ and swamps any scale-invariance check read off such a table. The
table is therefore computed once per distinct $n$, to a standard error below
$2\times10^{-4}$, with the row counts entering as weights; the numbers move by at most $0.003$.
Scale invariance holds because the algebra says so, and the data can neither support nor refute
it.

\paragraph{Consequences of the posterior \eqref{eq:cspost}.} The last because one lot does not identify $\rho$ at all, so its law is a modeling choice in
the strict sense. Two consequences shape everything downstream. The calibration mean does
\emph{not} average the shared factor away, so the posterior spread of $\mu$ grows with $\rho$,
$2.5$ times the independence value at $\rho=0.15$ and $3.7$ times at $\rho=0.30$ for $m=30$,
and without bound as $\rho\to1$. And $S^2$ sees only the within-lot component, so the data
identify the \emph{product} $\sigma^2(1-\rho)$ and not $\sigma$ and $\rho$ separately. The
within-lot spread of the test sample is governed by $v$ at every $\rho$. Dependence enters
the future test only through the between-lot term $\sqrt{v\rho/(1-\rho)}$, which comes from a fresh shared
factor. The forecast is for a \emph{new lot} from the same process, because that is the question a release rule
puts to a process. A forecast for further units of the lot already sampled is a different
target, reported once in Table~\ref{tab:905} for contrast. The evaluator is the one-dimensional integral of
\S\ref{sec:closed} extended to the shared factor, exact to machine precision
(Appendix~\ref{app:cslaw}).

\section{Three real lots, in detail}\label{app:lots}

Section~\ref{sec:reallots} reports the ninety-eight lots in aggregate: the interval collapses at
both ends of the operating characteristic and opens in the middle. This appendix takes one lot
from each of those three regimes and walks it through, because the aggregate statement is easy
to read as a claim about \emph{how much was known} when it is really a claim about \emph{where
the lot sits}.

Each lot is published as a summary: how many units were assayed, their mean content, their
standard deviation. Taking that summary as the parameter estimate and asking for the probability
that a lot from such a process passes $\langle905\rangle$ Stage 1 gives the point estimate. The
interval asks the follow-up: the process was not observed, it was inferred from a handful of
units, so how far can that probability move across the other processes the same data support?

\begin{table}[h]\centering\footnotesize
\caption{One lot from each regime. $\sigma$ range is the central $95\%$ of the posterior
$\sigma^2\mid\text{data}\sim\text{Inv-}\chi^2(n-1,s^2)$; the two columns after it give the exact
acceptance probability at each end of that range. The last two columns are the per-lot
quantities that Table~\ref{tab:reallots} aggregates. Lot labels name the source and the row:
BJBMS-L$k$ is row $k$ of Table~1 in \cite{vranic2008}; EJPB-S$t$-$r$-$k$ is run $r$ of
supplementary Table~S$t$ in \cite{lyytikainen2024}, $k$ numbering the lots across S7 and S8;
IJPX-F$k$-batch/cont is formulation $k$ of Table~6 in \cite{ijpx2024} on the batch or the
continuous line; IJP537-B$k$ is batch $k$ of Table~4 in \cite{goodwin2018}.}
\label{tab:worked}
\setlength{\tabcolsep}{2.6pt}
\begin{tabular}{@{}lcccccccc@{}}
\toprule
& & & & point & \multicolumn{2}{c}{plausible $\sigma$} & $95\%$ interval & \\
\cmidrule(lr){6-7}
lot & $n$ & $s$ & RSD & est.\ & range & $g$ at each end & on $g$ & width\\
\midrule
BJBMS-L12 & $10$ & $1.779$ & $1.84\%$ & $1.0000$ & $[1.22,\,3.25]$ &
$1.0000,\ 0.9948$ & $[0.9932,\,1.0000]$ & $0.0068$\\
EJPB-S8-2-15 & $10$ & $4.845$ & $4.86\%$ & $0.8865$ & $[3.33,\,8.85]$ &
$0.9997,\ 0.0901$ & $[0.0659,\,0.9995]$ & $0.9336$\\
IJPX-F09-cont & $21$ & $19.600$ & $19.50\%$ & $1.45{\times}10^{-4}$ & $[15.00,\,28.30]$ &
$1.59{\times}10^{-3},\ 4.55{\times}10^{-6}$ & $[3.4{\times}10^{-6},\,1.4{\times}10^{-3}]$ & $0.0014$\\
\bottomrule
\end{tabular}
\end{table}

\paragraph{The clear pass.} BJBMS-L12 runs at $1.84\%$ RSD. Its point estimate is $1.0000$, and
from ten units the true $\sigma$ could be anything in $[1.22,\,3.25]$, a $2.65$-fold range.
At the tight end the lot passes with probability $1.0000$; at the loose end, $0.9948$. Every
process the data admit still passes, so the interval has width $0.0068$ and tells nobody
anything they did not already have.

\paragraph{The clear failure.} IJPX-F09-cont runs at $19.50\%$ RSD, four times the dispersion the
rule tolerates. Its point estimate is $1.45\times10^{-4}$ and stays between $1.59\times10^{-3}$ and
$4.55\times10^{-6}$ across a $[15.00,\,28.30]$ range for $\sigma$. Those three numbers span two
orders of magnitude and all round to $0.0000$, which is why the table above carries them in
scientific notation; on the scale of a release decision the verdict is immune to what was not
known.

\paragraph{The contested lot.} EJPB-S8-2-15 runs at $4.86\%$ RSD and reports
$0.8865$, an $89\%$ chance of passing, which is the kind of number that gets acted
on. It carries \emph{exactly the same} statistical ignorance as the first lot: also $n=10$, also
a $2.65$-fold range for $\sigma$, here $[3.33,\,8.85]$. But now the two ends of that range give
acceptance probabilities of $0.9997$ and $0.0898$. If the true standard deviation is $3.3$ the
lot is a certainty; if it is $8.8$ it fails nine times in ten; and the data cannot distinguish
those cases. The $95\%$ posterior interval is $[0.066,\,1.000]$, which spans ``do not bother'' to ``ship
it''.

The reported $0.8865$ is not a miscalculation. It is the exactly correct acceptance probability
for one particular process, the one whose standard deviation is precisely $4.845$, reported as though that process were known to be the true one. With the exact evaluator of \cite{paper1},
this is now the \emph{only} error left: no computational discrepancy of
unknown sign sits on top of it. The residual $0.94$ of interval width can therefore be attributed
squarely to ten units having been assayed instead of a hundred.

\paragraph{Why equal uncertainty has unequal consequence.} The two $n=10$ lots are the
instructive pair. Their posteriors for $\sigma$ are equally wide in relative terms --- the factor $2.65$ is a function of the sample size alone --- yet one interval is $0.0068$ across and the
other $0.9336$. The difference is not statistical but geometric: it comes from where each lot sits relative to the acceptance region. BJBMS-L12 sits far inside the
acceptance region, where the operating characteristic is flat at one and a factor of two in
$\sigma$ moves nothing. IJPX-F09-cont sits far outside it, where the characteristic is flat at
zero. EJPB-S8-2-15 sits on the shoulder between them, where the characteristic is steep, and
there the same factor of two in $\sigma$ is amplified into almost the whole unit interval. That
amplification is the quantity $|\partial g/\partial\ln\sigma|$ of \S\ref{sec:unify}, and this
table is what it looks like on three real lots.

The same contrast can be read off without any reference to a curve. For a lot with a known mean
the Stage-1 condition $|M-\Xbar|+2.4\,s\le15$ is a ceiling on the sample spread alone,
$s\le(15-|M-\Xbar|)/2.4$. BJBMS-L12 has $\Xbar=96.700$, which the reference-value clip of $\langle905\rangle$ carries to $M=98.500$.
It therefore spends $1.800$ of the budget, leaving a ceiling of $5.500$. Its plausible $\sigma$ reaches
only $3.248$, and a sample would have to come in some seventy percent above its own $\sigma$ to
breach the ceiling. EJPB-S8-2-15 is on target, spends nothing, and has the larger ceiling of
$6.250$ --- but its plausible $\sigma$ runs to $8.845$, so the ceiling lies \emph{inside} the
range and the admissible processes disagree about the outcome. The off-target lot is the safe
one: what matters is not how much of the budget a lot spends but whether what remains separates
it from the values of $\sigma$ its own data still permit.

\subsection{The ninety-eight lots by source}\label{app:bysource}

\begin{table}[h]\centering\small
\caption{The same lots by source, all at $\ntest=10$, so the column $\mest$ is now the evidence
alone and no longer doubles as the test. Two effects that the earlier
$\ntest=\mest$ convention confounded separate here: $\mest$ sets how fast a width shrinks, and
how close the lot sits to the acceptance boundary sets whether there is a width to shrink. The
ten commercial batches are determined at every lot; the process-development runs, at the
smallest $\mest$ and the most contested geometry, are the widest.}
\label{tab:bysource}
\begin{tabular}{@{}lrrcc@{}}
\toprule
source & lots & $\mest$ & median width & largest width\\
\midrule
commercial batches \cite{goodwin2018} & $10$ & $50$--$55$ & $0.000$ & $0.000$\\
routine lots \cite{vranic2008} & $30$ & $10$ & $0.000$ & $0.482$\\
batch vs.\ continuous \cite{ijpx2024} & $40$ & $21$ & $0.000$ & $0.837$\\
process development \cite{lyytikainen2024} & $18$ & $10$ & $0.030$ & $0.962$\\
\bottomrule
\end{tabular}
\end{table}

Table~\ref{tab:bysource} attributes it. All ten commercial batches are degenerate at the
resolution reported, and the wide cases are all at $\mest=10$ or $21$: a lot sampled at $55$
units supports a much tighter posterior on $(\mu,\sigma)$, and with $\ntest$ held at $10$ that
alone closes its interval. But a large $\mest$ does not buy a narrow interval on its own. Two
things set a width: $\mest$ fixes how concentrated the posterior is, and the lot's position
relative to the boundary fixes how much of that posterior the rule can separate. The $\mest$ column therefore orders
widths only when position is held roughly fixed, and across four sources it is not. The
$\mest=10$ and $\mest=21$ sources invert on their largest width, $0.482$ against $0.837$. The same mechanism
operates in \S\ref{sec:cehd}, where a design calibrated to a plug-in value of $0.85$
sits in the uninformative regime at the sample sizes the field uses (median $n=8$). That is a
statement about the calibrated design at those sizes, not a finding that field assessments at
their own limits carry those widths.

\subsection{Escalating the test size on the ninety-eight lots}\label{app:escalate}

Escalating the test size does not resolve the contested lots. The width depends on
two things: the posterior, which the calibration sample fixes, and the acceptance region,
which $\ntest$ and $k$ fix. Escalating the test changes only the region, so all it can do is change
which lots sit near a boundary. Lots carried inside the enlarged region become determined, and
the lots that remain contested are contested against a new boundary. Pushing the same
posteriors through the Stage-2 wedge at $\ntest=30$, $k=2.0$ carries seven further lots into
the clearly-passing band (eighty-three of ninety-eight, against seventy-six), while the median
width among the five still contested rises to $0.986$ from $0.769$. Only the mechanism
generalizes, the finite-$n$ residual of \S\ref{sec:unify} seen on real lots. Two caveats apply. First, the
Stage-2 evaluation is the wedge alone, without the individual-unit clause; at the standard
limit and the Normal operating points checked, that clause removes less than $0.001$ of
acceptance probability. Second, it is a
standalone test of thirty units rather than the second stage of the coupled sequence. That difference is not free
at a calibrated limit: reaching Stage~2 conditions on a Stage-1 failure, which
tilts $(\Xbar,s)$ adversely ($-0.076$ at the calibrated limit of \S\ref{sec:905} against
$-0.001$ at the standard's own limit; Appendix~\ref{app:twostage}). The Stage-2 figures are read as
a rule in the catalog, not as the chance a lot survives $\langle905\rangle$.

\section{The two-stage sequence, worked}\label{app:twostage}

$\langle905\rangle$ is sequential: ten units first ($k=2.4$); on failure, twenty more, with the
criterion applied to all thirty ($k=2.0$). The body treats the stages as standalone rules; this
appendix works the sequence itself, at an illustrative contested lot
$(\hat\mu,\hat\sigma)=(100,\,5.5)$ with a Normal parent, and asks what escalation buys. The
individual-unit clause is omitted throughout, and the parent is independent Normal, $\rho=0$.
Standalone acceptance probabilities are the one-dimensional integral of \S\ref{sec:closed}. The
coupled event (pass Stage~1, or fail it and pass Stage~2 on all thirty) is a three-dimensional
product quadrature over $(\bar x_{10},s_{10}^2,s_{20}^2)$. Within it, the Stage-2 pass set in $\bar x_{20}$
is an interval found by bisection. The quadrature converges to $2\times10^{-5}$ at the point estimate and
carries an error of order $10^{-3}$ at the faster order used inside the pushforwards; a
$4\times10^{5}$-lot simulation agrees to $10^{-4}$. Each pushforward is under the posterior
\eqref{eq:cspost} at $\rho=0$ and is integrated by two scrambled Sobol' sets, $2^{16}$ points for the
standalone rules and $2^{10}$ for the coupled event. Endpoints carry standard errors of at
most $0.011$.

\paragraph{Matched difficulty: escalation does not narrow the interval.} Calibrating each
stage's limit so that both point estimates equal $0.850$ (the body's control) isolates
information from difficulty:

\begin{center}\small
\begin{tabular}{@{}llcc@{}}
\toprule
what you know & what you ask & $95\%$ interval & width\\
\midrule
the ten Stage-1 units & Stage-1 rule ($n=10$) & $[0.05,\ 1.00]$ & $0.95$\\
those ten $+$ twenty more & Stage-2 rule ($n=30$) & $[0.09,\ 1.00]$ & $0.91$\\
those same thirty & Stage-1 rule again & $[0.35,\ 0.99]$ & $0.63$\\
\bottomrule
\end{tabular}
\end{center}

Rows one and two are the escalation: three times the data, four hundredths of narrowing ---
the posterior tightens by $1.8\times$ while the sharper test's amplification grows by
$1.6\times$ (\S\ref{sec:unify}), and the two very nearly cancel. Row three is the control that
locates the information: the same thirty units, aimed at the \emph{unchanged} Stage-1
question, cut the width from $0.95$ to $0.63$. The twenty extra units carry real information; the sequence spends it on
asking a harder question.

\paragraph{The coupled rule is almost its own second stage.} At the real constants ($L=15$ at
both stages) the compound event --- pass Stage 1, or fail it and pass Stage 2 on all thirty ---
has probability $0.9965$ at the point estimate, against $0.9959$ for Stage 2 standalone: the
union is dominated by its more forgiving branch, and Stage 1 functions as an early exit that
saves twenty units when the lot is clearly conforming while almost never changing the final
verdict. \emph{This holds at the standard's constants and should not be carried to a
calibrated one.} The quantity it rests on is the selection effect of conditioning on Stage-1
failure, and that effect grows as the limit tightens: $-0.009$ here, $-0.001$ at the flagship's
spread under $L=15$, but $-0.076$ at the calibrated $L=11.32$ of \S\ref{sec:905}, where a
standalone $0.850$ falls to $0.774$ once the failure that produced the second stage is
conditioned on.

\paragraph{The rescue decision inherits the full width.} Mid-sequence is the one moment a
probability would change an action. Suppose Stage 1 has just failed at
$(\bar x_{10},s_{10})=(100,\,6.4)$, barely over the threshold $15/2.4=6.25$. The probability
that the twenty additional units rescue the lot has point estimate $0.948$, but its
$95\%$ interval is $[0.01,\ 1.00]$, width $0.99$. That interval uses the posterior those ten units support,
which is the same-lot posterior of \S\ref{sec:905} because the twenty units share the lot. The earlier
version's $[0.15,\ 1.00]$ came from independent Normal and folded-Normal draws that were not
that posterior. Conditioning on the failure adds nothing beyond those ten values, since the failure is a
function of them. The verdict hinges on whether the true $\sigma$ sits below the Stage-2 cliff near
$15/2.0=7.5$, and ten units place $\sigma$ at $6.4$ give or take thirty percent --- straddling
it. After all thirty units are in, the coupled interval is $[0.54,\ 1.00]$, width $0.46$:
genuinely narrow, and informing no decision, because none remains. The hump of
\S\ref{sec:reallots} read along the sequence in time: the probability becomes dependable
exactly when nothing depends on it.

\paragraph{What escalation does buy.} Accuracy of the verdict, not certainty about it. Over
what range of true $\sigma$ is each calibrated test's outcome genuinely a lottery (acceptance
probability between $0.95$ and $0.05$)? Stage 1: $\sigma\in[4.9,\,10.6]$, a doubt band of
$5.7$; Stage 2: $[5.1,\,7.9]$, a band of $2.8$ --- half as wide. A lot at $\sigma=8$ slips past
Stage 1 twenty-six percent of the time and past Stage 2 four percent; a lot at $\sigma=5$ is
wrongly failed six percent against three. Escalation is the standard's designers buying a
sharper instrument, and the purchase is delivered --- to the population of lots. What it cannot
deliver, by the cancellation of \S\ref{sec:unify}, is a more predictable verdict on the lot in
hand.

\section{The summary likelihood from the engine}\label{app:nnlik}

Section~\ref{sec:skewed} needs a posterior for a non-Normal parent from published summaries
alone. This appendix records how the evaluator of \cite{paper1} supplies it and what the
construction was checked against.

\paragraph{Construction.} Let $X=\mu+\sigma Z$ with $Z$ standardized to mean zero and unit
variance and its shape fixed. For $n$ units the engine returns the joint density $f_T$ of
$(T_1,T_2)=(\sum Z_i,\sum Z_i^2)$ on a grid. The standardized summaries are
$\bar Z=T_1/n$ and $S_Z^2=(T_2-T_1^2/n)/(n-1)$, so their joint density is
\begin{equation}\label{eq:fzs}
f_n(\bar z,s_z)=2n(n-1)\,s_z\;f_T\big(n\bar z,\ (n-1)s_z^2+n\bar z^2\big),
\end{equation}
and since $(\bar X,S)=(\mu+\sigma\bar Z,\sigma S_Z)$ the likelihood of an observed pair
$(\bar x,s)$ from $m$ units is \eqref{eq:sumlik}. Under $\pi(\mu,\sigma)\propto1/\sigma$ the
posterior density is proportional to $\sigma^{-3}f_m((\bar x-\mu)/\sigma,\,s/\sigma)$; for a
Normal $Z$ this is the product of the inverse-chi-square law for $\sigma^2$ and the conditional
Normal law for $\mu$, which is the check below. The posterior is tabulated on a
$481\times401$ grid in $(\mu,\ln\sigma)$, $\mu$ within $\pm16\,s/\sqrt m$ of $\bar x$ and $\sigma$
within $s\times[0.15,\,8]$, $f_m$ read off the engine grid by bilinear interpolation, and
normalized numerically. The acceptance probability
$g(\mu,\sigma)=\Prob[(\mu+\sigma\bar Z,\sigma S_Z)\in A]$ is the standardized density for the
test size $n_t$ summed over the cells that the transformed region admits, tabulated once per
shape and test size on a common $(\mu,\sigma)$ grid ($\mu$ in steps of $0.125$ on $[85,115]$,
$241$ log-spaced values of $\sigma$ on $[0.5,30]$) and interpolated bilinearly. Grids are
$N=2048$; the characteristic function of a bounded or skewed $Z$ is integrated over its density
with $2000$ Gauss--Jacobi or Gauss--Legendre nodes rather than represented by point masses.
Shapes: Normal; the standardized Beta$(1.5,8)$, skewness $1.057$, support $[-1.40,\,7.48]$
standard deviations; the skew-Normal with shape parameter $2.174$, skewness $0.500$. No
correlation is carried; $\rho=0$ throughout.

\paragraph{Checks.}
\begin{center}\small
\begin{tabular}{@{}p{0.62\linewidth}p{0.34\linewidth}@{}}
\toprule
check & result\\
\midrule
Normal shape: posterior quantiles ($2.5$, $50$, $97.5\%$) of $\sigma$ and $\mu$ from the grid
against the inverse-chi-square and Student-$t$ closed forms at $(100,5)$, $m=30$ and $m=10$ &
maximum discrepancy $1.2\times10^{-3}$\\
Normal shape: Stage-1 interval from the grid against the Sobol' evaluation of
\S\ref{sec:reallots} at four $(\bar x,s,m)$ & $\le1.2\times10^{-3}$ on either endpoint\\
Normal shape: the ninety-eight lots against Table~\ref{tab:reallots} & $\le2.3\times10^{-3}$\\
Interpolated $f_m$ against the closed form, posterior-weighted relative error & $2.5\times10^{-5}$
($m=30$), $7.3\times10^{-5}$ ($m=10$)\\
Beta shape: $g$ at three $(\mu,\sigma)$ against $2\times10^6$ simulated lots & within one
standard error at each\\
Beta shape: $f_m$ integrates to one ($m=10,30$) & $1.000000$\\
Beta shape: $f_{10}$ at three points against a kernel estimate from $10^6$ simulated pairs &
within $1$, $1$ and $4\%$\\
Beta characteristic function, $1000$ against $4000$ quadrature nodes & $g$ unchanged to
$10^{-14}$\\
$N=1024$ against $2048$ (Beta, Stage 1) & $2.7\times10^{-4}$\\
Posterior grid doubled and box widened, two lots & endpoints move $\le1.4\times10^{-3}$\\
Skew-Normal shape: $g$ at two points against simulation & within one standard error\\
\bottomrule
\end{tabular}
\end{center}
Pipeline error on an endpoint is therefore about $2\times10^{-3}$, two orders below the effects
reported in \S\ref{sec:skewed}.

\paragraph{The shape, calibrated.} Two sources publish per-unit values: the eighteen
process-development runs of ten tablets in the supplementary tables of \cite{lyytikainen2024},
eleven passing and seven failing the acceptance value, and the reference contents of the
Copenhagen tablet set \cite{dyrby2002}, three hundred and ten tablets of four types at three
production scales. The object identifiers of the latter carry a batch prefix, and the archive is
thirty-one batches of ten tablets, not four groups of thirty: seven full-scale batches, twelve
pilot batches and twelve laboratory calibration batches. The three calibration batches of a
tablet type sit at means spanning twenty-five to thirty per cent of their average, a designed
concentration ladder, while their within-batch relative standard deviations are $0.7$ to
$2.4\%$; grouping those tablets by type manufactures a $12$ to $13\%$ spread that no batch has.
The pilot batches, whose three batches of a type agree to about $0.1\%$ in mean and whose
within-batch relative standard deviations are $2.0$ to $3.5\%$, are the production replicates
here. The primary set is thirty batches and three hundred units, and the seven full-scale
batches enter as a sensitivity analysis.

The within-batch standardized residuals $r=(x-\bar x)/s$ are pooled. They satisfy
$\sum_i r_i=0$ and $\sum_i r_i^2=n-1$ within each batch, so they are dependent and their
marginal law is not the parent's; a product of parent densities over them is a pseudolikelihood
and a $\chi^2_1$ cutoff on it is not calibrated. The estimator is therefore calibrated as a
whole: for each candidate skewness, four thousand datasets with the actual batch design are
drawn from a skew-Normal of that skewness, the same pooled statistic is formed, and the $2.5$,
$50$ and $97.5$ per cent quantile curves are inverted at the observed value. The resulting
interval has $95\%$ coverage by construction under that model. An independent replicate of the
whole calibration moves the estimate by $0.004$ and the lower limit by $0.004$.
\begin{center}\footnotesize\setlength{\tabcolsep}{4pt}
\begin{tabular}{@{}lccc@{}}
\toprule
& runs, $18\times10$ & pilot, $12\times10$ & pooled, $30$ batches\\
\midrule
raw pooled skewness (excess kurtosis) & $+0.651$ ($+0.71$) & $+0.073$ ($-0.26$) & $+0.420$ ($+0.32$)\\
batch bootstrap, $95\%$ & $[+0.13,\,+1.21]$ & $[-0.34,\,+0.41]$ & $[+0.06,\,+0.79]$\\
null band under Normality, $95\%$; $p$ & $[-0.27,\,+0.27]$; $<10^{-4}$ & $[-0.33,\,+0.33]$; $0.66$ & $[-0.21,\,+0.21]$; $0.0002$\\
calibrated skewness ($95\%$) & $+0.99$ $[+0.70,\,+0.99]$ & --- & $+0.74$ $[+0.38,\,+0.99]$\\
\bottomrule
\end{tabular}
\end{center}
The upper limits marked $0.99$ run into the skew-Normal family's own bound of $0.995$: these
data do not bound the skewness above within that family. Adding the seven full-scale batches,
whose pooled residual skewness is $-0.765$, moves the pooled statistic to $+0.196$ and the
calibrated estimate to $+0.365$ with interval $[+0.02,\,+0.68]$, so the answer depends on the
inclusion rule as much as on the calibration.

A common shape is not tenable. The per-batch skewness values spread with standard deviation
$1.03$, against $0.56$ expected under a common shape at the calibrated estimate, with a $95\%$
simulated band of $[0.41,\,0.71]$; no one of four thousand simulated designs reached the
observed spread. The skew is carried by the seven failing, high-dispersion runs, whose raw
pooled skewness is $+1.86$; the eleven passing runs alone give $-0.12$ and the pilot batches
$+0.07$, the latter inside its null band at $p=0.66$. Per-batch skewness at $n=10$ carries a
standard error of $0.77$; nine of thirty batches fail a Shapiro--Wilk test at $0.05$, with
median $p$ $0.39$. A raw-unit skew-Normal likelihood with batch-specific location and scale and
a common shape puts its profile maximum at the family boundary, which is what an incidental-parameter
problem with sixty nuisance parameters and ten units per batch would do; it is reported as a
cross-check and not as an estimate. What these data support is that the parent is not Normal and
that its shape travels with the process, not a single transferable routine-lot shape. The shapes
carried in \S\ref{sec:skewed} are accordingly declared scenarios spanning the calibrated range:
the skew-Normal at $0.74$, the ends of its interval and the inclusion-sensitivity value
$0.365$; the stress case, Beta$(1.5,8)$ at $1.06$, lies just above that range and below the
failing-run subset.

\paragraph{The lots at the calibrated shapes.} At skewness $0.74$, Stage 1, $L=15$, the $2.5\%$
endpoint sits below the Normal one in eighty-three of ninety-eight lots, by at most $0.130$,
fifteen lots above $0.05$ and eight above $0.10$, median $5\times10^{-5}$; the width moves by up
to the same $0.130$, twenty-two lots above $0.05$ and nine above $0.10$. At $0.38$ the largest
endpoint move is $0.060$ with four lots above $0.05$ and none above $0.10$; at the
inclusion-sensitivity value $0.365$ it is $0.057$ with three; at $0.99$ it is $0.183$ with
twenty-two lots above $0.05$. Three lots change band at $0.74$, as under the stress case. On the
spread sweep at $m=30$ the lower-endpoint shift at $\hat\sigma\approx3.6$ is
$-0.069$ at the calibrated shape, $-0.104$ at $0.99$ and $-0.129$ under the stress case, and the
width shift at the same point is $+0.069$, $+0.104$ and $+0.129$.

\paragraph{The flagship-type point.} At $(\bar x,s,m)=(100,5,30)$ under the Stage-2 rule with the
Normal-calibrated limit and $\rho=0$, the plug-in, posterior mean and $95\%$ interval are
$0.850$, $0.713$, $[0.094,\,0.999]$ for the Normal shape; $0.813$, $0.690$, $[0.083,\,0.998]$
for the Beta shape; $0.837$, $0.704$, $[0.091,\,0.998]$ for the skew-Normal. Recalibrating each
shape to its own $0.850$ gives $[0.104,\,0.999]$ (Beta, $L=11.583$) and $[0.099,\,0.998]$
(skew-Normal, $L=11.404$). The posterior $97.5\%$ quantile of $\sigma$ is $6.72$ (Normal),
$7.22$ (Beta) and $6.86$ (skew-Normal): the skewed likelihood admits a larger true spread behind
the same reported $s$, which is the mechanism of the endpoint shift in \S\ref{sec:skewed}. At
this contested point the shift is small, $0.01$; it is the tight-spread, apparently-passing
region that carries it.

\paragraph{The largest moves on the lots.} Under the Beta shape at Stage 1, $L=15$, the six
largest downward moves of the $2.5\%$ endpoint relative to the Normal posterior, with the lot's
reported summary, sample size and the two endpoints: BJBMS-L20 ($100.84\pm2.34$, $m=10$)
$0.953\to0.733$; BJBMS-L10 ($99.95\pm2.80$, $10$) $0.785\to0.571$; IJPX-F15-cont
($103.0\pm2.8$, $21$) $0.923\to0.711$; IJPX-F13-batch ($100.7\pm3.4$, $21$) $0.849\to0.663$;
IJPX-F03-cont ($103.7\pm2.5$, $21$) $0.959\to0.774$; IJPX-F04-cont ($103.15\pm3.6$, $21$)
$0.592\to0.413$. Every one is a lot whose Normal plug-in sits at or near one. The upper endpoint
moves by at most $0.180$ and the plug-in by at most $0.100$ across the ninety-eight.

\subsection{The parent shape of \S\ref{sec:skewed}: sweep and lot comparisons}\label{app:shape}

The construction, its checks and the shape estimate are above; this subsection gives the full
readings of Figure~\ref{fig:oc} and of the lot comparison that \S\ref{sec:skewed} summarizes.

\begin{figure}[tbp]\centering
\includegraphics[width=0.97\linewidth]{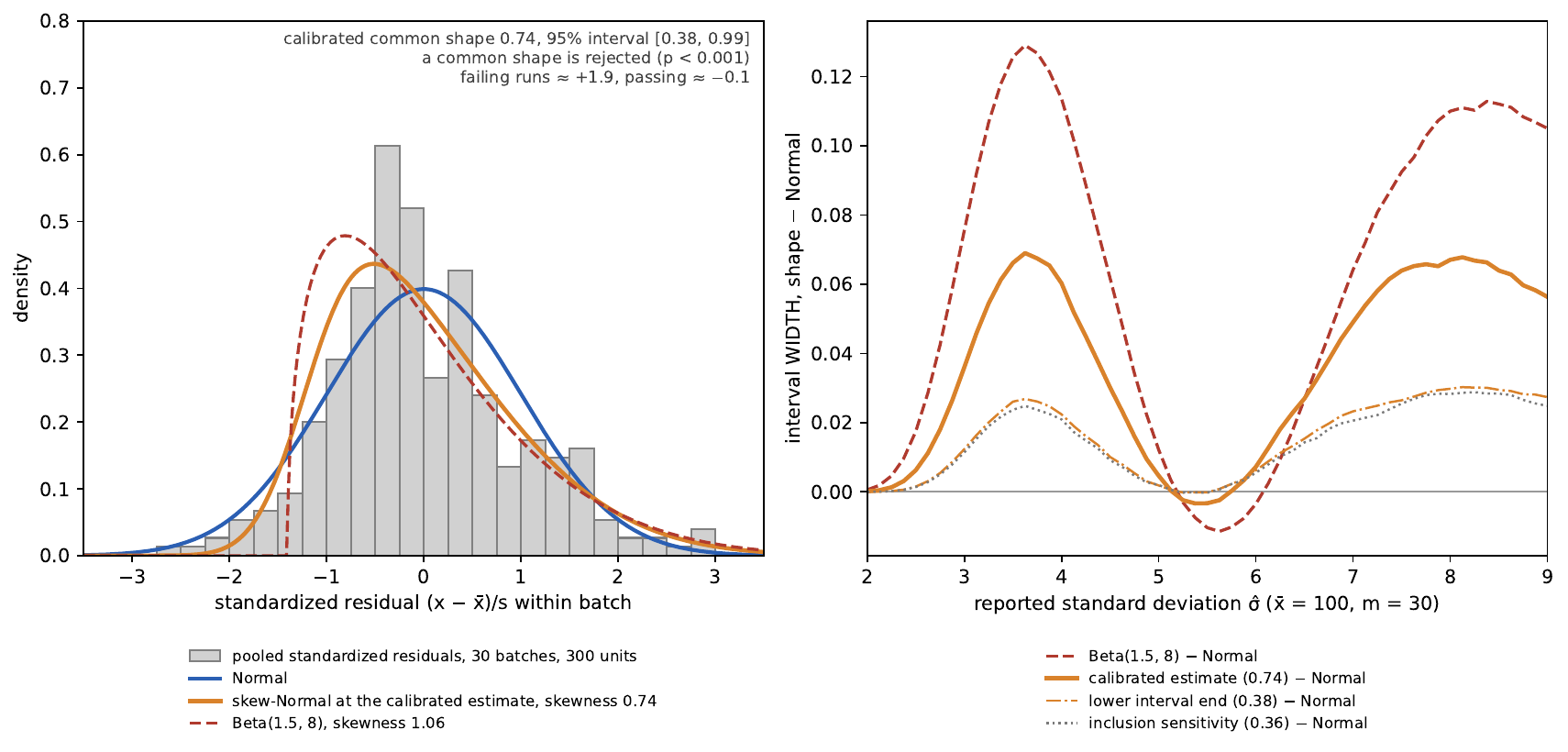}
\caption{The parent shape, calibrated. \emph{Left:} within-batch standardized residuals pooled
over thirty batches and three hundred tablets, with the Normal density, the skew-Normal at the
calibrated common-shape skewness $0.74$, interval $[0.38,\,0.99]$, and the
stress-case Beta$(1.5,8)$. A common shape is itself rejected across batches, so the curve is a
scenario and not a fitted parent. \emph{Right:} the interval \emph{width} along the spread sweep
of Figure~\ref{fig:oc}, each shape minus the Normal: the calibrated shape, the lower end of its
interval, the inclusion-sensitivity value, and the stress case. The width is not shape-free.}
\label{fig:shape}
\end{figure}

Figure~\ref{fig:oc} sweeps the Stage-1 rule at its own limit across the operating
characteristic with $m=30$, the posterior under each parent's own likelihood carried along, for
the Normal and the stress case; Figure~\ref{fig:shape} adds the calibrated shape and the ends of
its interval on the spread sweep. The figures support three readings. First, the \emph{summary} width statistics are nearly
shape-free: $0.62$ and $0.64$ at target for the Normal and the stress case, and maxima of $0.74$
and $0.75$ on the two sweeps. The three-regime pattern of Table~\ref{tab:reallots} therefore survives
the parent. Second, that similarity is a statement about the curves' peaks and not about agreement at a
given operating point: compared \emph{at the same reported summaries}, the width moves by up to
$0.129$ on the spread sweep under the stress case and $0.069$ at the calibrated shape, both at
$\hat\sigma\approx3.6$, against at most $0.023$ and $0.013$ on the location sweep. Matching
maxima do not imply pointwise agreement, and conditioning on each model's own probability band
changes which lots are being compared. Third, the \emph{endpoints} move with the skew, by an amount
that depends on which regime the lot is in. At the calibrated skewness the lower endpoint on
the spread sweep falls at most $0.069$ below the Normal one, $0.104$ at the upper end of the
interval, at $\hat\sigma\approx3.6$; under the stress case it falls $0.129$ there, and the upper
endpoint rises at large $\hat\sigma$. On the mean sweep the differences stay
within $0.03$ with mixed sign. The mechanism is the likelihood. A right-skewed sample spread
has a heavier right tail, so the same reported $s$ admits a larger true $\sigma$: the posterior
$97.5\%$ quantile of $\sigma$ at $m=30$ moves from $6.72$ to $6.97$ at the calibrated shape and
to $7.22$ under the stress case. Stated precisely, the summary posterior sets $\sigma=s/S_Z$, so
the upper tail of $\sigma$ is governed by the \emph{lower} tail of the standardized sample
spread, and it is the joint summary law that moves, not a one-sided tail of $s$ by itself; the
future test's spread is heavier-tailed in the same way.

On the ninety-eight lots the message is the same two-regime one, now read on both the endpoint
and the width; the calibrated shapes are reported above. Under the
stress case, Table~\ref{tab:beta}, both move in one direction and by much more: the endpoint
falls below the Normal one in eighty-one of ninety-eight lots, by more than $0.10$ in sixteen
and by up to $0.22$, at a lot reporting $100.8\pm2.3$ from ten units whose Normal lower endpoint
of $0.95$ becomes $0.73$, and on the same lot the interval widens from $0.047$ to $0.267$, a
factor of $5.7$. Across the ninety-eight, the paired width increase reaches $0.221$ and exceeds
$0.10$ on nineteen lots, with a median of $0.0004$ and a minimum of $-0.120$: the width is not
shape-free, and its move is not one-signed. The effect is concentrated where the Normal plug-in
sits against the ceiling at a tight reported spread: among the clearly-passing lots the number
carrying a width above $0.10$ nearly doubles, from eight to fifteen, the failing band's median
width rises from $0.03$ to $0.15$, and three lots change band, while in the contested band the
median lower endpoint moves by $0.003$ and the median width by $0.03$. The stress case is not
gratuitous: it is the shape the data show where content uniformity is already failing, and a
lot whose plug-in sits against the ceiling has, by that very fact, given no evidence about
which regime it is in. This is the interval counterpart of the exact-evaluation paper's level error. On a disturbed process the two
are of the same order ($0.22$ against $0.10$ for the plug-in on these lots) and coincide in location;
on a routine process both are small. The endpoint shift is
mostly one-signed under right skew, so a margin that depends on the reported spread would repair
much of it. No margin repairs the widening, which a one-sided allowance does not touch. Nor does a margin
repair the fact that the size of both is set by a shape that ten to thirty units do
not identify, and that the shape is observed to change with the process.

\begin{table}[h]\centering\small
\caption{Table~\ref{tab:reallots} recomputed under the stress-case likelihood, Beta$(1.5,8)$,
Stage 1 at $L=15$, the ninety-eight lots banded by each parent's own plug-in value; the Normal
values of Table~\ref{tab:reallots} are repeated for comparison. At the calibrated shape $0.74$
the same table moves in the same direction and about as far: $73$ lots in the top band against
$76$, nine of them carrying a width above $0.10$ against eight, and a bottom-band median width
of $0.099$ against $0.032$.}
\label{tab:beta}
\begin{tabular}{@{}lcccc@{}}
\toprule
plug-in band & lots, Beta / Normal & median width & largest width & lots with width $>0.10$\\
\midrule
$>0.99$ & $73$ / $76$ & $<0.001$ / $<0.001$ & $0.645$ / $0.692$ & $15$ / $8$\\
$0.80$--$0.99$ & $10$ / $7$ & $0.711$ / $0.725$ & $0.933$ / $0.934$ & $10$ / $7$\\
$0.20$--$0.80$ & $5$ / $5$ & $0.836$ / $0.812$ & $0.961$ / $0.962$ & $5$ / $5$\\
$<0.20$ & $10$ / $10$ & $0.150$ / $0.032$ & $0.413$ / $0.404$ & $6$ / $2$\\
\bottomrule
\end{tabular}
\end{table}

\section{The four rules in standardized units}\label{app:geomstd}

Figure~\ref{fig:geom} draws each rule in its own natural units, which keeps the specification
limits recognizable but makes the regions incomparable: $\langle905\rangle$'s calibrated region
spans tens of label-claim percent while capability's spans a few. Figure~\ref{fig:geomstd}
redraws all four with both coordinates divided by $\sigma$ and every constant calibrated so that
all four accept with probability $0.85$ at their own $n$, the paper's usual control, which
fixes difficulty so that what remains is shape.

\begin{figure}[tbp]\centering
\includegraphics[width=\linewidth]{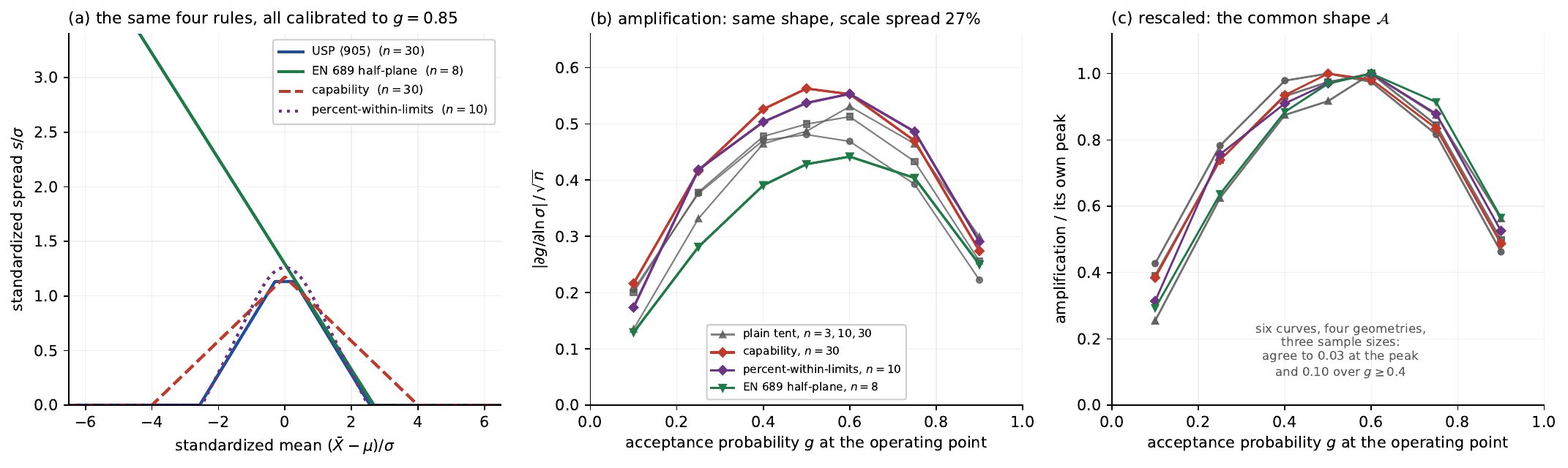}
\caption{\emph{(a)} The four rules of Figure~\ref{fig:geom} in standardized coordinates, each
calibrated to $g=0.85$ at its own $n$. Calibration pulls the three two-sided rules into near
coincidence over the range the sample actually occupies, despite their very different
definitions; the half-plane alone stays distinct, because it is one-sided.
\emph{(b)} The amplification $|\partial g/\partial\ln\sigma|$, divided by $\sqrt n$, against
the operating point $g$. The curves share a shape but not a scale: peaks run from $0.442$
(half-plane) to $0.563$ (capability), a spread of $27\%$.
\emph{(c)} The same curves divided by their own peaks. Six curves, four geometries and three
sample sizes agree to $0.03$ at the peak and to $0.10$ over $g\ge0.4$, degrading to $0.17$ at $g=0.1$, the same low-$g$ weakness that Table~\ref{tab:transfer} finds when the law is used to
predict an unseen geometry. This is the shape $\mathcal{A}$ of \eqref{eq:ampscale}.}
\label{fig:geomstd}
\end{figure}

As a check on the machinery, the half-plane in Figure~\ref{fig:geomstd} is evaluated with the
same cf/FFT joint density used for the polygons, rather than with the one-dimensional formula
\eqref{eq:ghalf} the exposure study uses. At the calibrated $\delta=2.6734$ the two agree to
$3.9\times10^{-5}$, and the calibrated $\delta$ itself reproduces the $\hat\delta=2.67$ quoted
independently in \S\ref{sec:cehd}.

\section{The incumbent tool for the exposure rule}\label{app:utltool}

The tooling in routine use for EN\,689 Annex F decisions evaluates the upper tolerance limit by
Monte Carlo, on the order of $8\times10^5$ simulated samples per computation \cite{utltool},
following the GUM framework \cite{gum2008}. It and the calculation here do not compute the same
quantity: the tool propagates analytical measurement uncertainty, models non-detects, retrieves
an actual exposure limit and returns a verdict, whereas this study deletes non-detects,
calibrates a limit per group, carries no measurement-error component, and returns a posterior
law for a repeated-sampling acceptance probability. What the two share is the inner object, the
probability that a sampled summary falls in an acceptance region, which \cite{paper1} evaluates
without simulation; whether an exact inner evaluator can carry the full measurement-error and
censoring workflow is not settled here.

\section{The width law}\label{app:widthlaw}

The detail behind \S\ref{sec:unify}: the closed constants of \eqref{eq:ampscale} and \eqref{eq:shape}, the conditions under which the cancellation \eqref{eq:cancel} holds, the sample-size experiment on the family's simplest member, and the width surface over both sample sizes.

\subsection{Closed constants of the width law}\label{app:constants}

For the one-sided rule of \S\ref{sec:cehd} both constants are closed, which turns
\eqref{eq:ampscale} from a fitted description into a derived one. With $V=Z+K_n\sqrt n\,S$,
$Z$ standard Normal, $S=(\chi^2_{n-1}/(n-1))^{1/2}$ and $K_n$ the tolerance factor,
$g(\delta)=F_V(\sqrt n\,\delta)$ exactly; the amplification is $q\,f_V(q)$ at the calibration
pin $q=F_V^{-1}(0.85)$, and the asymptotics of $V$ give, with $a=z_{0.95}$, $b=z_{0.85}$,
$c=z_{0.70}$ and $\kappa=1+a^2/2$,
\begin{equation}\label{eq:CD}
C=\frac{a\,\varphi(b)}{\sqrt\kappa}=0.250028,
\qquad
D=\frac{2\varphi(b)\bigl[(6+6a^2+a^4)\,b+3(2+a^2)\,c\bigr]}{3(2+a^2)^2}=0.266992 .
\end{equation}
The next coefficient is likewise closed, $-0.136661$ at order $n^{-1/2}$, and three terms hold
$6\le n\le60$ to between $3\%$ and $0.1\%$.

For the
pure spread rule of Proposition~\ref{prop:chi}, every constant in \eqref{eq:ampscale} and
\eqref{eq:shape} has a closed form. The amplification is $2q\,f_\nu(q)$, with
$q=\nu L'^2/(k^2\sigma^2)$ and $f_\nu$ the chi-square density. It peaks at $q=\nu$ exactly, hence
at $\hat g^*=F_\nu(\nu)=\tfrac12+1/(3\sqrt{\pi\nu})+O(\nu^{-3/2})$. The constants are
$C=1/\sqrt\pi=0.564190$, $D=0$, $E=-2/(3\sqrt\pi)=-0.376126$ and $t_1=\sqrt2/3=0.4714$; this
$t_1$ is the same coefficient that multiplies the width's first correction in
\eqref{eq:probitlaw}. The one-sided rule's constants differ
because its statistic mixes a mean and a spread term; the forms are the same.

The first factor, the amplification, can be read off directly
(Table~\ref{tab:ampscale}). To do so we sweep the limit of a \emph{plain tent}, the unclipped
wedge \eqref{eq:wedge} at $k=1$, under a Normal parent, recalibrated at each $n$ to visit the
same acceptance probabilities.

\begin{table}[h]\centering\small
\caption{Peak amplification against sample size, plain tent, Normal parent. Each entry is the
true maximum over the limit, located by root-finding on the derivative of a one-dimensional
quadrature; no expansion in $n$ is used, which matters at $n=3$.}
\label{tab:ampscale}
\begin{tabular}{@{}rccc@{}}
\toprule
$n$ & peak $|\partial g/\partial\ln\sigma|$ & attained at $g$ & peak$/\sqrt n$\\
\midrule
$3$ & $0.9070$ & $0.603$ & $0.5236$\\
$10$ & $1.6144$ & $0.546$ & $0.5105$\\
$30$ & $2.6541$ & $0.512$ & $0.4846$\\
\bottomrule
\end{tabular}
\end{table}

The amplification grows like $\sqrt n$, and the slow descent of the last column is an additive
term beside the leading one, not a different exponent:
\begin{equation}\label{eq:ampscaleS}
\Big|\frac{\partial g}{\partial\ln\sigma}\Big|
\;=\;C\,\sqrt n\;+\;D\;+\;O(n^{-1/2}),
\qquad\text{so}\qquad
\frac{1}{\sqrt n}\Big|\frac{\partial g}{\partial\ln\sigma}\Big|\;=\;C+\frac{D}{\sqrt n}+\cdots .
\end{equation}
The local log--log slope is then, to the order retained in \eqref{eq:ampscaleS},
\begin{equation}\label{eq:localslope}
\frac{d\log|\partial g/\partial\ln\sigma|}{d\log n}\;=\;\frac{1/2}{1+D/(C\sqrt n)},
\end{equation}
well below $\tfrac12$ across any range a sample-size table can cover. A fitted power law
therefore reports an exponent near $0.44$ with every appearance of precision. An additive term
is invisible to a fitted exponent (Appendix~\ref{app:widthn}).

\subsection{When the cancellation holds, and where it fails}\label{app:cancelholds}

The caveats to \eqref{eq:cancel} of \S\ref{sec:unify}, in full.

Equation~\eqref{eq:cancel} is a statement about
the \emph{rate}, which is why it is written with a proportionality sign. When the estimation and
test sample sizes are comparable, the product of a local derivative and a posterior scale gives
neither the shape nor the constant of a central interval. The exact limiting width of the pure-spread model,
already \eqref{eq:probitlaw}, is
$W_\infty=\Phi(z+ar)-\Phi(z-ar)$ with $z=\Phi^{-1}(\hat g)$, $r=\sqrt{\ntest/\mest}$ and
$a=\Phi^{-1}(0.975)$; the tangent expression $2ar\varphi(z)$ follows only for small $r$. At
$r=1$ they part company in both directions, $0.950$ against $1.564$ at $\hat g=0.5$ and $0.357$
against $0.104$ at $\hat g=0.99$. The displayed coefficient $C/\sqrt2=0.1768$ is the
rate constant of that product, evaluated at $g=0.85$, and is not a calibrated width; multiplying
it by a peak-normalized $\mathcal{A}$ would count the operating-point dependence twice.
Read \eqref{eq:cancel} as a local heuristic for how the width scales, and the exact formulas for
what it is. The qualitative claim survives intact: the
information a larger sample supplies about $\sigma$ is very nearly offset by the leverage
$\sigma$ thereby acquires, so along such a design no sample size escapes this. Appendix~\ref{app:widthn} shows it on the
family's simplest member, a pure-spread rule under a Uniform parent, and why that flat result is
a property of the stipulated regular law for the scale rather than of genuine Uniform
inference.

\paragraph{What the cancellation assumes.} Equation~\eqref{eq:cancel} holds the operating point
$g$ fixed while $n$ grows. Recalibration of the limit is one construction that achieves this,
and it is the one used here; it is sufficient, not necessary. A process sitting on, or
approaching, a \emph{fixed} decision boundary keeps the operating point interior by itself. Take
the pure-spread rule $s_t\le c$ of Proposition~\ref{prop:chi} with an observed $s=c$ and
$\mest=\ntest=n$: the posterior of $\Pacc$ is then $F_{\nu}(X)$ with $X\sim\chi^2_\nu$, which is
exactly uniform, so the central interval is $[0.025,\,0.975]$ and the width $0.95$ at every
sample size, while the plug-in $F_\nu(\nu)$ descends to $\tfrac12$. The same happens with random
data rather than one stipulated summary: with $\sigma$ known, a fixed limit $c$, a true mean at
$c$ and $\mest=\ntest=n$, the width is exactly $\Phi(a-Z)-\Phi(-a-Z)$ with
$Z\sim N(0,1)$ and $a=\Phi^{-1}(0.975)$, a law free of $n$ whose expectation is
$2\Phi(a/\sqrt2)-1=0.834$. So the width need not vanish under a fixed specification.
What the cancellation requires is that the posterior contract at the rate at which the rule
gains discrimination, with the operating point interior; recalibration and a fixed boundary are
two ways to arrange it. Away from a boundary, a coupled design with the limit held fixed does narrow. For a Normal
parent with $\sigma$ known and the limit one $\sigma$ above the calibration mean, the coupled
width runs $0.391$, $0.115$, $0.006$, $0.00001$ at $n=5,10,20,40$, because the test drives
$\mathcal{A}(g)\to0$. That narrowing is saturation rather than information, and the value of
$g$ at which it arrives is what tells the two apart.

A second route reaches the same cancellation with no uncertainty about the scale at all. With
$\sigma$ known and $\mu\mid\text{data}\sim N(\bar x,\sigma^{2}/\mest)$ against a one-sided test
of $\ntest$ units at a recalibrated limit, $g(\mu)=\Phi\bigl(z_0-\sqrt{\ntest/\mest}\,Z\bigr)$
with $Z\sim N(0,1)$, so along $\mest=\ntest$ the width is $\Phi(z_0+1.96)-\Phi(z_0-1.96)$ for
every $n$, an identity rather than a limit, because the standardized error is pivotal. The two
routes agreeing is what entitles \eqref{eq:cancel} to be read as a law about the design: what
cancels is a \emph{rate}, and any coordinate whose posterior contracts at the rule's own rate
cancels in the same way. On the location route the whole $(\mest,\ntest)$ plane is closed and
Appendix~\ref{app:wsurface} records the surface: in $\mest$ the width decreases strictly at
every test size and every limit, so more estimation data narrows the interval when the observed
summaries are held fixed. That is a statement about the surface and not a pathwise guarantee: a
new observation that carries $\bar x$ toward the limit can widen the interval even as the
posterior contracts. Ten observations with mean $-1$ against a limit at zero give width $0.115$;
an eleventh observation equal to $10$ moves the mean to zero and the width to $0.938$. In
$\ntest$ it rises to a ridge at the test size for which the process stands one standard error
from its limit and falls by saturation beyond it.

\paragraph{Where it fails.} The cancellation is a statement about regular families, whose
posteriors contract at $n^{-1/2}$. Where a parameter is estimable faster the evidence outruns
the test and the width does fall: the Uniform parent's support boundary is estimable at rate
$1/\mest$, and for a one-sided mean test along the coupled path the width decays like
$n^{-1/2}$, with $W\sqrt n$ rising to $1.52$ by $n=80$. That computation is location-only with a
known half-width, so it supplies a direction and a mechanism rather than settling
Figure~\ref{fig:widthn}; Open Problem~\ref{op:unifpost} records what remains.

\begin{figure}[tbp]\centering
\includegraphics[width=0.9\linewidth]{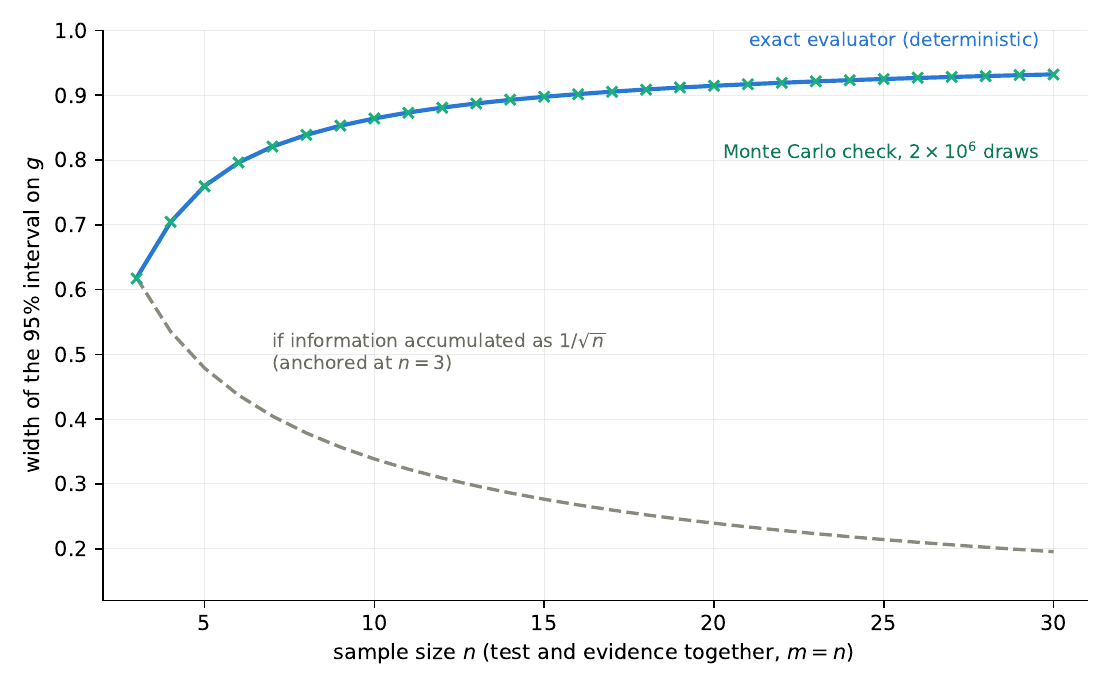}
\caption{A pure-spread rule calibrated to report $g=0.850$ at every $n$, under a stipulated regular law
for the scale, $\sigma\sim N(\hat\sigma,\hat\sigma^{2}/(2n))$ with $\ntest=\mest$. The $95\%$
interval does not narrow as the sample grows, running from $[0.38,\,1.00]$ at $n=3$ to
$[0.07,\,1.00]$ at $n=30$; the upward drift is law-dependent and not claimed. Dashed: the
$n=3$ width carried forward as $1/\sqrt n$, the naive expectation. Crosses: an independent
$2\times10^{6}$-draw Monte Carlo evaluation of the lower endpoint, agreeing within $2.5$
standard errors except at $n=3$, where the discrepancy is $1.0\times10^{-3}$; grid convergence
is in Appendix~\ref{app:widthn}.}\label{fig:widthn}
\end{figure}

\begin{table}[h]\centering\small
\caption{Why sampling harder does not narrow the interval. From $n=6$ to $n=60$ the posterior
spread of $\ln\sigma$ shrinks by a factor $3.4$ while the sensitivity of $g$ to $\ln\sigma$
steepens by a factor $2.7$; the product, which governs the interval width, falls by $20\%$. The
$\mathrm{sd}(\ln\sigma)$ column is the large-sample value $1/\sqrt{2(n-1)}$; the exact posterior
value $\tfrac12\sqrt{\psi_1((n-1)/2)}$ is $0.350$ at $n=6$ and approaches it from above.}
\label{tab:cancel}
\begin{tabular}{@{}rccccc@{}}
\toprule
$n$ & $\sigma/\hat\sigma$, $2.5$--$97.5\%$ & $\hat\delta$ &
$\mathrm{sd}(\ln\sigma)$, approx.\ & $\partial g/\partial\ln\sigma$ & product\\
\midrule
$6$  & $0.624$--$2.453$ & $2.901$ & $0.316$ & $-0.798$ & $0.2525$\\
$8$  & $0.661$--$2.035$ & $2.674$ & $0.267$ & $-0.908$ & $0.2428$\\
$12$ & $0.708$--$1.698$ & $2.441$ & $0.213$ & $-1.083$ & $0.2309$\\
$20$ & $0.760$--$1.461$ & $2.235$ & $0.162$ & $-1.349$ & $0.2188$\\
$30$ & $0.796$--$1.344$ & $2.116$ & $0.131$ & $-1.608$ & $0.2111$\\
$60$ & $0.848$--$1.220$ & $1.969$ & $0.092$ & $-2.185$ & $0.2011$\\
\bottomrule
\end{tabular}
\end{table}

\subsection{Escalation at every sample size, on the family's simplest member}\label{app:widthn}

Appendix~\ref{app:twostage} follows the $\langle905\rangle$ sequence between two sample sizes.
This appendix asks the same question along a whole axis, $n=3$ to $30$, on the simplest rule in
the family, chosen so that almost nothing numerical stands between the question and the answer.
The rule is pure spread: accept when $s^{2}\le c^{2}$. The parent is Uniform with standard
deviation $\sigma$, so that $g(\sigma)=F_n\!\big(c^{2}/(12\sigma^{2})\big)$ with $F_n$ the
sample-variance distribution of $n$ standard Uniform observations. Two structural gifts follow.
First, $g$ is monotone in $\sigma$, so the pushforward interval is $g$ evaluated at two
posterior quantiles and the outer integral is \emph{exact}; every approximation in this
appendix lives in the inner evaluation of $F_n$. Second, one characteristic-function inversion
per $n$ yields the whole function $F_n$ at once, so the calibration and both endpoints come
from a single deterministic computation. The result is Figure~\ref{fig:widthn}, placed with
the law it illustrates in \S\ref{sec:unify}: calibrated to report $0.850$ at every $n$, the
interval widens from $[0.38,\,1.00]$ at $n=3$ to $[0.07,\,1.00]$ at $n=30$.

\paragraph{The law imposed on $\sigma$, and what it is not.} The computation stipulates
$\sigma\sim N(\hat\sigma,\hat\sigma^{2}/(2n))$, i.e.\ a relative standard deviation
$1/\sqrt{2n}$. \emph{This is the regular large-sample law for a Normal scale parameter; it is
not derived from a Uniform likelihood and prior, and it is not a posterior for this
experiment.} The Uniform family has support that moves with its scale and is therefore
nonregular, so inference about $\sigma$ from a Uniform sample is not governed by the regular
$\sqrt n$ theory at all. We report the figure as a controlled experiment under a stipulated
regular uncertainty model, and the paragraph below states what changes when the law is instead
matched to the data-generating process.

\paragraph{The cancellation on one line, and the finite-$n$ correction.} Through
$Y\propto\sigma^{-2}$ a relative excursion $\tau$ in $\sigma$ displaces the rule's argument by
$2\tau$ relatively, and the resolving power of the test is the sampling standard deviation of
$s^{2}$. For a parent with excess kurtosis $\gamma_2$ that standard deviation is
\begin{equation}\label{eq:sdS2}
\mathrm{sd}(S^{2})=\sigma^{2}\sqrt{\frac{\gamma_2}{n}+\frac{2}{n-1}}
\;\xrightarrow[\ n\to\infty\ ]{}\;\sigma^{2}\sqrt{\frac{\gamma_2+2}{n}},
\end{equation}
which for the Uniform parent ($\gamma_2=-6/5$) has the familiar asymptotic form
$\sqrt{0.8/n}\,\sigma^{2}$. \emph{The asymptotic form is not adequate at the small
$n$ this figure starts from}: at $n=3$ it gives $0.516\,\sigma^{2}$ against the exact
$0.775\,\sigma^{2}$, an error of a third, confirmed by $4\times10^{6}$-draw simulation to four
digits at $n=3,10,30$. The excursion measured in units of the test's own resolving power is
therefore
\begin{equation}\label{eq:nfree}
\frac{2\times1.96\,\tau}{\sqrt{\gamma_2/n+2/(n-1)}},
\end{equation}
and everything turns on how $\tau$ is chosen. Under the stipulated law $\tau=1/\sqrt{2n}$ the
ratio is \emph{not} $n$-free: it rises from $2.07$ at $n=3$ through $2.74$ at $n=10$ to $2.97$ at
$n=30$, approaching $2\times1.96/\sqrt{1.6}=3.10$ only in the limit. Under a law matched to the
estimator the sample actually supplies --- $\tau=\mathrm{sd}(S^{2})/(2\sigma^{2})$ by the delta
method --- the same ratio collapses to exactly $1.96$ at every $n$, and the cancellation is not
merely asymptotic but exact.

\paragraph{Consequence for Figure~\ref{fig:widthn}, stated plainly.} The two laws differ by more
than a constant. Simulating $\mathrm{sd}(\ln s)$ directly under the Uniform parent gives
$0.603$ at $n=3$ and $0.0875$ at $n=30$, against $1/\sqrt{2n}=0.408$ and $0.129$: the stipulated
law is $0.68$ times the simulated spread at $n=3$, too narrow, and $1.47$ times it at $n=30$, too
wide. That is a drift by a factor of $2.2$ across the plotted range, and it runs in exactly the direction that would make
the interval appear to widen with $n$. \emph{We therefore do not claim that the interval for this rule widens with $n$.}
What the figure establishes, and what \eqref{eq:nfree} explains, is the weaker and still
substantive statement that the interval fails to narrow at the $1/\sqrt n$ rate the parameter
uncertainty itself obeys, because the test sharpens at very nearly the same rate; whether the
residual drift is upward, flat or mildly downward depends on the law placed on $\sigma$ and is
not settled by this experiment. Recomputing the curve under an estimator-matched law, and under
a Normal parent where the inverse-$\chi^{2}$ posterior is coherent, is required before the
direction of the drift can be asserted, and is listed in Appendix~\ref{app:open}. Two further
qualifications. The stipulated Normal law for the scale is untruncated and places probability
$0.007$ on a negative scale at $n=3$, a defect of the stipulation and not of the rule. And the
genuine posterior is not out of reach. For a known-center Uniform parent $\mathrm{U}[-a,a]$ with
prior $1/a$ and $M=\max_i|X_i|$, it is elementary: $\Prob(a\le A\mid D)=1-(M/A)^{m}$ for $A\ge M$,
contracting at rate $m^{-1}$. Pushing its $2.5\%$ and $97.5\%$ quantiles through the same
calibrated pure-variance rule, with $m=\ntest=n$ and plug-in $\hat a=M$, gives widths $0.72$,
$0.71$, $0.52$, $0.27$ and $0.14$ at $n=3$, $10$, $30$, $100$ and $300$, with leading term
$2\varphi(z_{0.85})\ln39/\sqrt{0.8\,n}=1.91/\sqrt n$. Under genuine Uniform inference the
interval narrows, so the flatness of Figure~\ref{fig:widthn} is a property of the regular law
stipulated for it and does not transfer. The exposure study of
\S\ref{sec:cehd} is not affected: its parent is lognormal, so $\ln$-exposure is Normal and the
regular posterior used there is the correct one.

\paragraph{Details displaced from \S\ref{sec:unify}.}
\emph{Why the exponent is misread.} In Table~\ref{tab:ampscale} the column peak$/\sqrt n$ falls
monotonically, $0.5236$, $0.5105$, $0.4846$; a monotone residual in the quantity one claims to
have removed is the signature of a missing term, and \eqref{eq:ampscale} identifies it as an
additive constant $D$ beside the $\sqrt n$ term. Because the local slope \eqref{eq:localslope}
lies below $\tfrac12$ over any sample-size table, a power-law fit over such a range returns an
exponent near $0.44$ and reports it with every appearance of precision.

\emph{The stipulated law in Figure~\ref{fig:widthn}.} The law $\sigma\sim
N(\hat\sigma,\hat\sigma^{2}/(2n))$ is the Normal-theory large-sample form, not a posterior
derived from the Uniform likelihood, whose family is nonregular; it drifts by a factor of $2.2$
relative to the sampling behavior of the Uniform sample standard deviation across the plotted
range, in the direction that manufactures widening. The flatness is explained by
\eqref{eq:nfree} under either law; the sign of the residual drift is not claimed.

\emph{Grid convergence of Figure~\ref{fig:widthn}.} Over the ladder $N=1024$ to $8192$ the width
moves by at most $9.3\times10^{-5}$ at $n=4$, $10$ and $30$, but by $3.2\times10^{-4}$ at $n=3$,
where successive halvings of the grid spacing reduce the movement by a factor of only $2.05$:
the leftmost point of the figure is the one place in the paper where the evaluator converges at
first order in $N$ rather than faster, and it is accordingly the least settled point plotted.
The effect is $0.05\%$ of the width at that $n$ and does not bear on the shape of the curve.
These figures are ladder spreads rather than differences between the two finest grids; the
latter would read $3.4\times10^{-5}$ and $1.1\times10^{-4}$, understating the movement about
threefold, and at $n=4$ by a factor of sixty, where two consecutive steps nearly cancel.

\emph{Design B against C.} In Table~\ref{tab:designs} the widths of B and C cross near forty
units; that is an accident of this operating point and carries no meaning.

\subsection{The width surface over both sample sizes}\label{app:wsurface}

This appendix records the width over the whole $(\mest,\ntest)$ design, rather than along one path
through it. On the location route the surface has a closed form, given here in
full; \S\ref{sec:unify} keeps the conclusion and Table~\ref{tab:designs}.

The two routes of \S\ref{sec:unify} are lines through a plane, and on the second, location route the
whole plane has a closed form. Keep $\sigma$ known and the mean unknown, and let the
rule carry a \emph{fixed} limit $c$ rather than a recalibrated one, so that the data report a
standardized margin $d=(c-\bar x)/\sigma$. Since $g$ is monotone in the posterior draw, the
central width is exact at every pair of sample sizes:
\begin{equation}\label{eq:wsurface}
W(\mest,\ntest)=\Phi\bigl(z_0+1.96\,r\bigr)-\Phi\bigl(z_0-1.96\,r\bigr),
\qquad r=\sqrt{\ntest/\mest},\quad z_0=d\sqrt{\ntest}.
\end{equation}
Two coordinates govern this surface and not one. The ratio $r$ sets how far the posterior spreads
along the operating characteristic, and $z_0$ sets where on that curve the spread is read; at a
fixed limit $z_0$ is itself carried by the test size, at rate $\sqrt{\ntest}$. The ratio alone
therefore does not price a design. At $d=0.25$ the diagonal $\mest=\ntest$ gives $0.876$, $0.799$,
$0.647$, $0.391$ at $n=10,20,40,80$ --- a design whose ratio never changes and whose width halves.
The width stands still only on a level set of $z_0$, which is exactly what recalibration
constructs, and \eqref{eq:cancel} is the same statement reached through the scale instead of the
location.

Read along its two axes the surface separates the two sample sizes, and they do not behave alike.
In $\mest$ it decreases strictly at every test size and every limit, since
$\partial W/\partial r=1.96\{\varphi(z_0+1.96r)+\varphi(z_0-1.96r)\}>0$ while $r$ falls in
$\mest$: estimation data narrows the interval at fixed observed summaries. In $\ntest$ it is not monotone,
because a larger test moves the spread and the operating point at once. For small $r$ the width is
$2\times1.96\,\varphi(d\sqrt{\ntest})\sqrt{\ntest/\mest}$, whose dependence on the test size is
$u\mathrm{e}^{-u^{2}/2}$ in $u=d\sqrt{\ntest}$, so the ridge and its height are elementary,
\begin{equation}\label{eq:ridge}
\ntest^{*}=d^{-2},\qquad
W^{*}=\frac{2\times1.96\,\varphi(1)}{|d|\sqrt{\mest}}=\frac{0.9485}{|d|\sqrt{\mest}} ,
\end{equation}
to leading order in $r$, the widest interval sitting at the test size for which the process
stands one standard error from its limit. Exactly, with $a=1.96/\sqrt{\mest}$, an interior ridge
exists only for $|d|>a$ and sits at $\ntest^{*}=\log\!\big((|d|+a)/(|d|-a)\big)/(2|d|a)$, which for
$d=0.25$ is $18.7$ at $\mest=160$ against the leading-order $16$. Below $\ntest^{*}$, a larger test buys leverage faster than it buys resolution, and the
interval opens. Above it, the rule saturates and the interval closes for no informational reason.
This is the collapse \S\ref{sec:unify} has already attributed to $\mathcal{A}(g)\to0$. The
elementary form \eqref{eq:ridge} governs once $\mest d^{2}\gtrsim10$, locating the peak height to $0.2\%$ there;
at $\mest d^{2}=2.5$ there is no interior ridge at all, the surface having saturated. The
saturation value depends on which side of the boundary the process sits: under the strict
inequality $|d|<a$ the surface tends to $W=1$ as the test grows, while exactly at $|d|=a$ one of
the two Normal arguments is identically zero and the limit is $W=\tfrac12$. The three tabulated
no-ridge rows satisfy the strict inequality and are unaffected. The leading-order ridge and the
exact finite-$\mest$ ridge need not fall between the same tabulated test sizes.

Table~\ref{tab:wsurface} prints the surface, and the paper's three designs are three curves drawn
on it: the recalibrated coupled path is a level set of $z_0$, the fixed-limit coupled path is the
diagonal that crosses the ridge and descends, and the fixed-test path of
Appendix~\ref{app:twostage} is a vertical line. Only the last is available to a laboratory working
under a published constant, and it is the one axis along which more measurement pays at fixed observed summaries.

\begin{table}[h]\centering\small
\caption{The width surface \eqref{eq:wsurface} at a fixed limit, $d=0.25$, which puts the leading-order ridge
\eqref{eq:ridge} at $\ntest^{*}=16$ ($18.7$ exactly at $\mest=160$). Rows are the sample that estimates the process, columns the
sample the rule prescribes. Reading down a column is always a gain; reading across a row is not,
and the turn appears between $\ntest=20$ and $\ntest=40$ once $\mest$ is large enough to carry an
interior peak. The first three rows have none, $|d|\le1.96/\sqrt{\mest}$ there, and the surface is climbing
toward $W=1$. Checked
against a $2\times10^{6}$-draw Monte Carlo evaluation at seven points, agreeing to
$7\times10^{-4}$, which is the size of the Monte Carlo quantile error itself at that draw count.}
\label{tab:wsurface}
\begin{tabular}{@{}rcccc@{}}
\toprule
& $\ntest=5$ & $\ntest=10$ & $\ntest=20$ & $\ntest=40$\\
\midrule
$\mest=10$ & $0.770$ & $0.876$ & $0.951$ & $0.990$\\
$\mest=20$ & $0.601$ & $0.709$ & $0.799$ & $0.883$\\
$\mest=40$ & $0.448$ & $0.537$ & $0.599$ & $0.647$\\
$\mest=80$ & $0.325$ & $0.392$ & $0.427$ & $0.421$\\
$\mest=160$ & $0.233$ & $0.282$ & $0.300$ & $0.269$\\
$\mest=320$ & $0.166$ & $0.201$ & $0.211$ & $0.176$\\
\bottomrule
\end{tabular}
\end{table}

\section{Where the uncertainty lives, in full}\label{app:decomp}

To attribute the width rather than merely report it, we make one coordinate of $\theta$
uncertain at a time, at the $\langle905\rangle$ operating point of \S\ref{sec:905}, and hold
the others at the plug-in.

\begin{table}[h]\centering\small
\caption{Interval width contributed by each coordinate, $\langle905\rangle$ Stage-2, clipped
acceptance value. Each row is the central $95\%$ of the pushforward with that coordinate alone
uncertain \emph{under its law in the posterior \eqref{eq:cspost}}, the others held at the
plug-in: the inverse-chi-square law for the within-lot spread $w$ at $\mu=100$, $\rho=0$; the
Normal law $N(\bar x,s^2/m)$ for $\mu$ at $w=s$, $\rho=0$; and the prior Uniform$[0,0.3]$ for
$\rho$ at $\mu=100$, $w=s$. The ratios are specific to those three ranges at this operating
point, not intrinsic sensitivities of $g$.}
\label{tab:decomp}
\begin{tabular}{@{}lcc@{}}
\toprule
uncertain coordinate & $95\%$ band & width\\
\midrule
$w$ alone & $[0.114,\ 0.999]$ & $0.885$\\
$\mu$ alone & $[0.691,\ 0.850]$ & $0.159$\\
$\rho$ alone & $[0.556,\ 0.845]$ & $0.289$\\
\bottomrule
\end{tabular}
\end{table}

The practical content of this section is the ordering $w>\rho\approx\mu$. It says where to
spend measurement effort, \emph{given uncertainty of the stated size in each coordinate}. A
one-at-a-time band prices a coordinate together with the range assumed for it, so the factor of
$3.1$ would change if a laboratory knew $w$ better or $\rho$ worse. Nor is the ordering
intrinsic to the coordinates. The $\langle905\rangle$ wedge charges spread at $2.4$ times the
rate it charges location, and a dead zone around target shelters location, so a rule at the far
end of the exact-evaluation paper's axis of effective spread weight would reverse the ordering
on the same posterior. The ordering is a statement about $\langle905\rangle$ and rules that
weight spread as it does, the half-plane of \S\ref{sec:cehd} among them. The $\rho$ row holds the within-lot spread $w$ fixed, because the calibration summaries
identify only $\sigma^2(1-\rho)$; holding the total spread fixed instead ranks dependence last by
two orders of magnitude, on a slice the data do not speak to.
The bands are not an additive decomposition, since quantiles do not add.
Establishing the ordering across operating points and priors is left open.

\subsection{The dependence row and the figure}\label{app:decompdetail}

\begin{figure}[tbp]\centering
\includegraphics[width=0.92\linewidth]{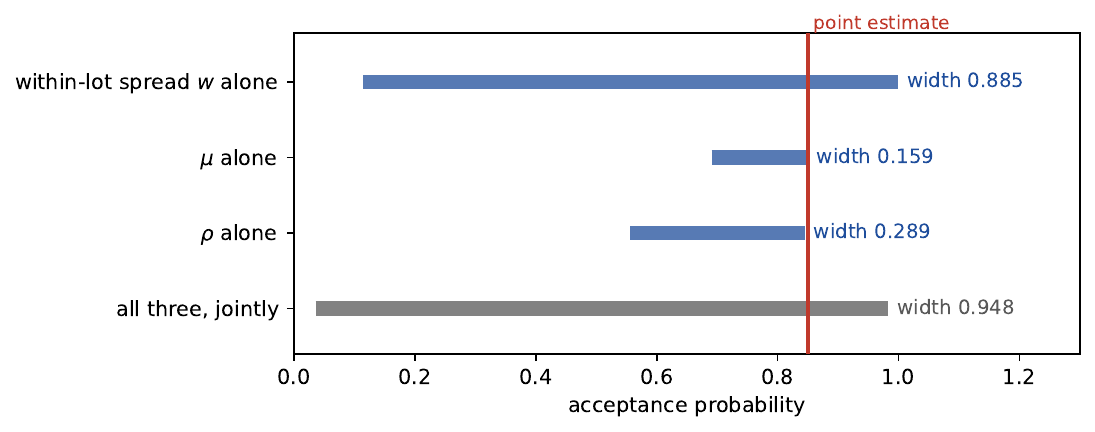}
\caption{Interval width contributed by each coordinate of $\theta$, all against the same point
estimate, each under its law in the posterior \eqref{eq:cspost}. The $\rho$ row holds the
within-lot spread $w$ fixed, because the calibration data identify only $\sigma^2(1-\rho)$ and
the total spread moves with $\rho$; the $w$ band is then $3.1$ times the $\rho$ band. The ratio
prices the stated ranges, not the coordinates in the abstract.}\label{fig:decomp}
\end{figure}

The $\rho$ row is easy to get wrong. Holding the total
spread $\sigma$ at $\hat\sigma$ while $\rho$ moves gives a width of $0.008$ and ranks
dependence last by two orders of magnitude. But that slice is not the one the data speak to.
The calibration summaries fix $\sigma^2(1-\rho)$, so moving $\rho$ at fixed $\sigma$ silently
changes the within-lot spread $w$, which the posterior leaves uncertain. Holding $w$ fixed
instead, dependence contributes as much as location. Then $g$ \emph{falls} with correlation,
monotonically over the stated range ($0.692$ at $\rho=0.15$ and $0.550$ at $0.30$), because a
fresh shared factor in the test lot widens the distribution of its mean. The bands are
not an additive decomposition. Quantiles do not add, so the three widths need not sum to the
joint width $0.948$. The $w$-alone floor, $0.114$, also lies above the joint interval's floor
of $0.035$: under the joint law $\Prob[g\le0.114]=0.088$ rather than $0.025$, because the
$\mu$ and $\rho$ uncertainty pushes mass below the floor that $w$ alone sets.

\section{Open problems}\label{app:open}

\begin{openproblem}[Tail quantiles of a pushforward for a non-Normal parent]\label{op:tails}
For the Normal studies the outer integral is now replicated randomized quasi-Monte Carlo with a
reported standard error, or root finding on a closed posterior, so tail endpoints are quoted to
the precision the mean supports. For a non-Normal parent the evaluator is a Fourier inversion
whose discretization error at a tail endpoint is measured rather than bounded
(Appendix~\ref{app:engine}); a certified quantile error for that case is open.
\end{openproblem}

\begin{openproblem}[A coherent parameter law for a nonregular parent]\label{op:unifpost}
Figure~\ref{fig:widthn} stipulates a regular $\sqrt n$ law for the scale of a Uniform parent,
whose likelihood is nonregular. Under the genuine posterior for a known-center Uniform scale,
which contracts at rate $m^{-1}$, the interval narrows (Appendix~\ref{app:widthn}), so the
flatness of that figure is a property of the stipulated law. What remains open is the
unknown-center Uniform posterior and, more generally, which nonregular families keep the
cancellation and which lose it.
\end{openproblem}

\begin{openproblem}[Left-censored parents]\label{op:censor}
Sixty percent of the personal samples in \cite{cehd} are non-detects. Censoring of the
calibration observations does not by itself make the latent parent discontinuous: a censored
likelihood estimates a continuous parent to which the evaluator applies as it stands. What is
open is the rule computed on censored reported values, a different acceptance problem; the
environmental and occupational literatures have an apparatus for
censoring, and joining the two is the natural next study.
\end{openproblem}

\begin{openproblem}[Real limits rather than calibrated ones]\label{op:limits}
Attaching real permissible exposure limits per substance would convert Table~\ref{tab:cehd}
from a controlled comparison into a statement about the compliance decisions actually made.
\end{openproblem}

\begin{openproblem}[Family uncertainty and the value of escalation]\label{op:family}
All widths here are conditional on the parent family; \S\ref{sec:skewed} estimates a skewness
from three hundred published tablets and sizes the interval effect at it and at a stress case,
but the skew is observed to change with the process, and a lot's own units do not say which
regime it is in. Uncertainty about the family adds a further component. Its size is governed by how far apart
the candidate families are in the rule's own moment currency \cite{paper1}, and it decays only
past the $n$ at which that feature becomes estimable. Escalation may therefore have a second
regime: its benefit is canceled within a family (\S\ref{sec:unify}), but it is valuable where
it selects among families. Budgeting samples against that threshold is a sequential design problem left open.
\end{openproblem}


\begin{thebibliography}{99}\small

\ifblind
\bibitem{paper1} Author (2026). Manuscript on exact acceptance probabilities.
\emph{Unpublished manuscript}; blinded for review, not peer reviewed at the time of writing.

\bibitem{n3companion} Author (2026). Manuscript on the $n=3$ sample variance.
\emph{Unpublished manuscript}; blinded for review, not peer reviewed at the time of writing.

\bibitem{chu2025} Author(s) (2025). Details withheld for anonymous review.

\bibitem{morrison2021} Author(s) (2021). Details withheld for anonymous review.
\else
\bibitem{paper1} Osan, R. (2026). Phasing out Monte Carlo: exact acceptance probabilities for
mean--spread release rules. \emph{arXiv preprint} arXiv:2609.21577.

\bibitem{n3companion} Osan, R., Chu, K., \& Yu, R. (2026). Closed forms and open obstructions:
the sample variance of three observations. \emph{arXiv preprint} arXiv:2609.25333.

\bibitem{chu2025} Chu, K.\,T., Osan, R., Tin, N., Bhattacharya, S., \& Lai, C. (2025).
Simulation-based content uniformity engineering for drug development and manufacturing.
\emph{International Journal of Pharmaceutics} \textbf{684}, 126132.

\bibitem{morrison2021} Morrison, H., Osan, R., Horstman, E., Lee, E., Ritchie, S., Payne, P.,
Scott, M.\,E., Geier, M.\,J., \& Wang, X. (2021). Correlation of drug substance bulk properties
to predict and troubleshoot the formulation of drug products: the API camera. \emph{Organic
Process Research \& Development} \textbf{25}, 1036--1046.
\fi

\bibitem{goodwin2018} Goodwin, D.\,J., et al.\ (2018). Real time release testing of tablet
content and content uniformity. \emph{International Journal of Pharmaceutics} \textbf{537},
183--192.

\bibitem{vranic2008} Vrani\'c, E., \& Uzunovi\'c, A. (2008). Study of the applicability of
content uniformity and dissolution variation test on ropinirole hydrochloride tablets.
\emph{Bosnian Journal of Basic Medical Sciences} \textbf{8}(2), 193--200.

\bibitem{lyytikainen2024} Lyytik\"ainen, J., Stasiak, P., Kubelka, T., Bogaerts, I., Wanek, A.,
Stynen, B., Holman, J., Ketolainen, J., Ervasti, T., \& Korhonen, O. (2024). Continuous direct
compression of a commercially batch-manufactured tablet formulation with two different
processing lines. \emph{European Journal of Pharmaceutics and Biopharmaceutics} \textbf{199},
114278 (supplementary Tables S7--S8).

\bibitem{ijpx2024} Bekaert, B., Janssen, P.\,H.\,M., Fathollahi, S., Vanderroost, D., Roelofs,
T., Dickhoff, B.\,H.\,J., Vervaet, C., \& Vanhoorne, V. (2024). Batch vs.\ continuous direct
compression: a comparison of material processability and final tablet quality.
\emph{International Journal of Pharmaceutics: X} \textbf{7}, 100226 (Table 6).

\bibitem{dyrby2002} Dyrby, M., Engelsen, S.\,B., N\o rgaard, L., Bruhn, M., \& Lundsberg-Nielsen, L.
(2002). Chemometric quantitation of the active substance in a pharmaceutical tablet using
near-infrared (NIR) transmittance and NIR FT-Raman spectra. \emph{Applied Spectroscopy}
\textbf{56}(5), 579--585 (reference content of the 310-tablet NIR set).

\bibitem{leblond2014} LeBlond, D., \& Mockus, L. (2014). The posterior probability of passing a
compendial standard, Part 1: Uniformity of dosage units. \emph{Statistics in Biopharmaceutical
Research} \textbf{6}(3), 270--286.

\bibitem{lewis2016} Lewis, R.\,A., \& Fan, A. (2016). Improved acceptance limits for ASTM
standard E2810. \emph{Statistics in Biopharmaceutical Research} \textbf{8}(1), 40--48.

\bibitem{cramer1946} Cram\'er, H. (1946). \emph{Mathematical Methods of Statistics}. Princeton
University Press.

\bibitem{serfling1980} Serfling, R.\,J. (1980). \emph{Approximation Theorems of Mathematical
Statistics}. Wiley.

\bibitem{bergum1990} Bergum, J.\,S. (1990). Constructing acceptance limits for multiple stage
tests. \emph{Drug Development and Industrial Pharmacy} \textbf{16}(14), 2153--2166.

\bibitem{bergum2015} Bergum, J., Parks, T., Prescott, J., et al.\ (2015). Assessment of blend
and content uniformity: technical discussion of sampling plans and application of ASTM
E2709/E2810. \emph{Journal of Pharmaceutical Innovation} \textbf{10}, 84--97.

\bibitem{jensen2006} Jensen, W.\,A., Jones-Farmer, L.\,A., Champ, C.\,W., \& Woodall, W.\,H. (2006).
Effects of parameter estimation on control chart properties: a literature review.
\emph{Journal of Quality Technology} \textbf{38}(4), 349--364.

\bibitem{epprecht2015} Epprecht, E.\,K., Loureiro, L.\,D., \& Chakraborti, S. (2015). Effect of
the amount of Phase I data on the Phase II performance of $S^2$ and $S$ control charts.
\emph{Journal of Quality Technology} \textbf{47}(2), 139--155.

\bibitem{faraz2015} Faraz, A., Woodall, W.\,H., \& Heuchenne, C. (2015). Guaranteed conditional
performance of the $S^2$ control chart with estimated parameters. \emph{International Journal
of Production Research} \textbf{53}(14), 4405--4413.

\bibitem{does2020} Does, R.\,J.\,M.\,M., Goedhart, R., \& Woodall, W.\,H. (2020). On the design of
control charts with guaranteed conditional performance under estimated parameters.
\emph{Quality and Reliability Engineering International} \textbf{36}(8), 2610--2620.

\bibitem{astm2709} ASTM International (2019). \emph{E2709-19: Standard Practice for Demonstrating
Capability to Comply with an Acceptance Procedure}. West Conshohocken, PA.

\bibitem{astm2810} ASTM International (2017). \emph{E2810-11 (Reapproved 2017): Standard Practice for Demonstrating
Capability to Comply with the Test for Uniformity of Dosage Units}. West Conshohocken, PA.

\bibitem{ohagan2005} O'Hagan, A., Stevens, J.\,W., \& Campbell, M.\,J. (2005). Assurance in
clinical trial design. \emph{Pharmaceutical Statistics} \textbf{4}(3), 187--201.

\bibitem{lavoue2019} Lavou\'e, J., Joseph, L., Knott, P., Davies, H., Labr\`eche, F., Clerc, F.,
Mater, G., \& Kirkham, T. (2019). Expostats: a Bayesian toolkit to aid the interpretation of
occupational exposure measurements. \emph{Annals of Work Exposures and Health} \textbf{63}(3),
267--279.

\bibitem{wsdot517} Muench, S.\,T., \& Mahoney, J.\,P. (2001). \emph{A Quantification and
Evaluation of WSDOT's Hot Mix Asphalt Concrete Statistical Acceptance Specification}. Washington
State Department of Transportation, Research Report WA-RD 517.1.

\bibitem{gum2008} Joint Committee for Guides in Metrology (2008). \emph{Evaluation of
measurement data --- Supplement 1 to the ``Guide to the expression of uncertainty in
measurement'' --- Propagation of distributions using a Monte Carlo method}. JCGM 101:2008;
also published as ISO/IEC Guide 98-3:2008/Suppl.1.

\bibitem{jiang2026a} Jiang, F., \& Yang, L. (2026). Finite-sample decision instability in
threshold-based process capability approval. arXiv:2603.11315.

\bibitem{jiang2026b} Jiang, F., \& Yang, L. (2026). Nonlinear amplification of finite-sample
uncertainty in capability-based decisions. arXiv:2605.06843.

\bibitem{usp905} United States Pharmacopeial Convention. \emph{USP General Chapter
$\langle905\rangle$ Uniformity of Dosage Units}. USP--NF, Rockville, MD (accessed 2026).

\bibitem{cehd} Occupational Safety and Health Administration. \emph{Chemical Exposure Health
Data}. U.S. Department of Labor. \url{https://www.osha.gov/opengov/health-samples}. Per-year sample
files for 2015, 2017, 2018 and 2019, retrieved 24 September 2026.

\bibitem{en689} European Committee for Standardization (2018). \emph{EN 689:2018, Workplace exposure: measurement of exposure by inhalation to chemical agents. Strategy for
testing compliance with occupational exposure limit values}, Annex F.

\bibitem{annweh2022} D'Errico, A., Houba, R., \& Kromhout, H. (2022). Is the new EN\,689 a better
standard to test compliance with occupational exposure limits in the workplace?
\emph{Annals of Work Exposures and Health} \textbf{66}(3), 412--415.

\bibitem{utltool} DOHSBase Online. \emph{EN\,689 compliance assessment: testing TWA measurements
against OELs with UTL statistics}. Computational methodology of R.\ Emonds, T.\ Scheffers and
P.\ van Balen for the Monte-Carlo treatment of measurement uncertainty.
\url{https://www.dohsbase.nl/en/knowledge-base/en-689-utl-compliance/}.

\bibitem{aiha} American Industrial Hygiene Association (2015). \emph{A Strategy for Assessing
and Managing Occupational Exposures}, 4th ed.

\end{thebibliography}
\end{document}